\documentclass[11pt, a4paper]{article}
\def\dj{d\kern-0.4em\char"16\kern-0.1em}

\usepackage{a4wide}

\usepackage{ragged2e}
\usepackage[utf8]{inputenc}
\usepackage[T1]{fontenc}
\usepackage[english]{babel}
\usepackage{subcaption}
\usepackage{amssymb, amsmath, latexsym, amsthm, enumerate, amsfonts}
\usepackage[labelsep=period]{caption}

\usepackage[font=small]{caption}
\usepackage{mwe}

\usepackage[pdftex, pdfstartview=FitH]{hyperref}
\usepackage{graphicx}

\numberwithin{equation}{section}

 \newtheorem{thm}{Theorem}[section]

 \newtheorem{cl}[thm]{Claim}
 \newtheorem{obs}[thm]{Observation}
 
 \theoremstyle{definition}

 \theoremstyle{remark}

\parindent = 0pt
\parskip = 1mm

\topmargin 0pt \headsep 0pt

\providecommand{\keywords}[1]
{
  \small	
  \textbf{Keywords:} #1
}

\begin{document}

\title{On strong avoiding games}


\author{Milo\v s Stojakovi\' c\thanks{Department of Mathematics and Informatics, Faculty of Sciences, University of Novi Sad, 21000 Novi Sad, Serbia, e-mail: \href{mailto:milosst@dmi.uns.ac.rs}{milosst@dmi.uns.ac.rs}. Partly supported by Ministry of Education, Science and Technological Development of the Republic of Serbia (Grant No.~451-03-68/2022-14/200125). Partly supported by Provincial Secretariat for Higher Education and Scientific Research, Province of Vojvodina (Grant No.~142-451-2686/2021).} \and
Jelena Stratijev\footnote{Department of Fundamental Sciences, Faculty of Technical Sciences, University of Novi Sad, 21000 Novi Sad, Serbia, e-mail: \href{mailto:jelenaknezevic@uns.ac.rs}{jelenaknezevic@uns.ac.rs}. Partly supported by Ministry of Education, Science and Technological Development of the Republic of Serbia (Grant No.~451-03-68/2022-14/200156).} \footnote{Corresponding author.}}

\date{}

\maketitle

\begin{abstract}
Given an increasing graph property $\cal F$, the strong Avoider-Avoider $\cal F$ game is played on the edge set of a complete graph. Two players, Red and Blue, take turns in claiming previously unclaimed edges with Red going first, and the player whose graph possesses $\cal F$ first loses the game. If the property $\cal F$ is ``containing a fixed graph $H$'', we refer to the game as the $H$ game.

We prove that Blue has a winning strategy in two strong Avoider-Avoider games, $P_4$ game and ${\cal CC}_{>3}$ game, where ${\cal CC}_{>3}$ is the property of having at least one connected component on more than three vertices.

We also study a variant, the strong CAvoider-CAvoider games, with additional requirement that the graph of each of the players must stay connected throughout the game. We prove that Blue has a winning strategy in the strong CAvoider-CAvoider games $S_3$ and $P_4$, as well as in the $Cycle$ game, where the players aim at avoiding all cycles.

\end{abstract}

\keywords{positional games, avoidance games, Sim, games on graphs}


\section{Introduction}
A positional game is a pair $(X,\mathcal{F})$, where $X$ is a finite set called a \emph{board}, and $\mathcal{F}$ is the family of \emph{target sets}. The game is played by two players who alternately claim previously unclaimed elements of $X$ until all the elements of the board are claimed. Our interest lies with games whose board is the edge set of the complete graph $K_n$. Numerous details about positional games, and particularly positional games played on graphs, can be found in books~\cite{TicTacToe} and~\cite{BookStojakovic}.

\medskip

When it comes to the rules for determining the game winner in a positional game, there are several variants. In the \emph{strong Maker-Maker game} $(X,\mathcal{F})$, two players called Red and Blue take turns in claiming previously unclaimed elements of $X$, with Red going first. The player who first fully occupies some $F \in \mathcal{F}$ is the winner. If neither of the players wins and all the elements of the board are claimed, the game is declared a \emph{draw}.

The most notable example of this class of positional games is the widely popular game of Tic-Tac-Toe. Generally speaking, determining the outcome of such games proves to be quite challenging, and there are hardly any general tools at disposal. One such tool is the strategy stealing argument which we can use to show that Red can guarantee at least a draw in any game. Ramsey property of the board with the target sets readily ensures that the draw is impossible. Almost at the end of the list are pairing strategies, which we can use to show that Blue can guarantee a draw.

Hence, it is not too surprising that so few results on strong Maker-Maker games on graphs can be found in the literature. Ferber and Hefetz~\cite{Hefetzstrongfast} proved that playing on the edge set of $K_n$, for sufficiently large $n$, Red can win the perfect matching and Hamilton cycle game, and in~\cite{Hefetzstrongkconnectivity} the same authors proved that, for sufficiently large $n$ and every positive integer $k$, the first player can win $k$-vertex-connectivity game. Both papers rely on fast winning strategies for weak games.

\medskip

The \emph{Strong Avoider-Avoider game} $(X,\mathcal{F})$ is again played by Red and Blue, but now the player who first fully occupies some $F \in \mathcal{F}$ \emph{loses the game}. The first such game, widely known as \emph{Sim}, is introduced in 1961 by Simmons~\cite{Simmons69}. The board of Sim is the edge set of $K_6$, and a player who first claims a triangle loses. Even though it is immediate that draw is impossible, and the board is reasonably small -- it has just fifteen edges, analyzing it is challenging, and the proof that Blue wins is performed with the help of a computer.
In~\cite{Slany99} Slany gave a methodical study of the hardness of determining the winner for several games similar to Sim. Further, Mead, Rosa and Huang in~\cite{SIM74} gave an explicit winning strategy for Blue in Sim, and recently in \cite{Sim20}~Wrzos-Kaminska gave a simple human-playable winning strategy. Other variants of strong Avoider-Avoider games were studied by Harary in~\cite{Harary}, who introduced several finite games on graphs on up to six vertices.

At first sight it may seem that in strong Avoider-Avoider games, in contrast to the strong Maker-Maker games, Blue always has an upper edge, and Red as the first player cannot expect to win under optimal play? This, however, turns out not to be true! For example, in $d$-dimensional Tic-Tac-Toe game $n^d$ (see~\cite{TicTacToe} for details), where $n$ is odd, Red has an explicit drawing strategy:
In his first move he chooses the central element, denote it by $C$. After that, whenever Blue chooses an element $P$ Red chooses $P'$ that is symmetrical with respect to $C$. If we suppose for a contradiction that Red loses, i.e.~that his graph has a red line $L$ (note that it is not possible that $C$ belongs to $L$), then $L'$, its mirror image over the cube's center, is a blue line and has been fully occupied before $L$, a contradiction.
Now, as game $3^3$ cannot end in a draw~\cite{TicTacToe}, we can conclude that it is a Red's win.
In~\cite{TransitiveAvoidance} Johnson, Leader and Walters proved that there are transitive games that are a Red's win, for all board sizes which are not a prime, or a power of 2.

\medskip

In contrast to the strong positional games where the two players compete for achieving the same objective, the \emph{weak games} are asymmetrical -- the first player is given a goal while the second one just tries to prevent the first player from achieving his goal. In \emph{Maker-Breaker positional game} $(X,\mathcal{F})$, two players are called Maker and Breaker. Maker wins the game if by the end of the game he claims all elements of some $F \in \mathcal{F}$, otherwise Breaker wins the game. There is a number of results on Maker-Breaker games on graphs obtained in recent decades, an overview together with further literature can be found in~\cite{BookStojakovic}.

\medskip

\emph{Avoider-Enforcer games} are the \emph{mis\` ere} version of Maker-Breaker games, with two players Avoider and Enforcer. Enforcer wins the game $(X,\mathcal{F})$ if, by the end of the game, Avoider claimed all elements of some $F \in \mathcal{F}$, otherwise Avoider wins. Except these rules, there is another variant, the so-called monotone rules introduced in \cite{StojakovicTheRulesOfTheGame}, where each of the players is allowed to claim more than one element of the board per move. Again, a number of results on Avoider-Enforcer games on graphs can be found in the literature, see~\cite{BookStojakovic} for an overview.

\paragraph{Our results.}
We will take a closer look at strong Avoider-Avoider games. Even though their definition is natural and many questions about them have been asked, very few of them have been answered. To offer some intuition behind this phenomenon, we should keep in mind that the players in strong games have the same goal and the only thing that makes a difference is who goes first, we call this the ``half-a-move advantage''. Informally speaking, depending on the structure of the board there are different ways things can play out, but that half-a-move eventually decides the game. So the player that can win should propagate his (in most cases, comparatively small) advantage from beginning to the end of the game,
knowing that one wrong move may take the edge away from him.
In contrast to this, in weak games we have more freedom when designing a winning strategy, as players have different goals. This further allows the introduction of bias, first time introduced in~\cite{ChE}, which gives us more room to spare. Hence, in most of the weak games studied in the literature we are not that close to the breaking point at which a player stops winning and starts losing.

In this paper we are interested in Strong Avoider-Avoider games played on the edges of the complete graph $K_n$. Not much is known about these games, while there are many open problems.
In~\cite{HararySlanyVerbitsky} was shown that Blue has a winning strategy in the $P_3$ game, where the forbidden graph is the path with just two edges.
Recently, Beker~\cite{BekerStarAA} generalized this result to all stars, proving that for each fixed $k$ the Strong Avoider-Avoider star $S_{k+1}$ game is a win for the second player for all $n$ sufficiently large. The proof is performed by actually building rather than avoiding -- showing that Blue can build a $S_{k+1}$-free graph of maximum size fast, without wasting any moves, thus automatically securing a win.
Finally, Malekshahian~\cite{Malekshahian} studied the possibility of Blue's win in the triangle game with assumption that the game starts on several special mid-game positions, without any definite implications on the outcome of the triangle game itself. Hence, the only non-trivial Strong Avoider-Avoider game played on $E(K_n)$ for which the outcome is previously known is the star game.

\medskip

We use the abbreviation ${\cal CC}_{>3}$ for the collection of inclusion-minimal connected graphs on more than three vertices and $P_4$ represents a path on four vertices. Our goal is to determine the outcome for the $P_4$ game and the ${\cal CC}_{>3}$ game.

\begin{thm}\label{TH001}
Blue has a winning strategy in the Strong Avoider-Avoider $P_4$ game, played on $K_n$, where $n \geq 8$.
\end{thm}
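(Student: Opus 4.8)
The plan is to reduce the theorem to a single clean statement and then build an explicit reactive strategy for Blue. First I would record the structural fact that a graph is $P_4$-free if and only if each of its connected components is a star (possibly a single edge or a single vertex) or a triangle; consequently a $P_4$-free graph on $m$ vertices has at most $m$ edges. From this a draw is impossible for $n\ge 8$: if both final graphs were $P_4$-free they would together use at most $2n<\binom n2$ edges and so could not cover the board. Hence exactly one player is eventually forced to complete a $P_4$, and it suffices to prove the following reformulation: \emph{Blue has a strategy that guarantees a $P_4$-free extending (``safe'') move on every one of his turns.} Such a strategy keeps Blue's graph $P_4$-free for the whole game, so Blue never loses, and since no draw is possible, Red loses.

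Next I would isolate the mechanism that manufactures safe moves. Call a vertex \emph{B-isolated} if it is incident to no Blue edge. I claim that whenever at least four B-isolated vertices exist, Blue has a safe move: among any four B-isolated vertices, Red's edges form a $P_4$-free graph and hence number at most $\mathrm{ex}(4,P_4)=3$ of the $\binom 42=6$ pairs, so at least three of those pairs are unclaimed; claiming any one of them adds a fresh isolated edge (a new $K_2$ component) to Blue's graph, keeping it $P_4$-free. The entire problem therefore becomes: Blue must retain at least four B-isolated vertices until Red is forced to move into a $P_4$.

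The strategy I would prescribe for Blue is vertex-conserving and reactive. Since the densest $P_4$-free graphs are disjoint unions of triangles (one edge per vertex), Blue should build triangles: open a triangle with a safe isolated edge, upgrade it to a path on three vertices, and then close it. Each completed triangle ``parks'' three vertices permanently, and a half-built path on three vertices supplies a closing move that consumes no new vertex at all, giving Blue a reserve move for free. In reacting to Red, Blue should never begin using new vertices while such a closing move is still available, and should prefer placements that keep his supply of B-isolated vertices ahead of Red's supply of vertices untouched by Red. The intended payoff is that Red, who moves first and is likewise confined to a $P_4$-free (hence at most $n$-edge) graph, must be the one who first exhausts his usable vertices.

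The main obstacle, and the delicate heart of the proof, is the endgame parity. Because a $P_4$-free graph can have as many as $n$ edges, Blue may need to survive on the order of $n$ of his own moves, while an optimally playing Red also consumes vertices at nearly one per move, so the two players' used-vertex counts meet precisely at the boundary where Blue's reservoir of four B-isolated vertices would run dry. The hard part is to show that under Blue's conserving strategy it is always Red who first reaches a position with no safe move; equivalently, that the total number of moves before someone is stuck has the parity that makes Red the stuck player. I expect this to demand a careful case analysis of the last few surviving components (the small stars, single edges, and triangles) together with the hypothesis $n\ge 8$, which should provide exactly the slack in the vertex count needed for the shadowing argument and rule out the small-board exceptions.
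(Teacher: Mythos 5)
Your reduction (Blue survives each of his turns, a draw is impossible, hence Red loses) and your lemma that four vertices untouched by Blue always yield a safe isolated-edge move are both correct, but they do not amount to a proof, and you essentially say so yourself. The lemma only manufactures safe moves that consume two fresh vertices per Blue edge, an edge-to-vertex ratio of $1/2$, whereas Red, packing disjoint triangles, achieves ratio $1$ and can reach up to $n$ edges; playing only such moves, Blue exhausts his reserve of untouched vertices roughly twice as fast as the game lasts. Your proposed repair --- have Blue build triangles as well --- is precisely where the entire difficulty of the theorem sits: the second and third edges of each intended Blue triangle are specific edges that Red can claim or poison, and you give no argument that Blue can always complete (or profitably abandon) a partial triangle, nor any accounting showing that Red is the player who first runs out of safe moves. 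Announcing that this ``should demand a careful case analysis'' together with ``the slack provided by $n\ge 8$'' defers the theorem rather than proving it; the parity race you correctly identify as the delicate heart of the proof is left entirely open.

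For comparison, the paper does not attempt a uniform constructive strategy at all. It splits on Red's second move: in three of the four cases it wins by a local strategy-stealing argument (Blue imagines one extra edge that is guaranteed to remain unclaimed for the rest of the game, which turns the position into a colour-swapped copy of one in which Red allegedly wins, a contradiction), and only in the remaining case of two disjoint red edges does it give an explicit strategy. That strategy is not triangle-packing but one large star centred at a fixed vertex $v$, supplemented by at most a couple of carefully placed triangles; the decisive accounting is Observation~\ref{l3} (a maximal $P_4$-free graph with $k$ star components has $n-k$ edges), so Blue wins by keeping the number of star components in his graph no larger than in Red's. The bookkeeping of ``dangerous'' versus ``safe'' vertices and of the components $C_1,C_2$ across the two stages is what stands in for the endgame analysis your outline is missing, and you would need to supply something of comparable precision before your plan becomes a proof.
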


In the following theorem we consider the game where a player loses the game as soon as he creates a connected component on more than three vertices.

\begin{thm}\label{TH002}
Blue has a winning strategy in the Strong Avoider-Avoider ${\cal CC}_{>3}$ game, played on $K_n$, where $n \geq 5$.
\end{thm}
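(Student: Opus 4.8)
The plan is to first reduce the theorem to a parity statement about the number of moves, and then to control that parity with an explicit strategy for Blue.

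\emph{Reduction.} A graph all of whose components have at most three vertices is exactly a disjoint union of copies of $K_1, K_2, P_3$ and $K_3$; in particular every vertex has degree at most $2$, and such a graph on $n$ vertices has at most $M(n)\le n$ edges (the maximum, $n$ or $n-1$, being attained by a union of triangles). Since for $n\ge 5$ we have $\binom n2 > 2M(n)$, the board can never be exhausted while both players' graphs remain admissible, so the game cannot end in a draw and must terminate with one player creating a component on four vertices. Consequently it suffices to give Blue a strategy guaranteeing that, on each of Blue's turns, at least one admissible (``safe'') move is available: following it, Blue never creates a large component, and by the no-draw observation Red is forced to do so. Equivalently, writing $T$ for the number of safe moves played before the first large component appears, Blue must force $T$ to be even, so that the first unavoidable move falls on Red.

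\emph{Key lemma and strategy.} I would first record the following elementary fact: if, on a player's turn, that player's graph has at least four isolated vertices, then that player has a safe move. Indeed the opponent's graph (admissible, unless the opponent has already lost and we are done) also has maximum degree at most $2$, hence cannot contain all $\binom{4}{2}=6$ edges spanned by four of those isolated vertices; some joining edge is therefore unclaimed, and adding it merely creates a new $K_2$. Blue's strategy is then designed to (i) maintain a reservoir of isolated vertices together with ``extendable'' components ($K_2$'s and $P_3$'s, which still admit one or two further safe edges), and (ii) match Red's component structure move for move, so as to preserve the parity of the total capacity. Concretely I would have Blue maintain the invariant that, after each Blue move, the remaining safe capacity of Blue's graph is at least that of Red's, pairing each of Red's safe options with a corresponding safe option for Blue.

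\emph{Parity and the endgame — the main obstacle.} The crux is the endgame, once the board is nearly saturated and fewer than four isolated vertices remain: here the reservoir lemma no longer applies, and whether a safe move exists depends on the exact types of the surviving components and on which ``completing'' edges (the two endpoints of a $P_3$, or an isolated vertex adjacent to a $K_2$) are still unclaimed. I expect the bulk of the work, and the only genuinely delicate part, to be a finite case analysis of these near-terminal positions, verifying that Blue retains a safe reply precisely as long as Red does, so that the final forced move is Red's. The tightness of the bound $n\ge 5$ (where $\binom52=10$ leaves little slack) means the naive ``play in two disjoint halves'' mirroring is unavailable, so this bookkeeping must be carried out directly, tracking the capacity invariant through the last few moves to conclude that $T$ is even and Blue wins.
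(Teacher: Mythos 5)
Your reduction (no draw by edge counting, hence it suffices for Blue to always retain a safe move) and your four-isolated-vertices lemma are both correct, but they only cover the easy part of the game. The entire mathematical content of the theorem sits exactly where you stop: the endgame. Your proposed invariant --- ``the remaining safe capacity of Blue's graph is at least that of Red's'' --- is never defined precisely enough to be checkable (capacity depends not just on the component types of one player's graph but on which completing edges the \emph{opponent} has already claimed), and the suggestion to ``match Red's component structure move for move'' fails as stated, because Red moves first and can occupy the very edge Blue would need to mirror with (e.g.\ the edge completing one of Blue's $P_3$'s to a triangle, or the edge joining two of Blue's intended triangle vertices). Note also that if both players could reach the same maximum number of safe edges, parity alone would already make Red lose; the difficulty is proving Blue can actually realize near-maximal capacity against interference, and that is precisely the ``finite case analysis'' you defer. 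So as written this is a plan, not a proof.

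For comparison, the paper avoids your endgame bookkeeping by two devices you do not have. First, for most of Red's possible second moves it uses strategy stealing: Blue imagines an extra blue edge (one that is guaranteed to stay unclaimed forever, since Red taking it would create a component on four vertices) and an isomorphic position with the roles swapped, so a hypothetical winning strategy for Red after his second move yields a contradiction. Second, in the one remaining case Blue's opening locks Red's first component into a $P_3$ whose third edge is blue, so that component can never become a triangle; by Observation~\ref{o1} this costs Red one edge off the extremum permanently. Blue then plays an explicit strategy building disjoint triangles (and ``nice edges'' completing Red's pure red $P_3$'s), and several counting claims (at least three pure red vertices available, at most two per red component, etc.)\ guarantee Blue can always continue; the one-edge deficit then decides the game. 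If you want to salvage your approach, you would need to either import such a deficit argument or genuinely carry out the terminal case analysis, including a precise definition of your capacity function and a proof that Blue can protect it for one more round than Red on every Blue turn.
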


Let $R(F)$ be diagonal Ramsey number, so every 2-coloring of edges of a complete graph on at least $R(F)$ vertices gives a monochromatic $F$. If $n \geq R(F)$ we know that the strong Avoider-Avoider F game on $E(K_n)$ cannot end in a draw. For both the $P_4$ game and the ${\cal CC}_{>3}$ game this readily implies that there is no draw for $n \geq 5$.

\paragraph{Strong CAvoider-CAvoider games.}
In the last few years several variants of positional games have emerged, like the \emph{PrimMaker-Breaker game} introduced in~\cite{Prim} where the subgraph induced by Maker’s edges must be connected throughout the game. In the \emph{Walker-Breaker games} introduced by Espig, Frieze, Krivelevich, and Pegden~\cite{KrivelevichWalkerBreaker}, Maker is constrained to choose edges of a walk or a path. Similarly, in the \emph{WalkerMaker–WalkerBreaker games}, see~\cite{MimaJovanaWMWB}, both players have the constraint to claim edges of a walk.

In the second part of this paper we study \emph{Strong CAvoider-CAvoider games} in which the graph of each player must stay connected throughout the game. The board is still the edge set of $K_n$, and the players should not claim a copy of the forbidden graph. This is a natural extension of the strong Avoider-Avoider games, with a connectedness constraint analogue to the ones mentioned above.

\medskip

Let $S_3$ be a star on three leaves. In the following we prove that Blue can win in three different strong CAvoider-CAvoider games.

\begin{thm}\label{TH004}
Blue has a winning strategy in the Strong CAvoider-CAvoider $S_3$ game, played on $K_n$, where $n \geq 7$.
\end{thm}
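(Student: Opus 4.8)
The plan is to first reduce the game to a transparent combinatorial process. Since each player's graph must stay connected and avoid $S_3 = K_{1,3}$, at every moment it has maximum degree at most $2$, so it is always a path or a cycle. Consequently a move is \emph{safe} precisely when it extends the current path at one of its two endpoints (to a vertex that is not already internal to that player's own path) or closes the path into a cycle; any other connected move raises some vertex degree to $3$ and loses immediately. I record two basic facts: a connected $S_3$-free graph on $K_n$ has at most $n$ edges, so each player can make at most $n$ safe moves (build a Hamiltonian path, then close it), and once a player's component is a cycle every further connected move is losing.

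Next I would set up the parity that drives the whole argument. Suppose, in the extremal scenario, both players manage the full quota of $n$ safe moves. Then Red, moving first, plays his safe moves on turns $1,3,\dots,2n-1$ and closes his cycle on turn $2n-1$, while Blue closes on turn $2n$; on turn $2n+1$ Red must move but his graph is already a Hamiltonian cycle, so every legal connected move creates $S_3$ and Red loses. More generally, if Red can make only $m_R$ safe moves in total and Blue at least $m_R$, then Blue survives through turn $2m_R$ and Red is the one forced to create $S_3$ on turn $2m_R+1$. Thus it suffices to exhibit a Blue strategy guaranteeing that Blue has a safe move on each of his turns for at least as long as Red does, i.e.\ that Blue is never the first to run out of safe moves.

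For the strategy itself I would let Blue play greedily: on each turn Blue extends his path to a still-unused vertex whenever such a safe extension exists, and closes his cycle only when no extension remains. The reason this keeps Blue alive is a degree count at the endpoints. At either endpoint $v$ of Blue's path, Blue himself has degree $1$ and Red, having maximum degree $2$, occupies at most $2$ of the edges at $v$; hence at least $n-4$ edges at $v$ are unclaimed, and for $n \ge 7$ this leaves Blue room to reach a fresh vertex as long as unused vertices remain and are not all blocked. Since Red cannot claim more than two edges at each of Blue's two endpoints, he can never seal off both endpoints simultaneously while unused vertices are still plentiful, so Blue cannot be trapped prematurely.

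The hard part will be the endgame, when only a handful of unused vertices survive and the crude degree count no longer forces an extension. Here I would argue more carefully that Red cannot arrange for Blue to be the first to exhaust safe moves: because Red's own edges must themselves form a single path or cycle, the edges Red spends to block Blue's last extensions are simultaneously edges Red removes from his own supply, so any attempt to strand Blue strands Red at least as soon. This, together with a short check of the last few configurations (using $n \ge 7$ to keep enough vertices in reserve and to preserve Blue's own connectivity), should establish $m_B \ge m_R$ and hence that Red is forced to complete $S_3$ first. A secondary technical point to verify is that Blue's connectivity constraint never compels him to abandon a greedy extension, which again follows from the existence of a free edge from an endpoint to an unused vertex throughout the relevant range of $n$.
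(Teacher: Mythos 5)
Your reduction to paths and cycles, the bound of at most $n$ safe moves per player, and the degree count showing that Blue can keep extending his path while several fresh vertices remain are all sound and match the spirit of the paper's explicit strategy. But there is a genuine gap exactly where you flag it: the endgame. Your greedy strategy aims at a Hamiltonian path on all $n$ vertices followed by a single closing edge $ij$, and that one edge is not protected. Claiming $ij$ can be a perfectly productive move for Red (it may legally extend Red's own path or close Red's own cycle), so your heuristic that the edges Red spends on blocking ``strand Red at least as soon'' fails: Red can take $ij$ as a useful move of his own, leaving Blue with a Hamiltonian path, no fresh vertex to extend to, and no way to close --- Blue then loses on his next move. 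Consequently the inequality $m_B \ge m_R$ is not established; the numbers of safe moves are not fixed in advance but are decided by precisely this interaction, which your argument does not control.

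The paper closes this gap with a device your proposal is missing: in the case where Red's second edge goes from a red vertex to a black vertex, the vertex $u$ acquires red degree $2$ immediately, so Red can never again claim an edge at $u$. Blue therefore builds his Hamiltonian path on $V\setminus\{u\}$ and finishes by closing the cycle through $u$ with the two edges $ui$ and $uj$, both of which are permanently unavailable to Red. Even with this protection the paper still needs an explicit case analysis of the last two uncovered vertices besides $u$ (its cases 2.1--2.5), and it handles the other opening --- Red's second edge touching Blue's first edge --- by a separate strategy-stealing argument rather than by the greedy construction. So your skeleton is right, but the protective idea that makes the cycle closable and the finite endgame check are both absent, and without them the strategy as stated can be beaten.
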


\begin{thm}\label{TH003}
Blue has a winning strategy in the Strong CAvoider-CAvoider $P_4$ game, played on $K_n$, where $n \geq 5$.
\end{thm}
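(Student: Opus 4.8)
The plan is to exploit the extreme rigidity of connected $P_4$-free graphs. First I would prove a structural lemma: a connected graph contains no copy of $P_4$ as a subgraph if and only if it is a star $K_{1,k}$ or a triangle $K_3$. Indeed, if such a graph contained a cycle, a shortest cycle could not have length at least $4$ (it would already contain $P_4$), so it would be a triangle $abc$; any additional vertex $d$, being connected to it, would complete a path on four vertices, forcing the graph to be exactly $K_3$. Otherwise the graph is a tree of diameter at most $2$, hence a star. Combined with the connectedness rule, this pins the dynamics down completely. Since every non-initial move of a player must be incident to his own current component, the two players can never interfere with one another except on edges they both could use. Moreover, from a star $K_{1,k}$ with $k\ge 3$ the only legal move not creating a $P_4$ is to add one further edge at the center; from a $P_3$ one may additionally close a triangle; and a triangle is \emph{terminal}, every legal extension of it producing a $P_4$.

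Consequently, as long as neither player commits suicide, each player grows a star, Red centered at some $r$ and Blue at some $s\neq r$, and the two pools of edges available to them, those incident to $r$ and those incident to $s$, are disjoint except for the single shared edge $rs$. I would then show that whoever owns $rs$ wins the resulting edge-count race: its owner may reach $n-1$ leaves, while the other is capped at $n-2$. Because each player draws from his own pool one edge per turn and Red moves first, the player holding the \emph{smaller} pool runs out first. Hence, if Blue secures $rs$, Red is the player who, on his $(n-1)$-st move, i.e.\ on turn $2n-3$, has no safe extension left and is forced to create a $P_4$, whereas Blue, still inside his larger pool, has always had a safe reply. This reduces the whole theorem to the opening fight for the edge joining the two centers, plus this clean parity count.

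The heart of the proof is therefore Blue's opening, engineered to capture $rs$. After Red's first edge $ab$, Blue plays an edge $s_1s_2$ with both endpoints outside $\{a,b\}$ (possible for $n\ge 5$), keeping \emph{two} candidate centers in reserve. Red's second move, which by connectedness must be incident to $ab$, fixes Red's center $r\in\{a,b\}$ and introduces exactly one new leaf; that leaf can coincide with at most one of $s_1,s_2$. Thus at least one candidate $s_j$ still has $rs_j$ unclaimed, and on his second move Blue claims $rs_j$, simultaneously committing to center $s_j$ and seizing the contested edge. From there Blue simply keeps extending his star, and by the parity above Red is the first to be forced into a $P_4$.

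It remains to dispose of the only way Red can leave the star track, namely closing his $P_3$ into a triangle on his third move; but then Red is immediately terminal and must create a $P_4$ by his fourth move, on turn $7\le 2n-3$, while Blue, having ample room, comfortably survives, so Blue wins here as well. The main obstacle, and the step demanding the most care, is precisely this opening battle for the central edge: since Red moves first he may try to preempt Blue's center, and the crucial points are that Blue's two reserve candidates always leave one contested edge free, and that players are confined to their own components and so cannot otherwise disturb each other. Verifying these facts, together with the exact parity bookkeeping, is what converts Red's first-move advantage into a forced loss.
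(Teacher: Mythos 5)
Your proposal is correct and takes essentially the same route as the paper: Blue's first edge disjoint from Red's reserves two candidate centers, his second move joins Red's now-fixed center to whichever candidate survives (the paper's edge $ut$), and the game reduces to the same star-growing race in which Red, denied the edge between the two centers, is capped at $n-2$ safe moves while Blue can reach $n-1$. Your explicit structural lemma, the handling of the suicidal triangle deviation, and the pool/parity count all correspond to the paper's observations that connected $P_4$-free components are stars or triangles and that Red is forced to create a $P_4$ on his $(n-1)$-st move.
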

\vspace{0.2cm}

In the Cycle game the player who first claims a cycle loses.

\begin{thm}\label{TH005}
Blue has a winning strategy in the Strong CAvoider-CAvoider Cycle game, played on $K_n$, where $n \geq 6$.
\end{thm}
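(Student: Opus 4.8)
The plan is to exploit the connectivity constraint to turn the Cycle game into a pure race for ``safe'' moves, and then to win that race for Blue by a half-move parity argument. First I would record the basic structure of any non-losing position. Since each player's graph must stay connected and closing a cycle loses at once, every player always owns a \emph{tree}. Hence every non-losing move is a \emph{safe} move joining the current tree to exactly one new vertex: an edge with both endpoints already in the tree makes a cycle, and an edge with neither endpoint in the tree violates connectivity. Thus a player's tree gains exactly one vertex per safe move (two on the first), a player can make at most $n-1$ safe moves, and once a player's tree is spanning his next move is forced to close a cycle. The loser is whoever is first forced off safe moves.

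Next I would prove the key structural lemma: the only way a player with a non-spanning tree can run out of safe moves is the \emph{star trap}. If such a player's tree spans a set $B$ with $|B|=b$ and he has no safe move, then every edge of the cut $(B,V\setminus B)$ is owned by the opponent, since none can be the player's own (his edges lie inside $B$). As the opponent owns at most $n-1$ edges, $b(n-b)\leq n-1$; but for $2\leq b\leq n-2$ we have $b(n-b)\geq 2(n-2)>n-1$. So a player can be stuck only at $b=n-1$, and then the opponent owns all $n-1$ edges at the single missing vertex $w$, i.e.\ the opponent's tree is exactly the spanning star centered at $w$. This reduces everything to denying Red a spanning star.

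The heart of the argument is Blue's strategy: always play safely, and never let Red complete a spanning star. By the lemma Blue can always extend while his tree has at most $n-2$ vertices, so the only delicate step is securing the last vertex, which is exactly the task of spoiling Red's star. Here I would use a two-threat idea. Blue's opening move claims an edge disjoint from Red's first edge, contributing two vertices, neither of which is an endpoint of Red's first edge. A spanning star is ``all or nothing'' — to own all $n-1$ edges at a center $z$ every Red edge must be incident to $z$ — so if Red commits to a center at all, that center is an endpoint of his first edge and is revealed the moment he plays a second edge at $z$. At that instant Red has spent only one move at $z$, while Blue already holds two vertices non-adjacent to $z$ in Red's graph; hence at most one of their edges to $z$ can be claimed, and Blue claims the other on his next move, reaching $z$ and permanently killing the star. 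Since Red's committed edges now sit at this failed center, no other vertex can be monopolized either, and every Blue move (the rescue included) reaches a new vertex, so Blue builds a full spanning tree in exactly $n-1$ moves.

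Finally I would cash this in by parity. Blue is never stuck, so he completes a spanning tree on his $(n-1)$-st move, at turn $2n-2$, and his first forced cycle-move would only come at turn $2n$. Red, to survive, must also extend every turn — the lemma shows Red cannot be stuck unless Blue builds a spanning star, which Blue does not — so Red completes his spanning tree one half-move earlier, on his $(n-1)$-st move at turn $2n-3$. At turn $2n-1$ it is Red's move, his tree is already spanning, and unclaimed edges remain because $\binom{n}{2}-(2n-2)=\tfrac{(n-1)(n-4)}{2}>0$ for $n\geq 6$; Red is therefore forced to close a cycle and loses before Blue could ever be forced. The main obstacle is the third paragraph: making the star-spoiling argument fully rigorous against an adversarial Red who may feint between candidate centers and interleave star-building with ordinary extensions, while checking that Blue's rescue moves never cost him the tempo needed to finish his own spanning tree.
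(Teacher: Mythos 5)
Your proposal is correct, and it takes a genuinely different route from the paper. The paper splits on Red's second move: when that move joins a red vertex to a blue one it uses strategy stealing (Blue imagines having already claimed $vt$ and plays Red's hypothetical winning strategy against him), and when it joins a red vertex to a black one it gives an explicit strategy in which Blue grows a star at $t$ and closes with a counting argument about the pure red vertex of maximum degree. Your proof is uniform and fully constructive: the ``star trap'' lemma (a player whose tree spans $b$ vertices, $2\le b\le n-1$, can be stuck only if the opponent owns the entire cut, forcing $b(n-b)\le n-1$, hence $b=n-1$ and the opponent holding a spanning star at the missing vertex) reduces everything to denying Red a spanning star, after which the half-move parity count finishes the game; this avoids strategy stealing altogether and is arguably cleaner. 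The one step you flag as delicate is in fact not: since Red's graph must stay connected, his second edge is necessarily incident to $u$ or $v$, so the unique candidate centre $z$ is already determined at Red's second move --- there is no feinting between centres or interleaving to worry about, because any Red spanning star must contain both $uv$ and that second edge and hence be centred at their common vertex $z$. At that instant Red's only edges at $z$ are $zu'$ (with $u'$ the other endpoint of his first edge, so $u'\notin\{r,t\}$) and the second edge itself, so at least one of $zr, zt$ is free; Blue's second move claims it, simultaneously extending his tree and permanently killing the only possible spanning star. One small imprecision to fix: it is not quite true that both $r$ and $t$ are non-adjacent to $z$ in Red's graph at that moment, since Red's second edge may itself be $zr$ or $zt$; but the conclusion that at most one of $zr,zt$ is claimed still holds, which is all you need.
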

\vspace{0.2cm}

Note that if $F \in \{S_3, K_3\}$, $R(F)$ is equal to $6$, so draw is not possible in any of the three games.

\vspace{0.2cm}

The rest of the paper is organized as follows. In Section~\ref{s:prelim} we give notation and preliminaries. In Section~\ref{s:p4} we prove Theorem \ref{TH001}. Then, in Section~\ref{s:cc3} we prove Theorem \ref{TH002}. Finally, in Section~\ref{s:caca} we prove Theorem~\ref{TH004}, Theorem~\ref{TH003} and Theorem~\ref{TH005}.

\section{Preliminaries} \label{s:prelim}

During a game, we say that the vertices that are touched by Red are \emph{red vertices}, the ones touched by Blue are \emph{blue vertices} and the others, that are not touched by any of the players, are \emph{black vertices}. If a vertex is touched just by Red and not by Blue, we call it a \emph{pure red vertex}, and if the situation is opposite it is a \emph{pure blue vertex}.
\vspace{0.1cm}

\setlength{\parindent}{0pt}
 By a player's graph we consider the graph with all edges he claimed on the vertex set $V=[n]$.

 A \emph{star} is the complete bipartite graph $K_{1,k}$, where $k\geq 0$.
 We will refer to the star centered in $v$ as a \emph{$v$-star}. When we say that a player \emph{star-adds} a vertex $x$ to a $v$-star, this means that he claims the edge $vx$.
 A $P_n$ is a path on $n$ vertices.
 \vspace{0.1cm}

 We will use the abbreviation $RC$ for non-trivial red components, i.e. connected components in Red's graph, where we do not count isolated vertices as $RC$.
We will say that a connected component is \emph{pure red} (respectively, \emph{pure blue}) if all its vertices are pure red (respectively, pure blue).

\vspace{0.2cm}

We will make use of the following facts about the $P_4$-free graphs.

\begin{obs}\label{O1}
For every graph that does not contain a $P_4$ as a subgraph, its connected components can be stars and triangles (where we count isolated edges and vertices as stars).
\end{obs}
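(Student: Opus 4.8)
The plan is to argue componentwise. Since the absence of a $P_4$ subgraph is inherited by every connected component, I would fix a connected component $C$ of the graph and show that $C$ is either a star or a triangle. I would first dispose of the small cases: if $|V(C)| \le 3$, then $C$ is an isolated vertex $K_{1,0}$, a single edge $K_{1,1}$, a path $P_3 = K_{1,2}$, or a triangle $K_3$, and each of these is already a star or a triangle, so there is nothing to prove.

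The substantial case is $|V(C)| \ge 4$, where I claim $C$ must be a star. I would choose a vertex $v$ of maximum degree in $C$ and prove that every edge of $C$ is incident to $v$; this immediately yields $C = K_{1,k}$, since connectedness then makes every other vertex a leaf at $v$. Suppose for contradiction that some edge $xy$ avoids $v$, and split into cases by how many of $x,y$ lie in $N(v)$. If both $x,y \in N(v)$, then $\{v,x,y\}$ spans a triangle; as $|V(C)| \ge 4$ and $C$ is connected, some further vertex $w \notin \{v,x,y\}$ attaches to this triangle, and in each way of attaching $w$ one reads off a path on four distinct vertices, i.e.\ a $P_4$. If exactly one of them, say $x$, lies in $N(v)$, then $\deg(x) \ge 2$, so by maximality $\deg(v) \ge 2$ and $v$ has a second neighbour $z \ne x$ with $z \ne y$ (as $z \in N(v)$ but $y \notin N(v)$); then $z\,v\,x\,y$ is a $P_4$. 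If neither lies in $N(v)$, I would use that every component of a $P_4$-free graph has diameter at most two (a shortest path of length three is itself a $P_4$), so $v$ and $x$ have a common neighbour $u$, giving the $P_4$ on $v\,u\,x\,y$. Each case contradicts either $P_4$-freeness or $|V(C)| \ge 4$, so no such edge $xy$ exists and $C$ is a star.

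The delicate point, which I expect to be the main obstacle, is the triangle case: the triangle is the single exception to being a star, so I must rule out its appearing inside any larger component. This is exactly where the hypothesis $|V(C)| \ge 4$ is used, and the care lies in checking, for every possible way the extra vertex $w$ can attach to the triangle $\{v,x,y\}$, that the four vertices of the resulting path are genuinely distinct so that it is an honest $P_4$ subgraph.
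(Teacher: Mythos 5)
Your proof is correct and complete. The paper states this as an Observation and gives no proof at all, so there is no argument of the authors' to compare against; your componentwise analysis (small components are trivially stars or triangles; in a component on at least four vertices a maximum-degree vertex must meet every edge, since each of the three ways an edge could avoid it produces a $P_4$) is a valid, self-contained justification of exactly the fact the paper takes for granted.
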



\begin{obs}
A $P_4$-free graph on $n$ vertices with the maximum number of edges is a disjoint union of triangles, when $n=3k$, for some integer $k$, and otherwise a disjoint union of one star and a number (possibly zero) of triangles. The number of edges in that graph is $n$, if $n=3k$, and $n-1$ otherwise.
\end{obs}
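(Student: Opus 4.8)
The plan is to reduce the maximization to counting how many components are \emph{not} triangles, using the structural description already provided by Observation~\ref{O1}. First I would invoke Observation~\ref{O1} to write any $P_4$-free graph $G$ on $n$ vertices as a vertex-disjoint union of stars and triangles; say it has $a$ triangle components and $b$ star components (recall that isolated vertices and isolated edges count as stars). The single computation driving the whole argument is that the number of edges of $G$ equals $n-b$. Indeed, a triangle contributes $3$ edges on $3$ vertices, so edges equal vertices on each triangle, while a star on $m$ vertices contributes exactly $m-1$ edges; summing over all components gives (total vertices) $-$ (number of stars) $= n-b$.

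Once this identity is in hand, maximizing the number of edges is the same as \emph{minimizing} $b$, the number of star components. I would then split into cases according to $n \bmod 3$. If $n=3k$, the value $b=0$ is attainable by partitioning the vertex set into $k$ triangles, and since $b\ge 0$ always, this is optimal; the maximum is $n-0=n$. If $3\nmid n$, then $b=0$ is impossible, because $b=0$ forces $G$ to be a disjoint union of triangles, which would require $3\mid n$; hence $b\ge 1$. To see that $b=1$ is achievable — and therefore optimal — I would exhibit the extremal configuration directly: take $\lfloor n/3\rfloor$ triangles together with a single remaining star (an isolated vertex when $n\equiv 1 \pmod 3$, and one edge when $n\equiv 2 \pmod 3$). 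This yields exactly one star component, so the maximum number of edges is $n-1$.

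I do not expect a serious obstacle here, as the result follows quickly from Observation~\ref{O1}; the only points requiring a little care are the clean edge-count identity $e(G)=n-b$ (which must correctly treat isolated vertices and edges as degenerate stars contributing $m-1$ edges) and the verification that the claimed extremal graphs in each residue class are genuinely $P_4$-free and realize the bound. The mild ``hard part'' is recognizing that the entire optimization collapses to minimizing the star count $b$, after which both the upper bound and the matching constructions are immediate.
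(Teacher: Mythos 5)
Your proof is correct and matches the paper's intended route: the paper states this observation without proof, but your key identity $e(G)=n-b$ (edges equal $n$ minus the number of star components) is precisely the paper's Observation~\ref{l3}, and combining it with the decomposition of Observation~\ref{O1} and minimizing the star count over the residue classes of $n$ modulo $3$ is exactly the argument the authors leave implicit.
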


\begin{obs}\label{l3}
If in a maximal $P_4$-free graph there are $k$ stars, then it has $n-k$ edges.
\end{obs}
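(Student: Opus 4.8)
The plan is to reduce everything to a componentwise edge count, leaning entirely on the structural description from Observation~\ref{O1}. Since the graph is $P_4$-free, that observation tells us each of its connected components is either a star or a triangle. So I would partition the $n$ vertices according to which component they lie in and count edges one component at a time, comparing the edge count of each component to its vertex count.

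The key numeric input is that a star behaves differently from a triangle by exactly one edge. A star on $m$ vertices (here $m\geq 1$, with $m=1$ the isolated vertex and $m=2$ the isolated edge, both counted as stars per Observation~\ref{O1}) has precisely $m-1$ edges, so it carries a deficit of one edge relative to its number of vertices. A triangle has $3$ vertices and $3$ edges, so its edge count equals its vertex count and it carries no deficit. I would record this as: for a star component the quantity (vertices $-$ edges) equals $1$, while for a triangle it equals $0$.

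Summing over all components then finishes the argument. Let $k$ be the number of star components and $t$ the number of triangle components, and write $v_1,\dots,v_k$ for the vertex counts of the stars. Since the components partition $V$, we have $\sum_{i=1}^{k} v_i + 3t = n$, and the total number of edges is
\[
\sum_{i=1}^{k}(v_i-1) + 3t = \Big(\sum_{i=1}^{k} v_i + 3t\Big) - k = n - k .
\]
Thus the graph has exactly $n-k$ edges, as claimed. Note that the argument never uses maximality of the graph, only its $P_4$-freeness via Observation~\ref{O1}; maximality is the setting inherited from the surrounding observations but is not needed for this count.

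I do not expect any genuine obstacle here: once the classification of components is in hand, the statement is a one-line tally. The only point requiring care is the bookkeeping for the degenerate stars -- isolated vertices and isolated edges -- where one should confirm that the relation (vertices $-$ edges)$=1$ still holds, which it does, since $1-0=1$ and $2-1=1$.
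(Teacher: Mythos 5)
Your proof is correct and is essentially the intended argument: the paper states this as an Observation without proof, and the componentwise count via Observation~\ref{O1} (each star component carries an edge deficit of one, each triangle a deficit of zero) is the natural justification. Your remark that only $P_4$-freeness, not maximality, is used is also accurate and consistent with how the observation is applied later in the paper to the players' (not necessarily maximal) graphs.
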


\vspace{0.2cm}

We also need the following facts about graphs that do not have connected components on more than three vertices.

\begin{obs}\label{o_1}
For every graph that does not contain a ${\cal CC}_{>3}$ as a subgraph, its connected component can be a triangle, a path on three vertices, an isolated edge or an isolated vertex.
\end{obs}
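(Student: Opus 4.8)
The plan is to first pin down the family ${\cal CC}_{>3}$ explicitly. A connected graph on $m\ge 4$ vertices has at least $m-1$ edges, so the inclusion-minimal such graphs are trees; moreover, among trees only those on exactly four vertices can be minimal, since removing a leaf from a tree on five or more vertices yields a proper subgraph that is still connected and still on more than three vertices. As the only two trees on four vertices are the path $P_4$ and the star $K_{1,3}$, we obtain ${\cal CC}_{>3}=\{P_4,K_{1,3}\}$. Thus a graph $G$ contains no member of ${\cal CC}_{>3}$ as a subgraph if and only if $G$ is simultaneously $P_4$-free and $K_{1,3}$-free.

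Next I would show that under this hypothesis no connected component of $G$ can have four or more vertices. Suppose a component $C$ has at least four vertices and take a spanning tree $T$ of $C$; then $T$ is a tree on at least four vertices. If $T$ has a vertex of degree at least three, that vertex together with three of its neighbours yields a copy of $K_{1,3}\subseteq G$; otherwise every vertex of $T$ has degree at most two, so $T$ is a path on at least four vertices and therefore contains $P_4\subseteq G$. In either case we contradict the assumption that $G$ avoids both graphs, so every connected component of $G$ has at most three vertices.

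It then remains to enumerate the connected graphs on at most three vertices: a single vertex (the isolated vertex $K_1$), a single edge (the isolated edge $K_2$), and on three vertices either the path $P_3$ or the triangle $K_3$. These are exactly the four admissible component types listed in the statement, which completes the argument.

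The proof is short, and the only point demanding a little care is the first step: verifying that $P_4$ and $K_{1,3}$ are precisely the inclusion-minimal connected graphs on more than three vertices, and, equivalently, that every connected graph on four or more vertices contains one of them. Once this reduction to the two trees on four vertices is secured, the remainder is a routine case check, so I do not expect a genuine obstacle here.
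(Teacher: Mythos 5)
Your argument is correct and complete: the identification of ${\cal CC}_{>3}$ with $\{P_4, K_{1,3}\}$, the spanning-tree/degree dichotomy showing every component must have at most three vertices, and the enumeration of connected graphs on at most three vertices are all sound. The paper states this as an observation without any proof, treating it as immediate, so your write-up simply supplies the routine verification the authors omitted; there is nothing to compare against and no gap to report.
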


\begin{obs}\label{o1}
A ${\cal CC}_{>3}$-free graph with the maximum number of edges is a disjoint union of triangles, when $n=3k$, a disjoint union of triangles and one isolated vertex, when $n=3k+1$, or a disjoint union of triangles and one isolated edge, when $n=3k+2$, for some integer $k$.

The number of edges in that graph is $n$, if $n=3k$, and $n-1$ otherwise.
\end{obs}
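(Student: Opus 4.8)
The plan is to combine the structural description from Observation~\ref{o_1} with a simple counting argument based on the edge deficiency of each connected component. By Observation~\ref{o_1}, every component of a ${\cal CC}_{>3}$-free graph is one of four types: a triangle, a $P_3$, an isolated edge, or an isolated vertex. For a component on $c$ vertices with $e$ edges, define its deficiency to be $c-e$. A direct check gives deficiency $0$ for a triangle and deficiency $1$ for each of the three remaining types. Summing over all components, the total number of edges equals $n$ minus the total deficiency, which in turn equals $n$ minus the number of non-triangle components.

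First I would use this identity to reduce the problem to minimizing the number of non-triangle components. If $n=3k$, a disjoint union of $k$ triangles has no non-triangle component, hence achieves the largest conceivable value $n$ of edges; this settles the first case. If $3 \nmid n$, then the graph cannot consist solely of triangles, since their vertices would force $3 \mid n$, so there is at least one non-triangle component and the number of edges is at most $n-1$.

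Next I would show that the bound $n-1$ is attained and identify the extremal graph via a residue argument. Since the number of edges equals $n$ minus the number of non-triangle components, attaining $n-1$ forces exactly one non-triangle component, all others being triangles. Deleting that single component leaves a vertex set spanned entirely by triangles, so its size must be divisible by $3$. When $n=3k+1$ this happens only if the non-triangle component is the isolated vertex, because an isolated edge or a $P_3$ would leave $3k-1$ respectively $3k-2$ vertices, neither divisible by $3$; when $n=3k+2$ it must be the isolated edge for the analogous reason. This simultaneously exhibits a graph with $n-1$ edges and pins down its structure, completing the proof.

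I expect no serious obstacle here: the argument is an elementary extremal count, and the only mildly delicate point is the residue bookkeeping that rules out the wrong non-triangle component in each of the two cases $n\equiv 1$ and $n\equiv 2 \pmod 3$.
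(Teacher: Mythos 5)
Your proof is correct. The paper states this as an Observation without any accompanying proof, so there is no argument to compare against; your deficiency count (edges $= n$ minus the number of non-triangle components, via Observation~\ref{o_1}) combined with the residue analysis modulo $3$ is precisely the elementary verification the authors implicitly rely on, and it correctly establishes both the maximum edge count and the unique extremal structure in each residue class.
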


\section{Strong Avoider-Avoider $P_4$ game} \label{s:p4}

\vspace{0.2cm}

{\bf Proof of Theorem \ref{TH001}:}
We will describe a winning strategy for Blue.
Note that by definition of a $RC$ and by Observation \ref{O1}, Red is not allowed to claim any edge between two $RC$ at any point of the game, as otherwise he would create a $P_4$ in his graph.

\medskip

In the beginning, we have a graph $G$ with $n$ isolated vertices, and Red claims an edge, let us denote it by $rt$. Then Blue claims an edge that is not adjacent to the red one, we denote it by $uv$.
In the following move Red has four options, up to isomorphism, for choosing an edge, and those four moves will make our four cases. For each of these cases we will show that Blue can win.
Let us denote the second move of Red by $e=xy$.

In the first three cases we use the idea of strategy stealing: we will suppose that at this point of the game (after Red played two moves and Blue played one) Red has a strategy to finish the game and win. Then we will show how Blue can use this strategy to win the game. That will lead to a contradiction, implying that our assumption was wrong and Blue can win the game.

\medskip
{\bf Case 1.} Vertex $x$ is red and $y$ is black.
\smallskip

Suppose that Red has a strategy $S$ to win the game. W.l.o.g let $x=t$. After Red plays $ty$ it is Blue's turn. The graph of the game consists of two adjacent red edges and one isolated blue edge. We denote the vertices as depicted in Figure \ref{f1sfig1}. Before his next move, Blue imagines that he has already claimed the edge $yu$ and that Red has not claimed the edge $ty$, see Figure \ref{f1sfig2}. Note that the edge $yu$ will remain free throughout the game, as otherwise Red would create a $P_4$ in his graph.

 \begin{figure}[ht]
    \begin{subfigure}{.5\textwidth}
    \centering
     \includegraphics[width=0.6\linewidth]{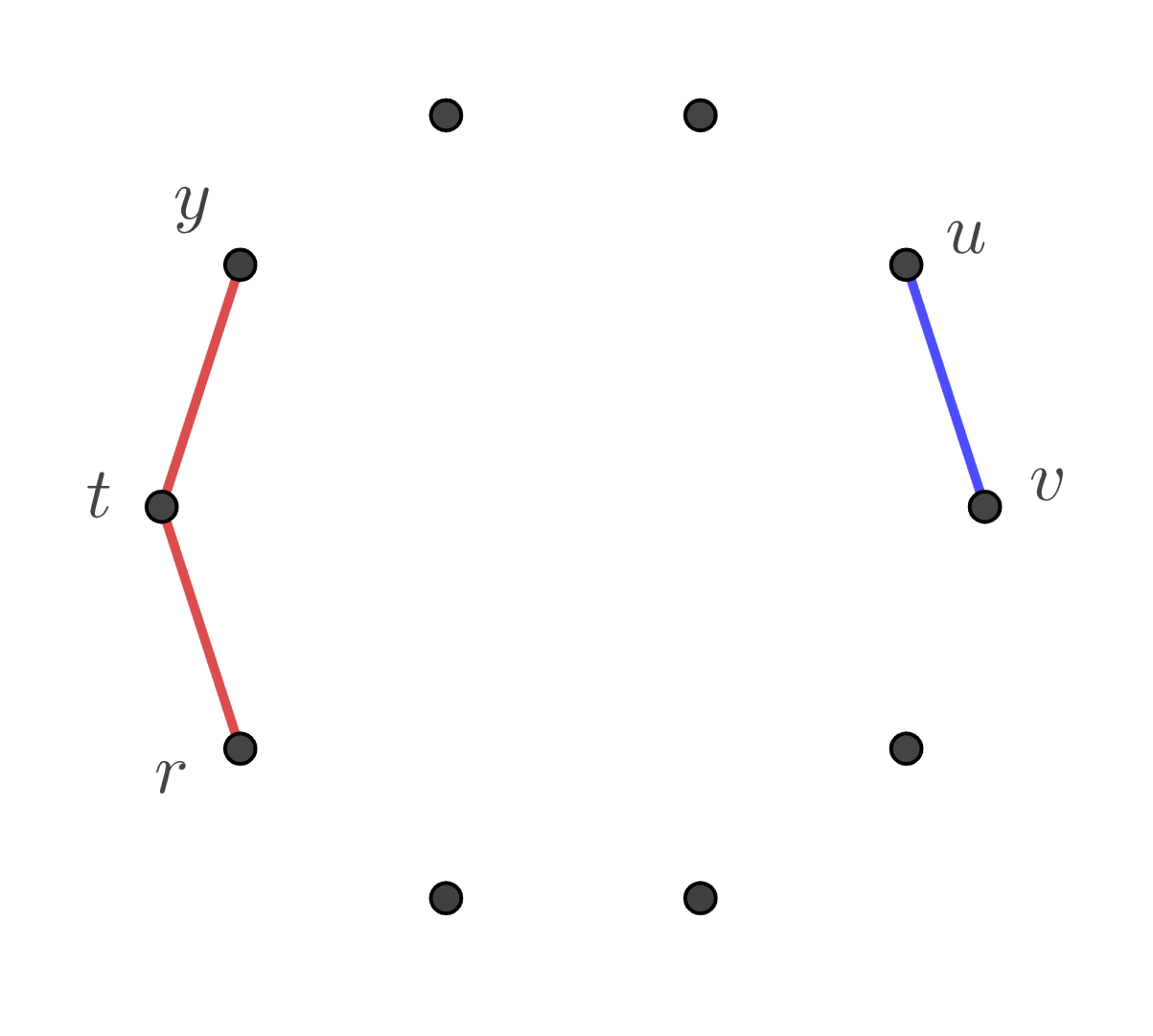}
     \caption{}
     \label{f1sfig1}
    \end{subfigure}%
    \begin{subfigure}{.5\textwidth}
     \centering
     \includegraphics[width=0.6\linewidth]{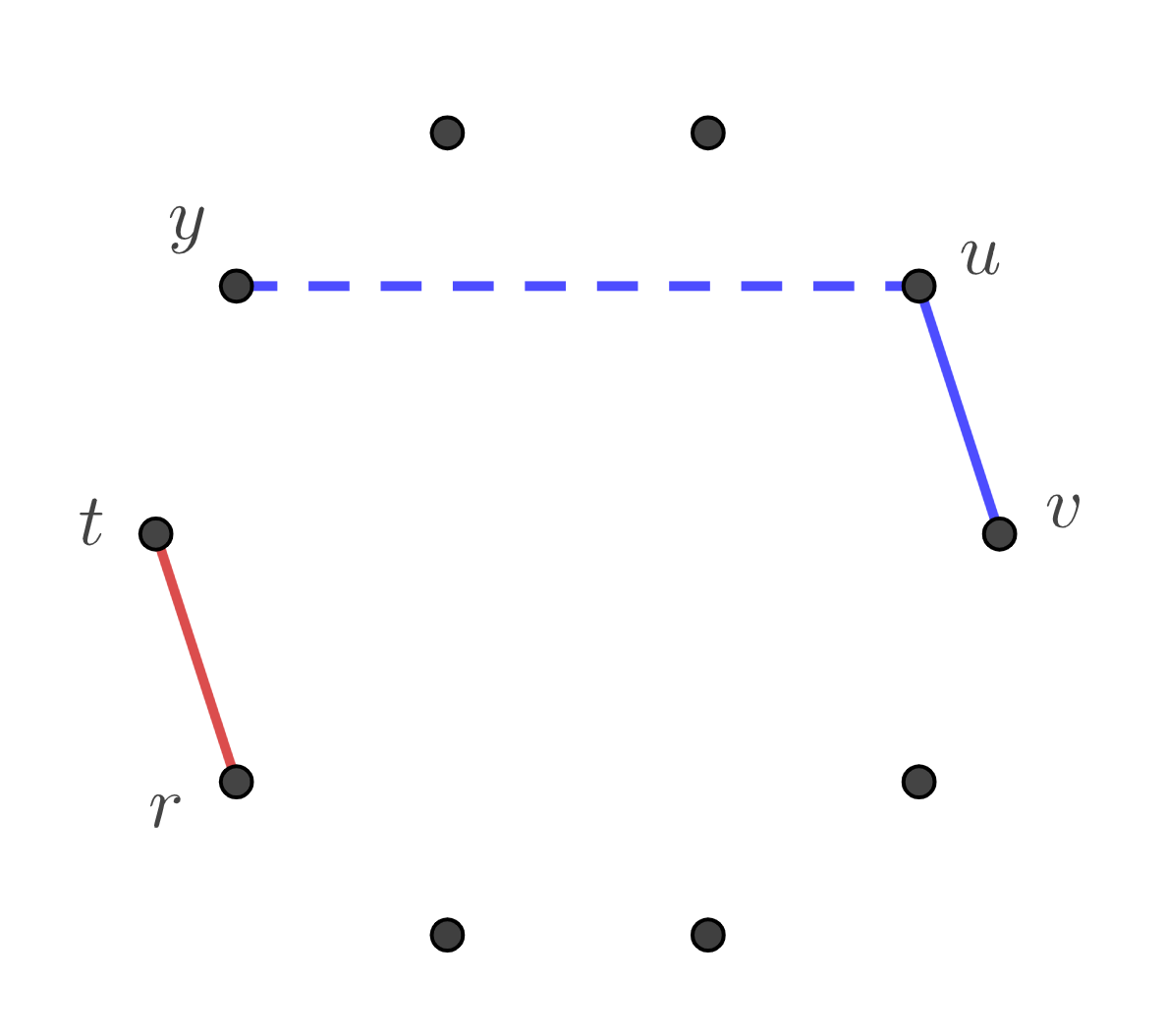}
    \caption{}
    \label{f1sfig2}
    \end{subfigure}
    \captionsetup{justification=centering}
    \caption{Case 1: (a) the graph before the second move of Blue. (b) The imagined graph before the second move of Red.}
    \label{f1}
    \end{figure}

 The imagined graph is isomorphic to the graph, where the roles of the players are swapped. Blue imagines that he is the first player, he further imagines that Red claims the edge $ty$ as his second move, and from now on responds as advised by the winning strategy $S$. Because this is a winning strategy, Blue wins the game, a contradiction.

\medskip
\noindent{\bf Case 2.} Vertex $x$ is red and $y$ is blue.
\smallskip

 Similar to Case 1, we suppose that Red has a strategy $S$ to win the game. W.l.o.g let $x=t$ and $y=u$. After Red plays $tu$ it is Blue's turn. The graph of the game consists of one $P_4$ with two adjacent red edges and one blue edge, see Figure \ref{f2sfig1}. Before his next move, Blue imagines that he has already claimed the edge $rv$ and that Red has not claimed the edge $tu$, see Figure \ref{f2sfig2}. Note that the edge $rv$ will remain free throughout the game, as otherwise Red would create a $P_4$ in his graph.

  \begin{figure}[ht]
    \begin{subfigure}{.5\textwidth}
    \centering
     \includegraphics[width=0.6\linewidth]{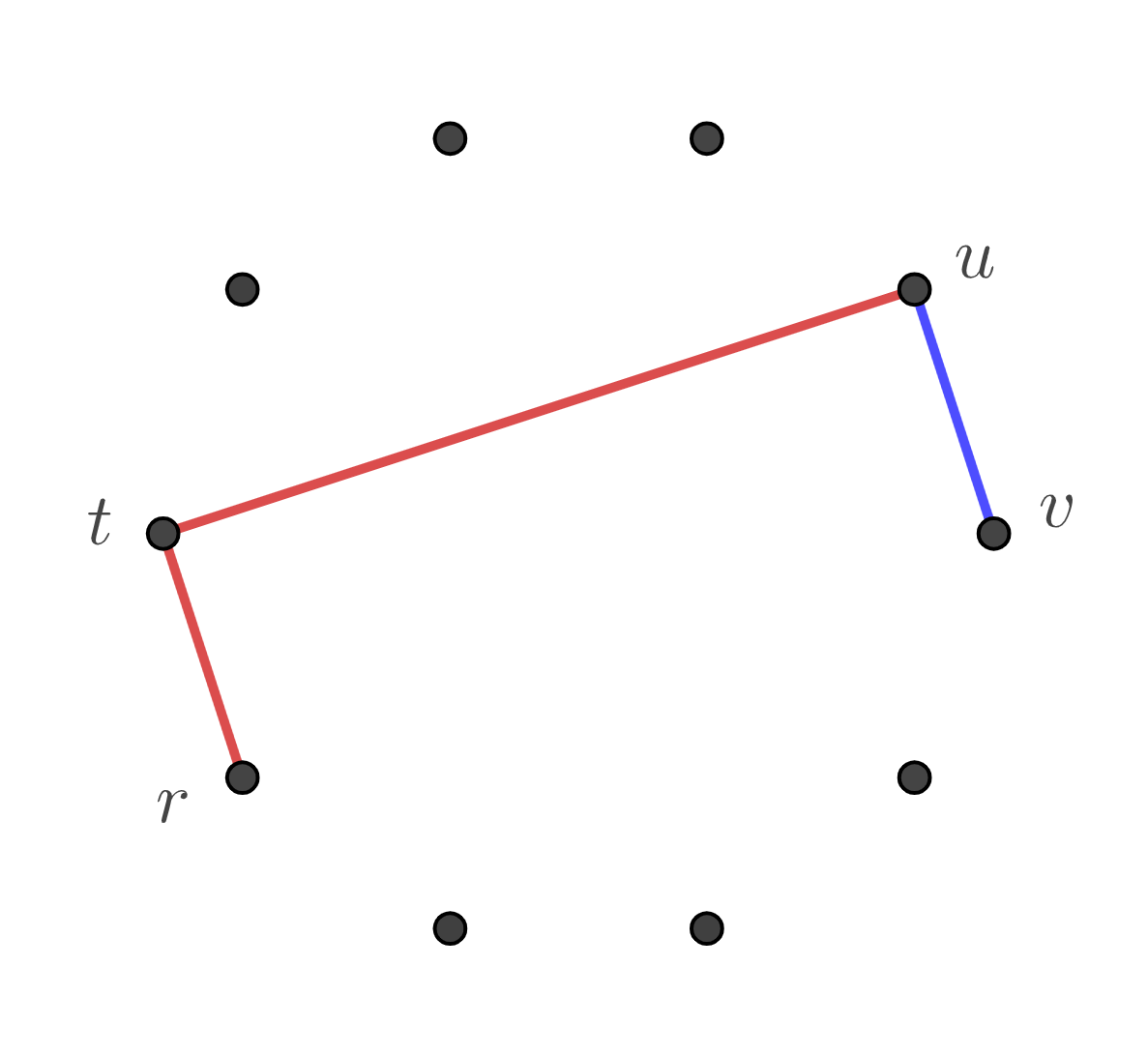}
     \caption{}
     \label{f2sfig1}
    \end{subfigure}%
    \begin{subfigure}{.5\textwidth}
     \centering
     \includegraphics[width=0.6\linewidth]{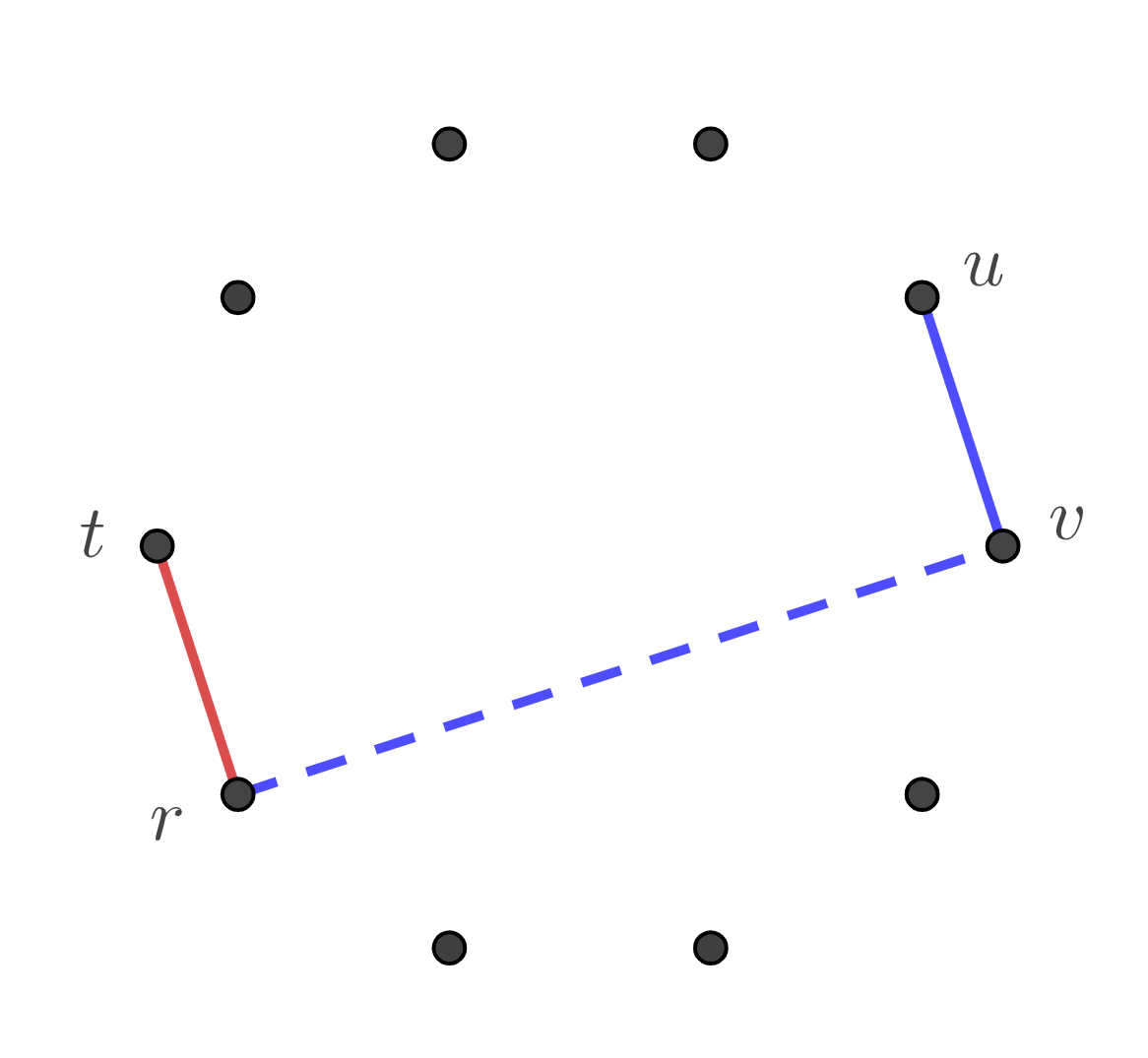}
    \caption{}
    \label{f2sfig2}
    \end{subfigure}
    \captionsetup{justification=centering}
    \caption{Case 2: (a) the graph before the second move of Blue. (b) The imagined graph before the second move of Red.}
    \label{f2}
    \end{figure}

 The imagined graph is isomorphic to the graph, where the roles of the players swapped. Blue imagines that he is the first player and that Red claims the edge $tu$ as his second move, and from now on Blue responds as advised by $S$ winning the game, a contradiction.

\medskip
\noindent{\bf Case 3.} Vertex $x$ is blue and $y$ is black.
\smallskip

 We again suppose that Red has a strategy $S$ to win the game. W.l.o.g let $x=u$.
 After Red plays $uy$ the graph of the game consists of one isolated red edge and one $P_3$ with two edges of different colours, see Figure \ref{f3sfig1}. Before his next move, Blue imagines that he has already claimed the edge $ty$ and that Red has not claimed $uy$, see Figure \ref{f3sfig2}. Note that the edge $ty$ will remain free throughout the game.

  \begin{figure}[ht]
    \begin{subfigure}{.5\textwidth}
    \centering
     \includegraphics[width=0.6\linewidth]{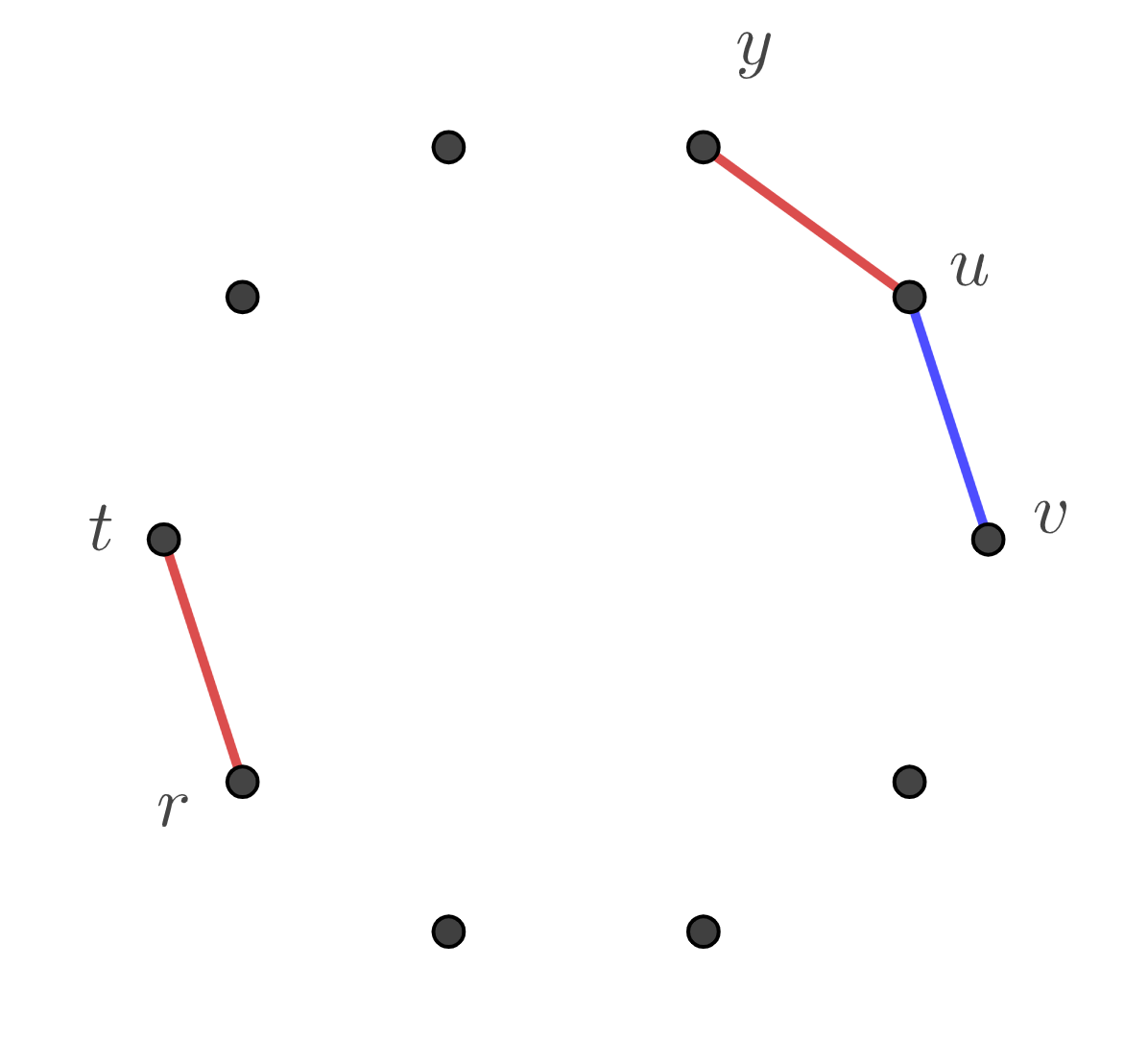}
     \caption{}
     \label{f3sfig1}
    \end{subfigure}%
    \begin{subfigure}{.5\textwidth}
     \centering
     \includegraphics[width=0.6\linewidth]{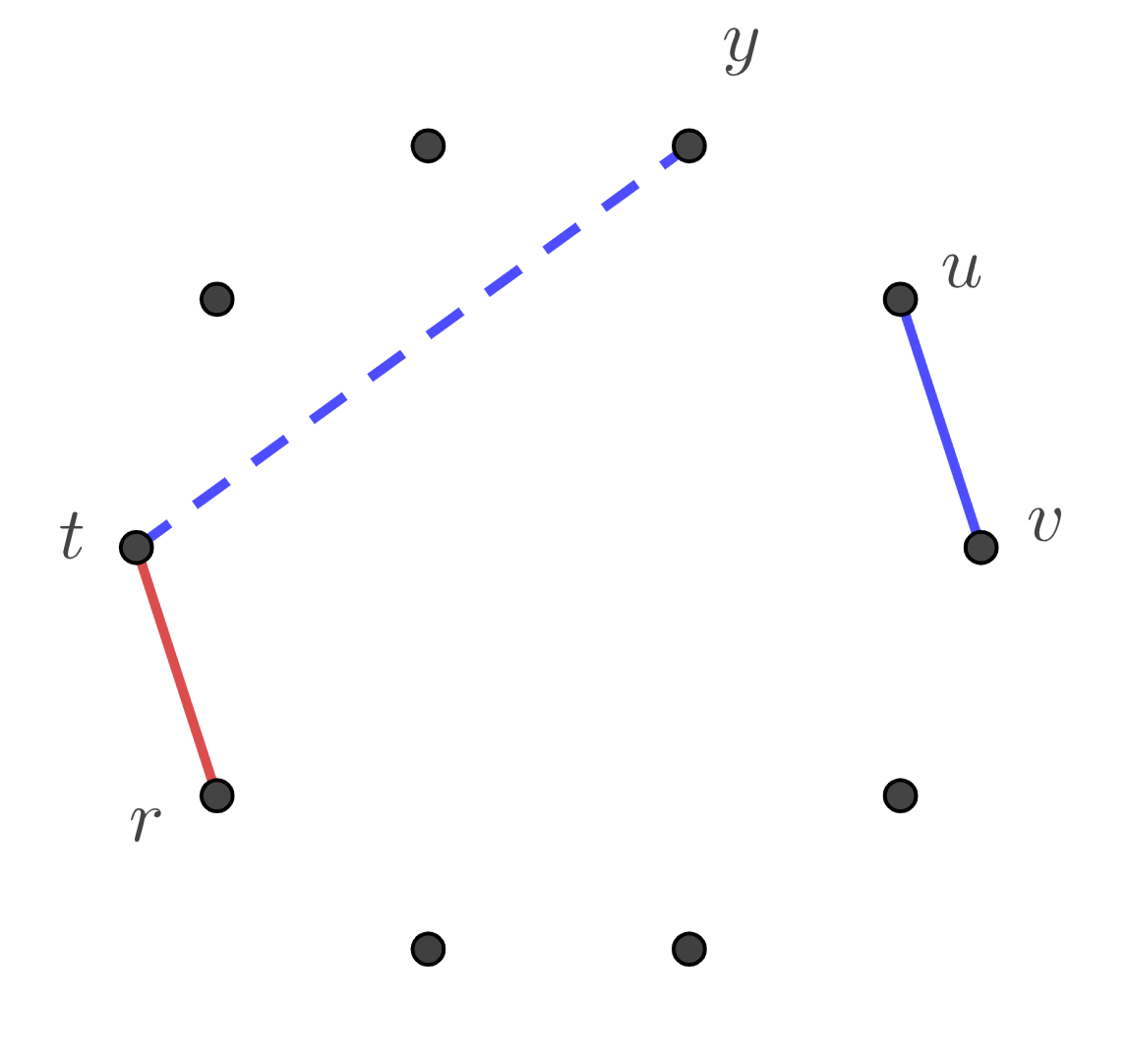}
    \caption{}
    \label{f3sfig2}
    \end{subfigure}
    \captionsetup{justification=centering}
    \caption{Case 3: (a) the graph before the second move of Blue. (b) The imagined graph before the second move of Red.}
    \label{f3}
    \end{figure}

 The imagined graph is isomorphic to the graph, where the roles of the players are swapped. Blue imagines that he is the first player and that Red claims the edge $uy$ as his second move. From now on, Blue responds as advised by $S$ thus winning the game, a contradiction.

\medskip
  {\bf Case 4.} Both vertices $x$ and $y$ are black.
\smallskip

 The Red's graph at this moment has two isolated edges that make the first two $RC$. Let us denote by $C_1$ the component $\{r,t\}$, and by $C_2$ the component $\{x,y\}$. For the reminder of the game, we will dynamically update $C_1$ and $C_2$ as they grow. Note that $C_1$ will remain a different $RC$ from $C_2$.

 Blue is the second player, so his graph cannot have more edges than the Red's graph. Having that in mind, as well as Observation \ref{l3}, we will describe a strategy for Blue to keep the number of stars in his graph less then or equal to the same number in the Red's graph throughout the game.

 After Red claims the edge $xy$, Blue responds by claiming the edge $vr$, as depicted in Figure \ref{f4sfig1}. Note that at this moment the Blue's graph consists of one $v$-star. In the rest of the game Blue will enlarge this $v$-star, and possibly create isolated triangles.

   \begin{figure}[ht]
    \begin{subfigure}{.5\textwidth}
    \centering
     \includegraphics[width=0.65\linewidth]{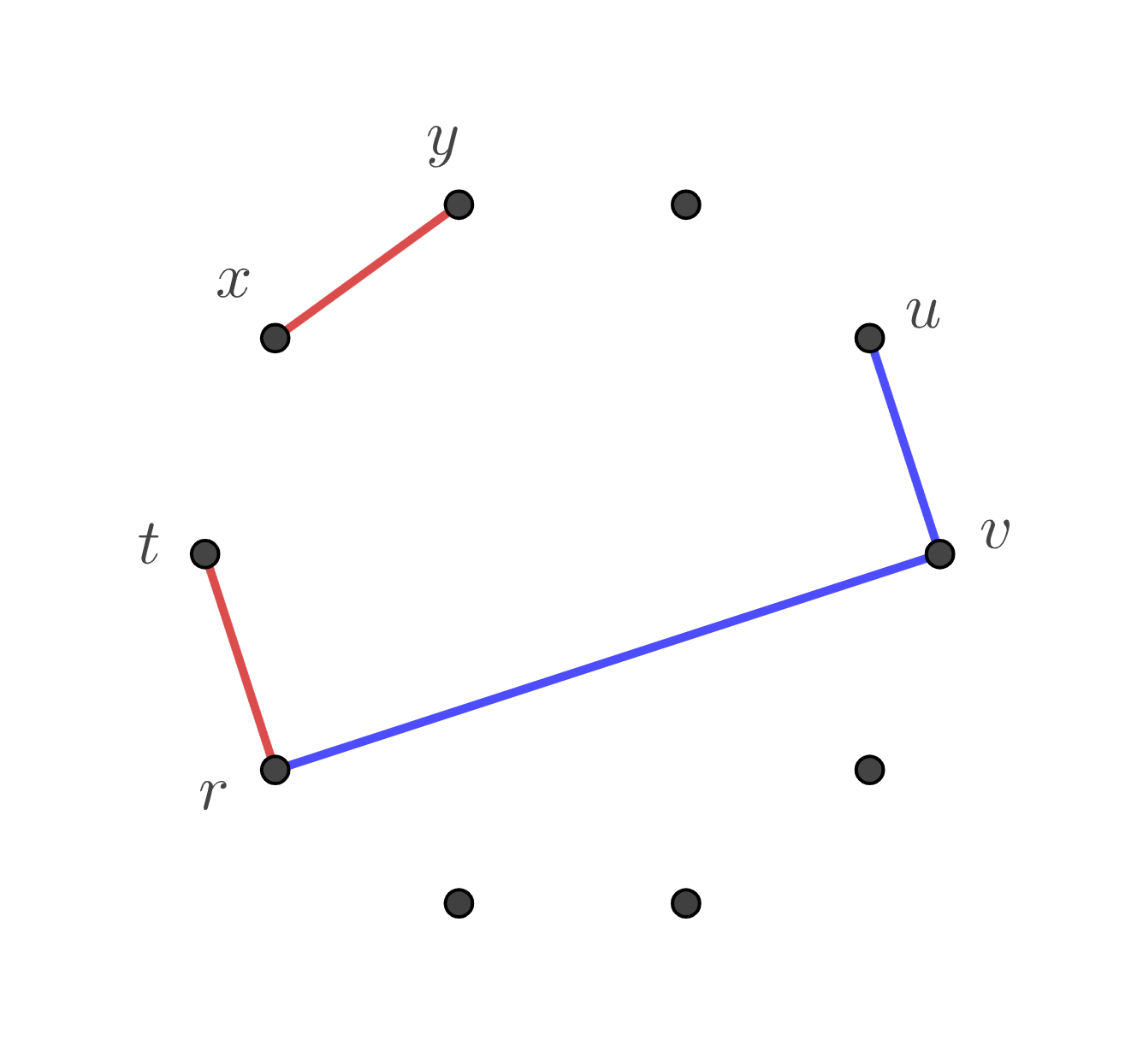}
     \caption{}
     \label{f4sfig1}
    \end{subfigure}%
    \begin{subfigure}{.5\textwidth}
     \centering
     \includegraphics[width=0.55\linewidth]{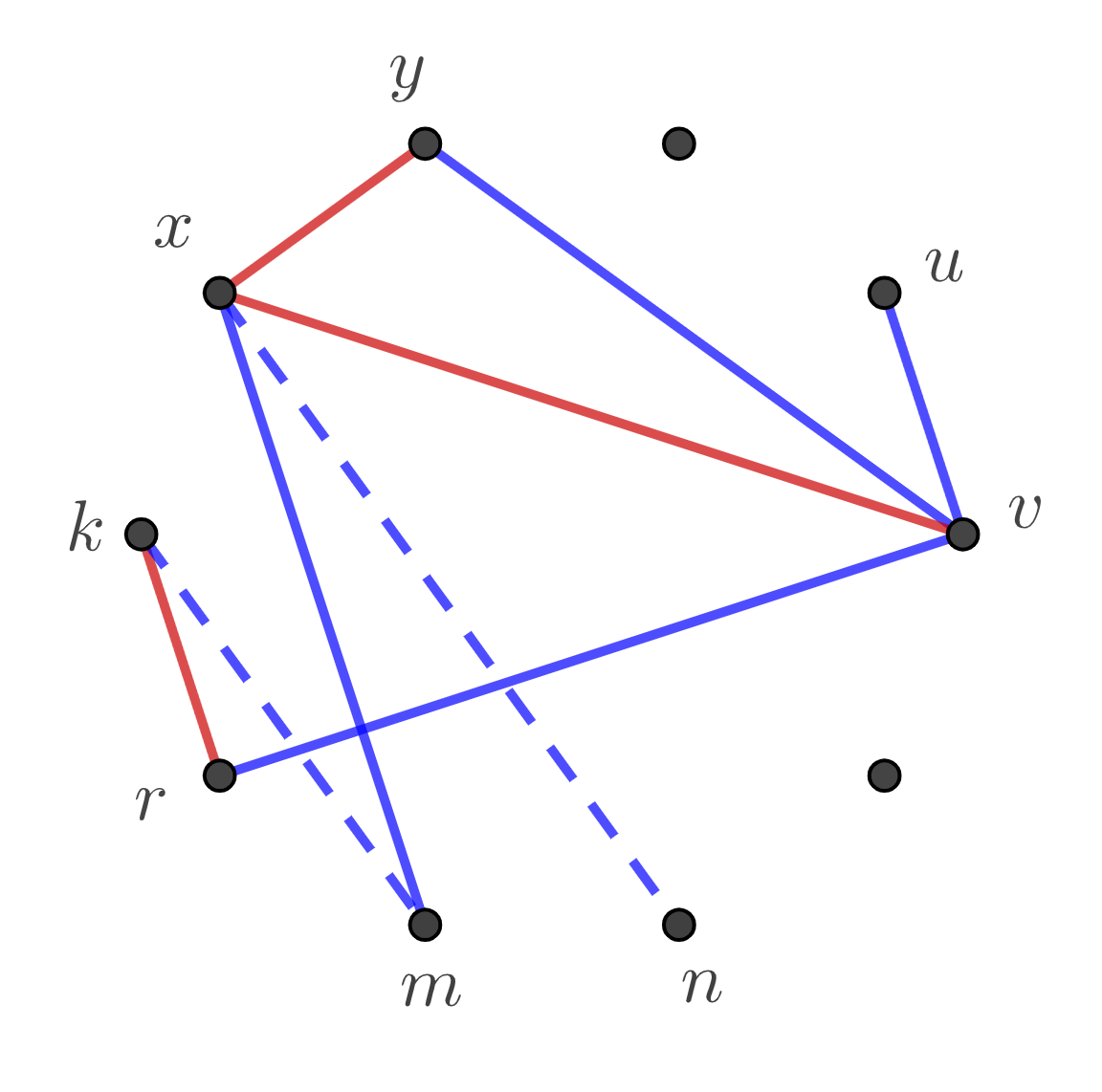}
    \caption{}
    \label{f4sfig2}
    \end{subfigure}
    \captionsetup{justification=centering}
    \caption{Case 4: (a) the graph after the second move of Blue. (b) The possible moves of Blue if the rule 1 of Stage 1 is in order, shown as dashed lines.}
    \label{f4}
    \end{figure}




During the game, every vertex $k$ that is not blue, for which it applies that $kv$ is free and adding the edge $kv$ to the Red's graph will not make a $P_4$ will be called a \emph{dangerous vertex}. All the other vertices will be called \emph{safe}.
A pure red vertex $j$ that is adjacent to the vertex $v$ in Red's graph will be called \emph{inaccessible}.

If in his third move Red claims the edge $yv$, Blue responds by claiming the edge $xv$, otherwise he claims the edge $yv$.
W.l.o.g.~we will suppose that Blue has claimed the edge $yv$ in his third move.

Note that at this point there are only two vertices in $C_1 \cup C_2$, namely $x$ and $t$, that can be dangerous.
Let $S_1 := $ \{$x, t\} $. During the game, whenever a vertex from $S_1$ becomes blue we remove it from $S_1$.

\medskip

 Now we give a strategy for Blue that he follows from his fourth move on.

 {\bf Stage 1.} While there are at least two black vertices in the game, Blue repeatedly plays by the first rule in this list that is applicable.
 \begin{enumerate}
     \item\label{SI1} If Red has claimed one of the edges $vx$ or $vt$;\\
     in his following three moves, Blue will claim edges that close a triangle incident with the vertex that just become inaccessible, w.l.o.g.~let it be the vertex $x$.
     By $m$ and $n$ we denote two arbitrary black vertices and by $k$ the other vertex from $S_1$. Note that the only case when $k$ is not vertex $t$ is when rule 4 has been played before.
     Blue starts by claiming the edge $xm$. Then, if it is unclaimed he claims the edge $km$, otherwise the edge $xn$, and in the following move Blue closes either the triangle $xmk$ or $xmn$, see Figure \ref{f4sfig2}.
     Then, he star-adds all the remaining vertices of the base graph that are not blue to the $v$-star.

     \item\label{SI2} If Red claims an edge creating a $RC$ that is a star on three vertices with the vertex $v$ as a leaf, and if the edge incident with $v$ and the other leaf is unclaimed;\\
     Blue claims it. We denote by $r$ the inaccessible vertex, the center of the red star. In his following three moves Blue claims the edges of the triangle $rxt$. Then, he star-adds all the remaining vertices of the base graph that are not blue to the $v$-star.

     \item\label{SI3} If Red claims an edge creating a $RC$ that is a $v$-star on three vertices, and if the edge incident with both leaves is unclaimed;\\
     Blue claims that edge. That isolated blue edge we call a \emph{cover-edge}.

     \item\label{SI4} If there is exactly two black vertices, and there is no pure red vertex that is not in $C_1 \cup C_2$, and there is no cover-edge, and one of $\{C_1, C_2\}$ is an isolated edge while the other one is a star with at least three edges that does not have $v$ as a leaf;\\
     then Blue claims the edge incident with the center of that star and $v$. Now, we remove the blue vertex from $S_1$ and add a safe pure red vertex from the same $RC$ to $S_1$.

     \item\label{SI5} Otherwise;\\
     Blue claims an edge incident with $v$ and one black vertex.
 \end{enumerate}

   Now, we will prove that if it is Blue's turn to play Stage 1, he can follow it.
  First, note that if in his third move Red claimed the edge $yv$, the game would be finished in step 1 of Stage 1.

   If Red did not claim $yv$ in his third move, in the beginning of Stage 1 Blue's graph consists of the $v$-star and isolated vertices, and he will continue claiming the edges of the $v$-star using rules \ref{SI4} and \ref{SI5}, until a condition of one of the rules \ref{SI1}, \ref{SI2}, \ref{SI3} is fulfilled.
   Note that Blue can use exactly one of the rules \ref{SI1}, \ref{SI2}, \ref{SI3} at most once until the end of the game, so when it is Blue's turn to play one of them, his graph consists of the $v$-star and isolated vertices.
   Clearly, Blue can always claim an edge between $v$ and a black vertex.

   When Blue is to play by rule \ref{SI1}, we know that none of rules $\ref{SI1},\ref{SI2}$ and $\ref{SI3}$ have been used before. Therefore Blue's graph consists of the $v$-star and isolated vertices. Also, there are two black vertices and Red has made one inaccessible vertex $x$. The edge $vx$ will be the only red edge incident with $v$ until the end of game, because vertex $v$ is a leaf of a red star and claiming another edge incident with $v$ would make a $P_4$ in Red's graph. So, all the remaining vertices that are not blue are safe. It is clear that $k$ is pure red and not in the same RC as $x$. Hence, Blue can follow rule \ref{SI1} and play up to $n-1$ moves, thus winning.

   When it is Blue's turn to play by rule \ref{SI2}, first unclaimed edge advised by the strategy must be available for him because his graph has the $v$-star and every vertex not adjacent to $v$ is isolated in Blue's graph. In his following three moves Blue can claim the edges between $r$, $x$ and $t$, because each of them is in a different $RC$. Note that $x$ and $t$ are pure red because rule \ref{SI4} could not have happened before and Blue could use only rule \ref{SI5}. For the same reason as above, all the remaining vertices that are not blue are safe. Now, it is clear that Blue can follow rule \ref{SI2} and play up to $n-1$ moves, thus winning.

   When it is Blue's turn to play by rule \ref{SI3} none of rules $\ref{SI1},\ref{SI2}$ and $\ref{SI3}$ have happened before, so every vertex not adjacent to $v$ is isolated in Blue's graph and he can claim the cover-edge as advised by the strategy.
   Note that Red cannot ever claim any edge adjacent to the cover-edge.

   For further analysis we need to verify the following claim.

    \begin{cl}\label{cs12}
      From the moment in the game when there is no more than two black vertices, until the first Blue's move after Stage 1, if there is no pure red vertex that is not in $C_1 \cup C_2$ and there is no cover-edge, Red's graph has at least four edges and one safe pure red vertex in the $C_1 \cup C_2$, and at least one pure red vertex in each of these components.
   \end{cl}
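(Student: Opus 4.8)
The plan is to prove the claim as an invariant that is maintained throughout the stated window, by induction on Blue's moves and using the precise shape of Stage~1. First I would cut down the set of positions that can actually occur under the two hypotheses. Since a cover-edge, once claimed, never disappears, the assumption ``no cover-edge'' means rule~\ref{SI3} has never been applied; and rules~\ref{SI1} and~\ref{SI2} are terminal winning lines, so if either had fired the game would already be a Blue win and the statement would be vacuous. Hence on the whole window Blue has only ever played rules~\ref{SI4} and~\ref{SI5}, each of which star-adds exactly one vertex, so Blue's graph is precisely the $v$-star. Writing $\ell$ for its number of leaves, Blue has claimed $\ell$ edges, and since Blue is the second player Red's graph has at least $\ell$ edges (exactly $\ell$ right after a Blue move, $\ell+1$ right after a Red move). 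By Observation~\ref{O1}, Red's graph is a disjoint union of stars and triangles, and by hypothesis all its pure red vertices lie in $C_1\cup C_2$.

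\textbf{Pure red vertex in each component, and the edge bound.} The vertices $t\in C_1$ and $x\in C_2$ are pure red at the start of Stage~1 and are exactly the two elements of $S_1$. I would argue that $S_1$ loses a vertex only when that vertex turns blue, and that Blue never star-adds an element of $S_1$ in rules~\ref{SI4},~\ref{SI5}: rule~\ref{SI5} adds a black vertex, while rule~\ref{SI4} adds the centre of a star and then immediately restocks $S_1$ with a fresh safe pure red vertex of the \emph{same} component. As $C_1$ and $C_2$ stay distinct $RC$'s throughout (Red may never join two $RC$'s without making a $P_4$), each retains a pure red vertex, namely the element of $S_1$ it contains. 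For the edge count I would count touched vertices: with at most two black vertices, at least $n-2\ge 6$ vertices are touched, and since every touched vertex is either blue or pure red, the touched set is the $\ell+1$ vertices of the $v$-star together with the $p$ pure red vertices of $C_1\cup C_2$, giving $\ell+p\ge n-3\ge 5$. Each of $C_1,C_2$ is a star or triangle containing a pure red vertex together with its blue leaf ($r$, respectively $y$), so already $R\ge \ell$ and $R\ge p$; the delicate point is to push this from the easy bound $R\ge 3$ up to $R\ge 4$, for which I would use that $\ell\ge 3$ on the whole window (Blue's first three moves already build a three-leaf star) together with the precise shapes of $C_1,C_2$ forced by the hypotheses.

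\textbf{Safe pure red vertex.} A pure red vertex $k\in C_1\cup C_2$ fails to be safe only if $kv$ is free and adding $kv$ to Red's graph creates no $P_4$. I would split on the position of $v$ in Red's graph. If $v$ is incident to a red edge, then for any pure red $k$ (which, lying in a non-trivial component, has a red neighbour) the edge $kv$ joins two different $RC$'s and hence completes a $P_4$, so every such $k$ is safe. If instead $v$ is isolated in Red's graph, then a pure red leaf of a star with at least two leaves, or any pure red vertex of a triangle, remains safe, whereas the endpoints of a bare isolated red edge are not; here I would use the edge bound and the component structure to guarantee that at least one safe vertex of the first kind is present.

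\textbf{Main obstacle.} The hard part will be the safety analysis together with the sharp edge count, since both must exclude the one genuinely degenerate configuration, in which $C_1$ and $C_2$ are both single edges, $v$ is untouched by Red, and the remaining red edges sit in components built entirely on blue star-leaves. Ruling this out, I expect, cannot be done by counting alone (which only yields $R\ge 3$ and produces no safe vertex in the case $v$ is red-isolated and both components are edges), but rather requires invoking the exact bookkeeping of Stage~1, in particular that rule~\ref{SI4} is designed precisely for the one-edge/large-star configuration and refreshes $S_1$ with a safe vertex. So the strategy-specific structure, rather than any purely enumerative estimate, will be indispensable both for the last unit of the edge bound and for the existence of the safe vertex.
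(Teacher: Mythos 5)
Your reduction to rules \ref{SI4} and \ref{SI5} only, the edge count, and the $S_1$-based argument that each of $C_1$ and $C_2$ retains a pure red vertex all work and broadly agree with the paper. The genuine gap is exactly where you locate it yourself: the existence of a \emph{safe} pure red vertex. You try to certify this by a static analysis of the position at the late moment, and you rightly observe that counting cannot exclude the bad configurations; besides the one you name, there is also the configuration in which one of $C_1,C_2$ is a star whose only pure red vertex is its centre (all leaves blue), and the one where both components are single edges while Red's remaining edges lie in components spanned entirely by blue leaves of the $v$-star (such components contain no pure red vertex, so they are consistent with the hypothesis). In all of these, every pure red vertex of $C_1\cup C_2$ is dangerous when $v$ is red-isolated. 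Your proposed repair --- invoking the bookkeeping of rule \ref{SI4} --- cannot close this: rule \ref{SI4} only moves an \emph{already existing} safe pure red vertex into $S_1$, and the paper in fact cites Claim \ref{cs12} to justify that rule \ref{SI4} is executable at all, so deriving the claim from rule \ref{SI4} would be circular.

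The missing idea is to anchor the invariant at an early, fully enumerable position rather than at the late one. Just before Blue's fourth move, Blue's graph is the $v$-star on four vertices and Red's graph has exactly four edges, all of which must lie in $C_1\cup C_2$: at that stage the only blue vertices are $v,u,r,y$ with $r\in C_1$ and $y\in C_2$, so any third $RC$ would have to contain a pure red vertex, which Blue's later rules \ref{SI4} and \ref{SI5} never turn blue, contradicting the hypothesis at the late moment. This leaves only three shapes for the pair $(C_1,C_2)$ --- edge plus triangle, edge plus a three-edge star, or two copies of $P_3$ --- and in each a safe pure red vertex can be exhibited directly. One then propagates: safety of a fixed vertex is monotone under Red's moves (adding red edges can only create more $P_4$'s through $kv$, and $kv$ being claimed makes $k$ safe by definition), Red cannot turn a pure red vertex blue, rule \ref{SI5} touches only black vertices, and rule \ref{SI4} turns blue only the dangerous star centre. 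Without this base-case-plus-preservation structure the central assertion of the claim is not established, so as written your proof has a real hole.
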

   \begin{proof}
   Before Blue had played his fourth move, his graph consisted of the $v$-star on 4 vertices, where $v$ was pure blue. At that moment Red's graph had four edges and all of them had to be in $C_1 \cup C_2$, otherwise there would be at least one pure red vertex in the third $RC$, which we assumed was not the case. Therefore there are three options for $C_1$ and $C_2$:
   \begin{itemize}
       \item An isolated edge and a triangle.\\
       In this case there were at least two pure red vertices, one in $C_1$ and the other one in $C_2$, where one of them was incident with a triangle, so it must have been safe.
       \item An isolated edge and a star on four vertices.\\
        In this case there were at least three pure red vertices, one of which was incident with the isolated edge, and all the others with the star. Therefore, there were at least one safe pure red vertex as a leaf of the red star.
       \item Both of them are a $P_3$.\\
        In this case there were at least three pure red vertices, at most two of them were dangerous, so there must have been one safe. At least one vertex in each component was pure red.
   \end{itemize}

   If the assumption of the claim holds, the only rules that Blue could have applied in the meantime are rules \ref{SI4} and \ref{SI5}. The last one does not have any influence on pure red vertices, and rule \ref{SI4} can just swap one pure red dangerous vertex with a pure red safe vertex in the same component. Therefore, the assertion of the claim is proven.
   \end{proof}

   When Blue is to play by rule \ref{SI4}, it is clear that he can claim that edge. Note that if that edge is not free it has to be blue, otherwise rule \ref{SI1} would be achieved. Using Claim \ref{cs12} we know that a pure red vertex incident with the star exists.

   Note that during Stage 1, if Blue has not already won (rule \ref{SI1} and \ref{SI2}), his graph consists of the $v$-star (rule \ref{SI4} and \ref{SI5}), possibly one isolated cover-edge (rule \ref{SI3}) and isolated vertices. Also, $S_1$ consists of two pure red vertices where one belongs to $C_1$, and the other one to $C_2$.

   When Stage~1 is finished, there is at most one black vertex. We then move on to Stage 2, distinguishing two cases.

   \medskip
   {\bf Stage 2a.} If there is at least one pure red vertex that is not in $C_1 \cup C_2$ or there is a cover-edge, we proceed to Stage 2a.
   \smallskip

   Before Blue plays his first move in Stage 2a, we add all inaccessible vertices and the ends of the cover-edge to $S_1$.
    If a pure red vertex that is not in $C_1 \cup C_2$ exists, we denote it by $w$.
    If $|S_1|<3$ (there was not an inaccessible vertex nor a cover-edge) then we add $w$ to $S_1$.

    Blue repeatedly plays by the first rule that is applicable in this list and if before the move of Blue there is a new inaccessible vertex, we add it to $S_1$.
    \begin{enumerate}
        \item\label{SII1a}  If the conditions of rule \ref{SI2} or rule \ref{SI3} from Stage 1 are fulfilled,\\
        Blue claims the next edge in the same way as that rule suggests.
        \item\label{SII1b} If there is a black vertex,\\
        Blue star-adds it to the $v$-star.
        \item\label{SII1c} If there is a cover-edge,\\
        then if $|S_1|$ is not divisible by three, we will make it by removing one or two vertices from $C_1 \cup C_2$. In his following two moves Blue claims a triangle using the cover-edge and one more vertex from $S_1$. Then, until $S_1$ is not empty, he chooses three vertices from $S_1$ and makes a triangle claiming all edges between them. At the end he star-adds all the remaining vertices of the graph that are not blue to the $v$-star.
        \item\label{SII1d} If there is an inaccessible vertex $r$,\\
        then if $|S_1|=4$ we remove $w$ from $S_1$. In his following three moves Blue makes the triangle claiming edges between vertices from $S_1$. Then he star-adds all the remaining vertices that are not blue to the $v$-star.

        \item\label{SII1e} If there is an unclaimed edge incident with $v$ and one pure red dangerous vertex that is not in $S_1$,\\
        Blue claims it.

        \item\label{SII1f} Otherwise,\\
        in his following three moves Blue makes the triangle claiming edges between the remaining three vertices from $S_1$. Then he star-adds all the remaining vertices that are not blue to the $v$-star.
    \end{enumerate}

   \medskip
   {\bf Stage 2b.} Otherwise (there is neither a pure red vertex that is not in $C_1 \cup C_2$ nor a cover-edge), we proceed to Stage 2b.
   \smallskip

   If there is one black vertex, we denote it by $j$.
   Depending on the types of the components $C_1$ and $C_2$, we have three conditions and Blue chooses the first one which is satisfied.
   \begin{enumerate}
       \item\label{SII2a} At least one of $C_1$ and $C_2$ is a star with more than two edges, and it is disjoint from $v$.\\
       Let us denote by $r$ the center of that star. If it is unclaimed, Blue claims the edge $vr$. Than, he star-adds all the remaining vertices that he can to the $v$-star.

       \item\label{SII2b} Each of $C_1$ and $C_2$ is a star with at least two edges.
       \begin{enumerate}
           \item If $v$ is red,\\
           we denote by $r$ the center of the star incident with $v$, and by $k$ a pure red vertex that is not in the same $RC$ as $r$. We know that these vertices exists by Claim \ref{cs12}. Blue claims the edge $rj$, after that if it is free he claims the edge $kj$, and then the edge $kr$. Then he star-adds all the remaining vertices that are not blue to the $v$-star.
           \item Otherwise, $v$ is blue,\\
           we denote by $w$ a safe pure red vertex and by $r$ the center of the star of the same $RC$, and with $k$ a pure red vertex from the other $RC$. We know that these vertices exists by Claim \ref{cs12}.\\
           If it is unclaimed, Blue claims the edge $rv$.
           Then, if $kj$ is unclaimed Blue claims it. In his following three moves, he claims the triangle $kjw$ and star-adds all the remaining vertices that are not blue to the $v$-star, if any. Otherwise, if $kj$ is not unclaimed, he claims the edge $kv$ and star-adds all the remaining vertices that are not blue to the $v$-star.
       \end{enumerate}

       \item\label{SII2c} At least one of $C_1$ and $C_2$ is a triangle.\\
       We denote by $k$ a pure red vertex incident with the triangle, and with $r$ a pure red vertex from the other component, where if there are more than one such vertex the dangerous one has an advantage. We know that these vertices exists by Claim \ref{cs12}.

       Blue claims the edge $rj$, if it is unclaimed, and then creates the triangle $rjk$, otherwise he claims the edge $rv$. Then he star-adds all the remaining vertices that are not blue to the $v$-star.
        \end{enumerate}

    Now let us first show that when it is Blue's turn to play Stage 2a, he can follow it and win.
   Note that there are no red edges between any two vertices of $S_1$ because they are in two different $RC$ or they are leaves of the same red star, and all vertices in $S_1$ are pure red. Also, when it is Blue's turn to play rules \ref{SII1c}-\ref{SII1f} there are no more black vertices.

   When it is Blue's turn to play rule \ref{SII1a}, rules \ref{SII1a}, \ref{SII1c}, \ref{SII1d} and \ref{SII1f} could not have been activated before, so his graph consists of the $v$-star and isolated vertices. For the same reason as in rules \ref{SI2} and \ref{SI3} from Stage 1 he can claim his next edge.

   When it is Blue's turn to play rule \ref{SII1b}, he can obviously follow it.

   When it is Blue's turn to play by rule \ref{SII1c}, there are no more black vertices, so red star centered in $v$ cannot spread any more as all pure blue vertices are in the $v$-star, so all the remaining vertices that are not blue have to be safe. Now it is evident that Blue can follow his strategy as described in rule \ref{SII1c}. Here, Blue wins by playing $n-1$ edges.

   When it is Blue's turn to play by rule \ref{SII1d}, we know that $v$ is a leaf of a red star. At this moment Blue's graph consists of the $v$ star and isolated vertices, and all the remaining vertices that are not blue have to be safe. Now it is clear that Blue can follow his strategy and win by playing $n-1$ edges.

   It is obvious that if it is Blue's turn to play by rule \ref{SII1e}, he can claim as advised due to the definition of a dangerous vertex. Note that here $v$ is blue.

   If nothing above mentioned happened, $v$ is still blue and the Blue's graph consists of the $v$ star and isolated vertices. $S_1$ consists of three vertices, where each of them is in a different $RC$. Obviously, Blue can make the triangle described in \ref{SII1f}, and because these were the last dangerous vertices, he can star-add all of the remaining vertices to the $v$-star and win with $n-1$ edges.
   Note that it is not possible that Red claims a triangle incident with $v$ in this step because $v$ is blue, and the vertices from $S_1$ cannot be adjacent in Red's graph.

   Taking into account that Red's graph cannot have a triangle incident with $v$ (considering rule \ref{SII1a} and the above mentioned), Red's graph cannot have more than $n-1$ edges, so Blue wins the game.

\medskip
    It remains to show that when it is Blue's turn to play Stage 2b, he can follow it and win.
    Note that all pure red vertices are in $C_1 \cup C_2$, so there are at most three dangerous vertices and each of them has to belong to the set $\{x,t,j\}$. Likewise, all the vertices that are not in $C_1 \cup C_2$ are blue, except $j$ which is black (if it exists). Each of the blue vertices is a leaf of the $v$-star, therefore it is not possible that Red claims a triangle incident with $v$.

   When it is Blue's turn to play by rule \ref{SII2a}, it is clear that Blue can claim the edge $vr$ if it is free, and then Blue can star-add to the $v$-star all the remaining vertices but possibly one. In that case Red has at least two stars (one $r$-star and the other one incident with $v$) in his graph and he cannot have more than $n-2$ edges, by Observation \ref{l3}, so Blue wins with $n-2$ edges.\\
   Otherwise, if $vr$ is not free it has to be blue (condition of this step), so Blue just skips this move and wins in the same way as argued above.

    When it is Blue's turn to play by rule \ref{SII2b}:
    \begin{itemize}
     \item If $v$ is red, that happened in the last move, otherwise the game would have be finished in Stage 1, so there has to exist $j$, and Blue can claim $rj$. After that all the vertices that are not blue are safe.\\
     Then, if the edge $kj$ has been claimed Red will have at least two stars at the end of game, so he can have at most $n-2$ edges, by Observation \ref{l3}. Therefore, Blue can follow rule \ref{SII2b} to the end and win with $n-2$ edges.\\
     Else, if $kj$ is unclaimed, Blue claims the triangle $rjk$, and wins with $n-1$ edges.

     \item If $v$ is blue, each of $C_1$ and $C_2$ is a $P_3$, as otherwise it would be rule \ref{SII2a}. There has to exist $j$, otherwise if Red took it in his last move, before that move one of $C_1, C_2$ was an isolated edge, and the other one $P_3$, and that is not possible because of Claim \ref{cs12}.\\
    If the edge $rv$ has been already claimed, it has to be blue. It is clear that in his following move he can claim one of the edges $kj$ or $kv$, and then all the remaining pure red vertices are safe, so he can follow his strategy until the end of the game and win with $n-1$ edges.
   \end{itemize}

    When it is Blue's turn to play by rule \ref{SII2c}, we know that his previous move was in Stage 1, so after that move there was at least one black vertex and $v$ was blue. Red could not claim both of the edges $rj$ and $rv$ in his following move, so one of them is unclaimed and Blue can claim it. All the remaining pure red vertices are safe, so he can follow his strategy until the end of the game and win with $n-1$ edges.

    Note that Claim \ref{cs12} guarantee that all cases are covered by Stage 2b except the case when one component is an isolated edge and the other one a star with at least four edges that are incident with $v$. That cannot happen because in his previous move Blue played in Stage 1 and the conditions of rule~\ref{SI4} had to be fulfilled, but then Blue would make the center of that star adjacent to $v$.

    This concludes the proof for Case 4. \hfill $\Box$

\section{Strong Avoider-Avoider ${\cal CC}_{>3}$ game} \label{s:cc3}

\vspace*{3mm}

{\bf Proof of Theorem \ref{TH002}:}
 We describe a winning strategy for Blue. In the beginning, we have a graph $G$ with $n$ isolated vertices, and Red claims an edge, let us denote it by $uv$. Then Blue claims an edge that is adjacent to the red one, let us denote it by $vi$.

In the following move Red has five options, up to isomorphism, for choosing an edge, and those five moves will make our five cases. For each of these cases we will show that Blue can win, in the first four cases we will use the idea of Strategy stealing, and in Case 5 we will design an explicit strategy.
Let us denote the second move of Red by $e=xy$.

\medskip
{\bf Case 1.} Vertex $x$ is pure red and vertex $y$ is black, i.e. $x=u$.

Suppose that Red has a strategy $S$ to win the game. After Red plays $uy$ it is Blue's turn. The graph of the game has one $P_4$ with two adjacent red edges and one blue edge. We denote the vertices as depicted in Figure \ref{f5sfig1}. Before his next move, Blue imagines that he has already claimed the edge $ui$, and that Red has not claimed the edge $uv$, see Figure \ref{f5sfig2}. Note that the edge $ui$ will remain free throughout the game, as otherwise Red would create a ${\cal CC}_{>3}$ in his graph.

 \begin{figure}[ht]
    \begin{subfigure}{.5\textwidth}
    \centering
     \includegraphics[width=0.6\linewidth]{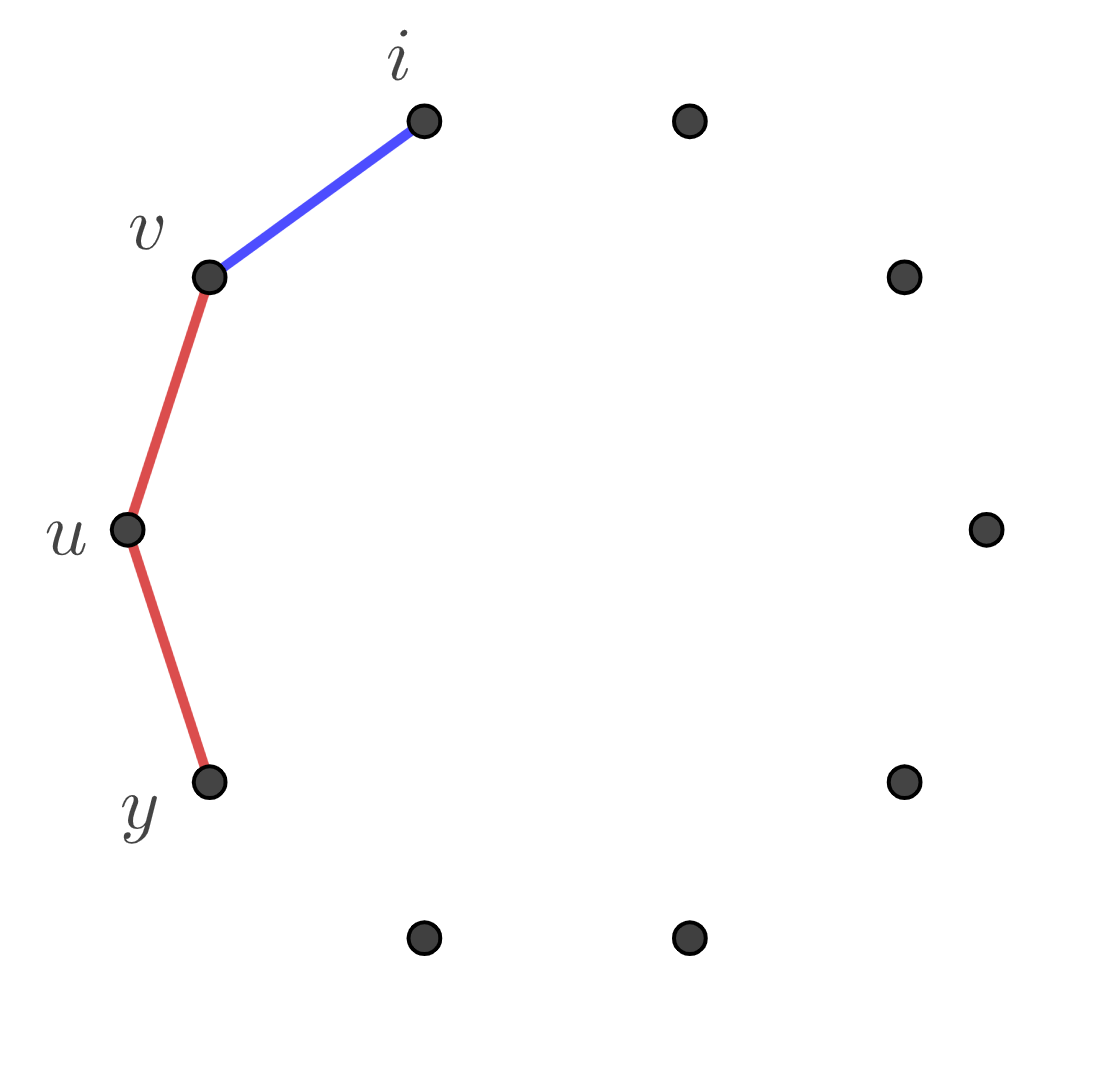}
     \caption{}
     \label{f5sfig1}
    \end{subfigure}%
    \begin{subfigure}{.5\textwidth}
     \centering
     \includegraphics[width=0.6\linewidth]{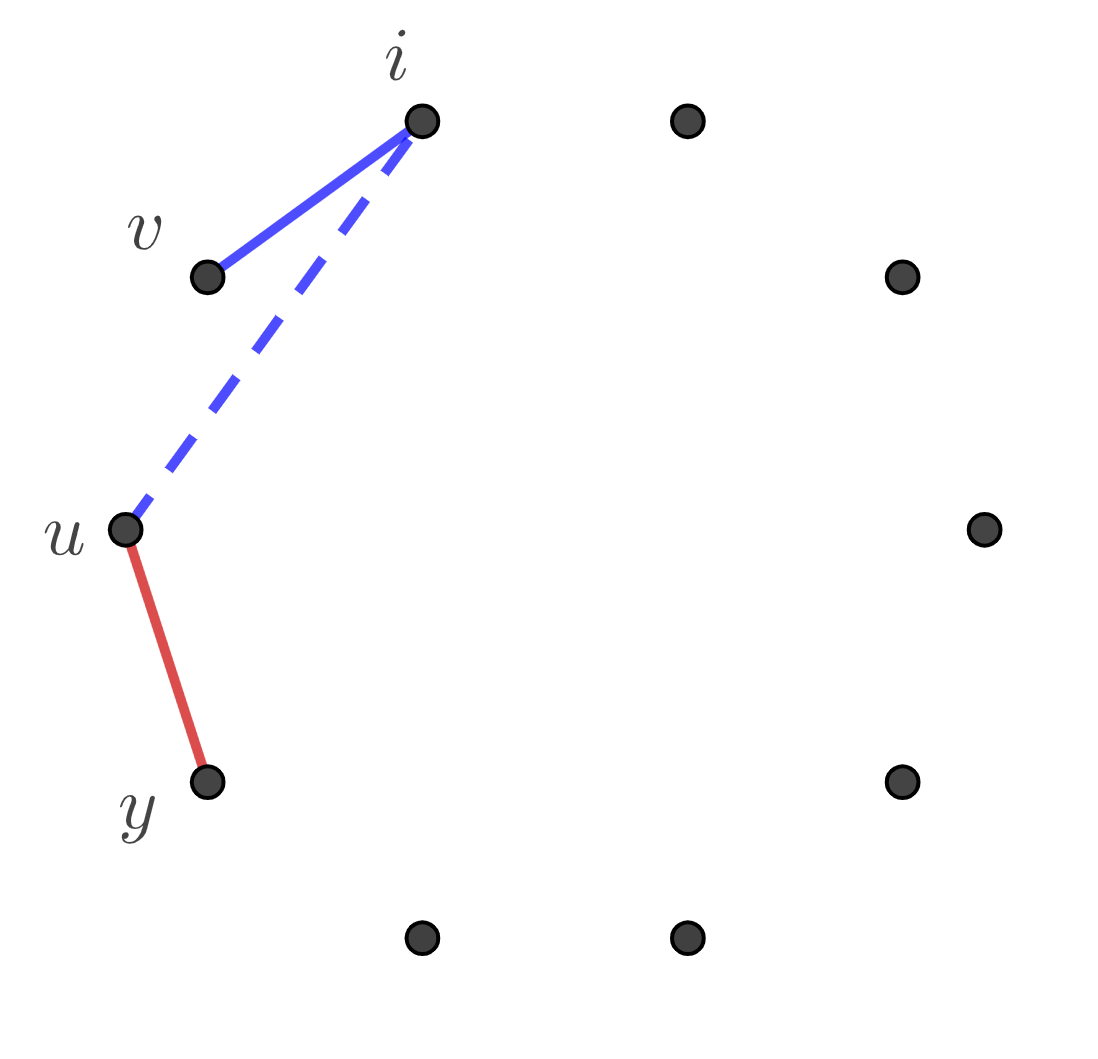}
    \caption{}
    \label{f5sfig2}
    \end{subfigure}
    \captionsetup{justification=centering}
    \caption{Case 1: (a) the graph before the second move of Blue. (b) The imagined graph before the second move of Red.}
    \label{f5}
    \end{figure}

 The imagined graph is isomorphic to the graph where the roles of the players are swapped. Blue imagines that he is the first player, he further imagines that Red claims the edge $uv$ as his second move, and from now on responds as advised by the winning strategy $S$ and wins the game, a contradiction.

\medskip
{\bf Case 2.} Vertex $x$ is both red and blue and vertex $y$ is black, i.e. $x=v$.

 \begin{figure}[ht]
    \begin{subfigure}{.5\textwidth}
    \centering
     \includegraphics[width=0.6\linewidth]{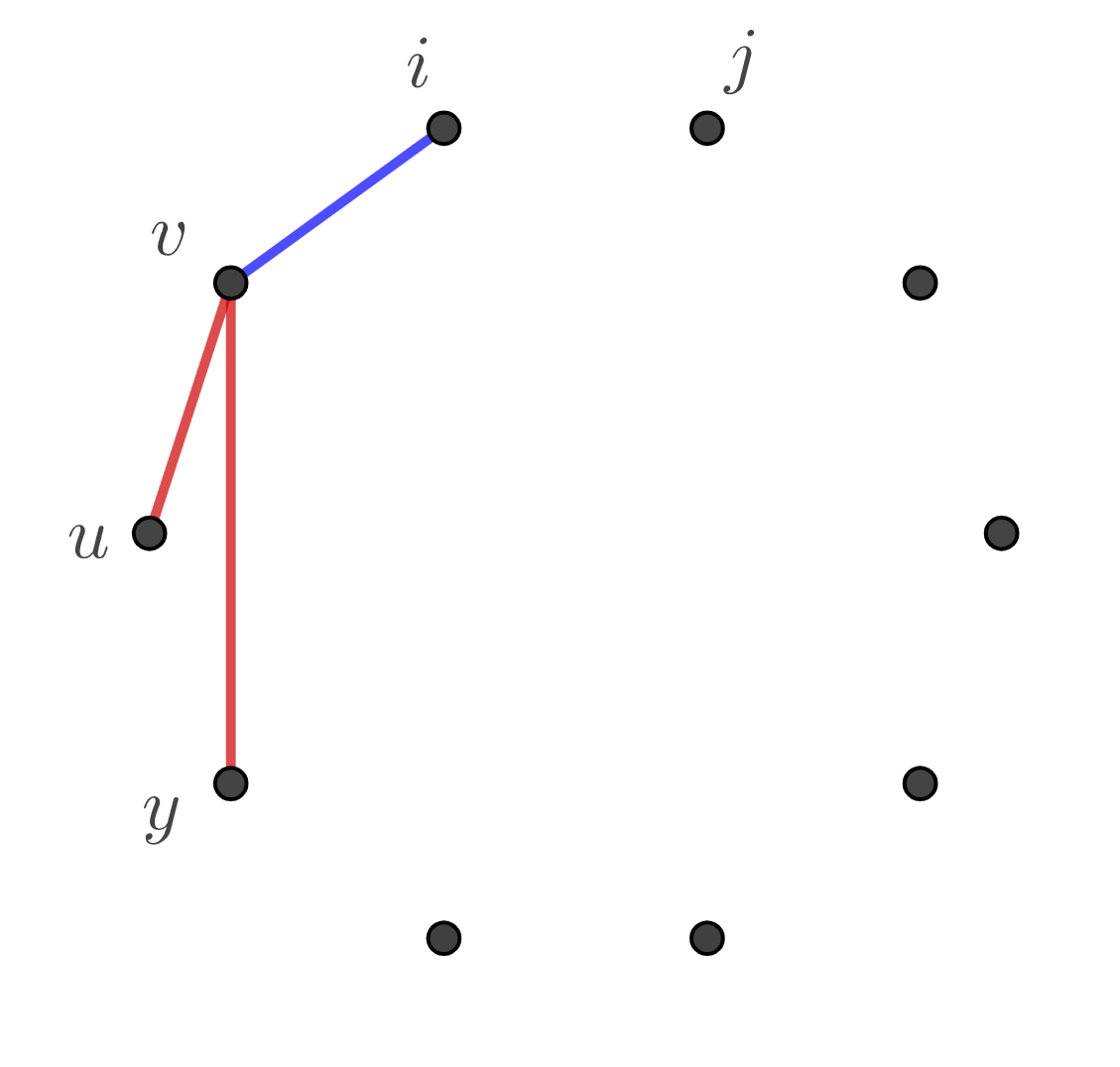}
     \caption{}
     \label{f6sfig1}
    \end{subfigure}%
    \begin{subfigure}{.5\textwidth}
     \centering
     \includegraphics[width=0.6\linewidth]{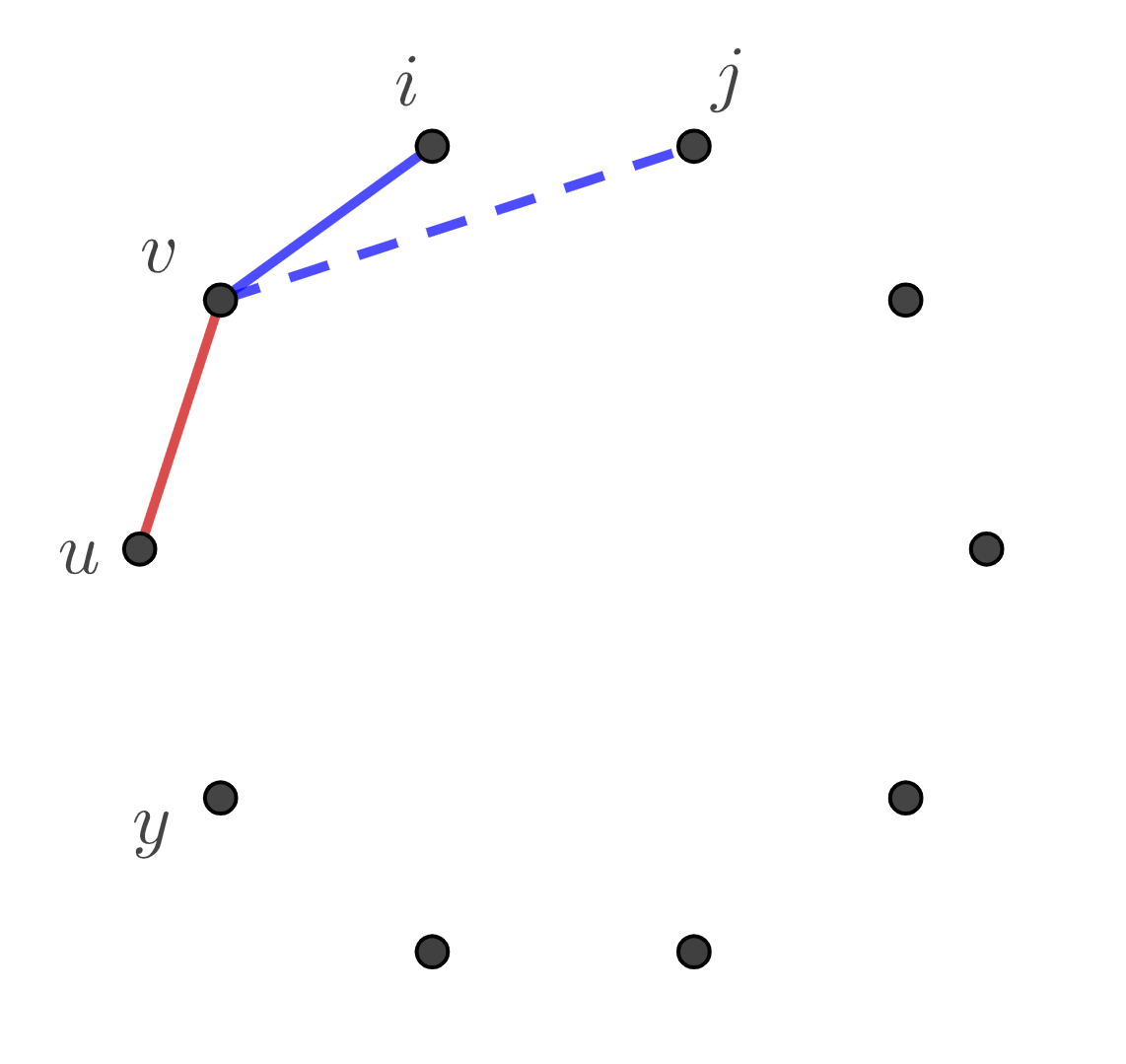}
    \caption{}
    \label{f6sfig2}
    \end{subfigure}
   \captionsetup{justification=centering}
   \caption{Case 2: (a) the graph before the second move of Blue. (b) The imagined graph before the second move of Red.}
    \label{f6}
 \end{figure}

 Suppose that Red has a strategy $S$ to win the game. After Red plays $vy$, the graph of the game has one star on three edges where two of them are red and one is blue. We denote the vertices as depicted in Figure \ref{f6sfig1}. Before his next move, Blue imagines that he has already claimed the edge $vj$, where $j$ is a black vertex, and that Red has not claimed the edge $vy$, see Figure \ref{f6sfig2}. Note that the edge $vj$ will remain free throughout the game, as otherwise Red would create a ${\cal CC}_{>3}$ in his graph.

 The imagined graph is isomorphic to the graph where the roles of the players are swapped. Blue imagines that he is the first player, and that Red claims the edge $vy$ as his second move. From now on Blue responds as advised by the winning strategy $S$ and wins the game, a contradiction.

\medskip
{\bf Case 3.} Vertex $x$ is pure blue and vertex $y$ is black, i.e. $x=i$.

  Suppose that Red has a strategy to win the game. After Red plays $iy$, the graph of the game has one $P_4$ which two non-adjacent edges are red, and the third one is blue. We denote the vertices as depicted in Figure \ref{f7sfig1}. Before his next move, Blue imagines that he has already claimed the edge $uy$, and that Red has not claimed the edge $iy$, see Figure \ref{f7sfig2}. Note that the edge $uy$ will remain free throughout the game, as otherwise Red would create a ${\cal CC}_{>3}$ in his graph.

 \begin{figure}[ht]
    \begin{subfigure}{.5\textwidth}
    \centering
     \includegraphics[width=0.6\linewidth]{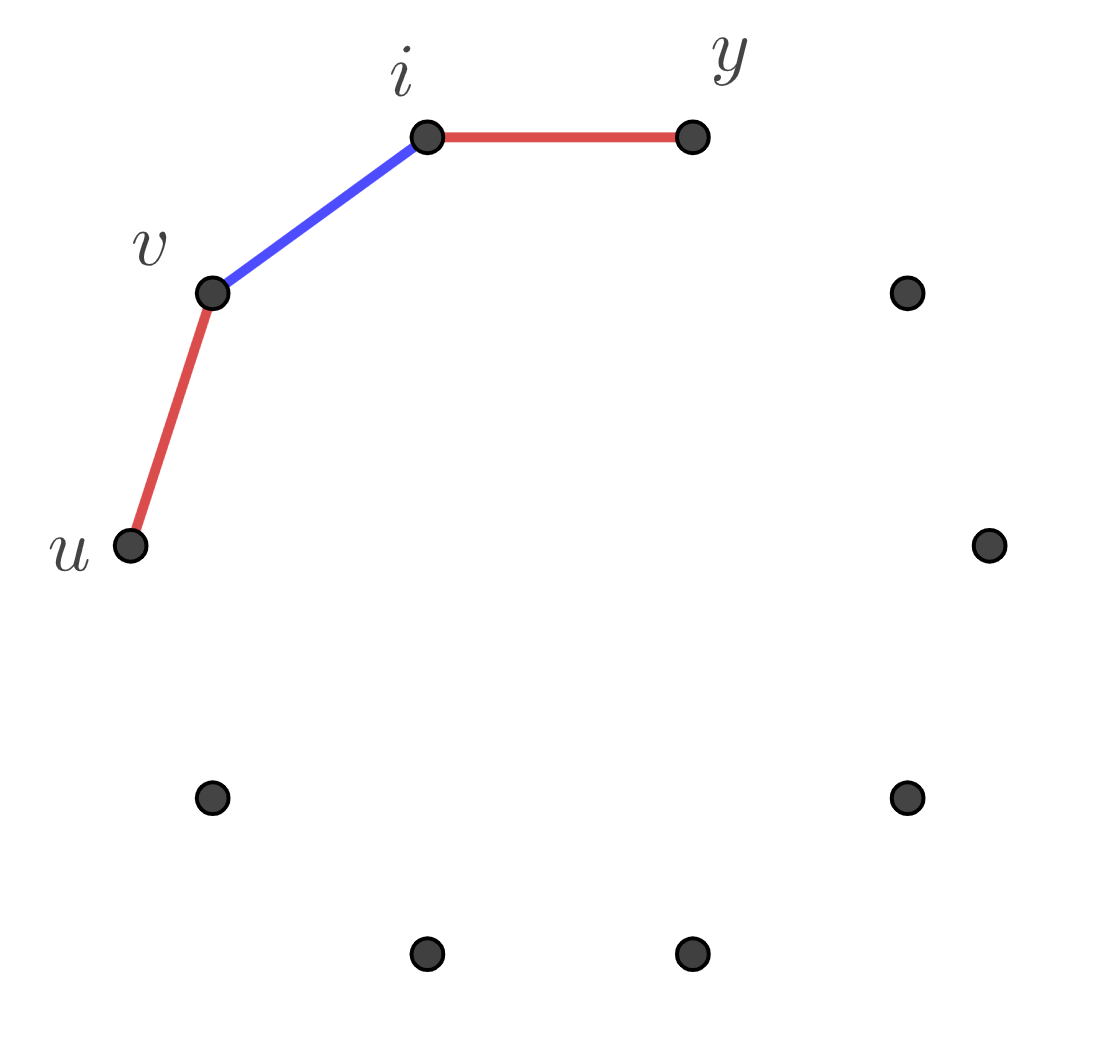}
     \caption{}
     \label{f7sfig1}
    \end{subfigure}%
    \begin{subfigure}{.5\textwidth}
     \centering
     \includegraphics[width=0.6\linewidth]{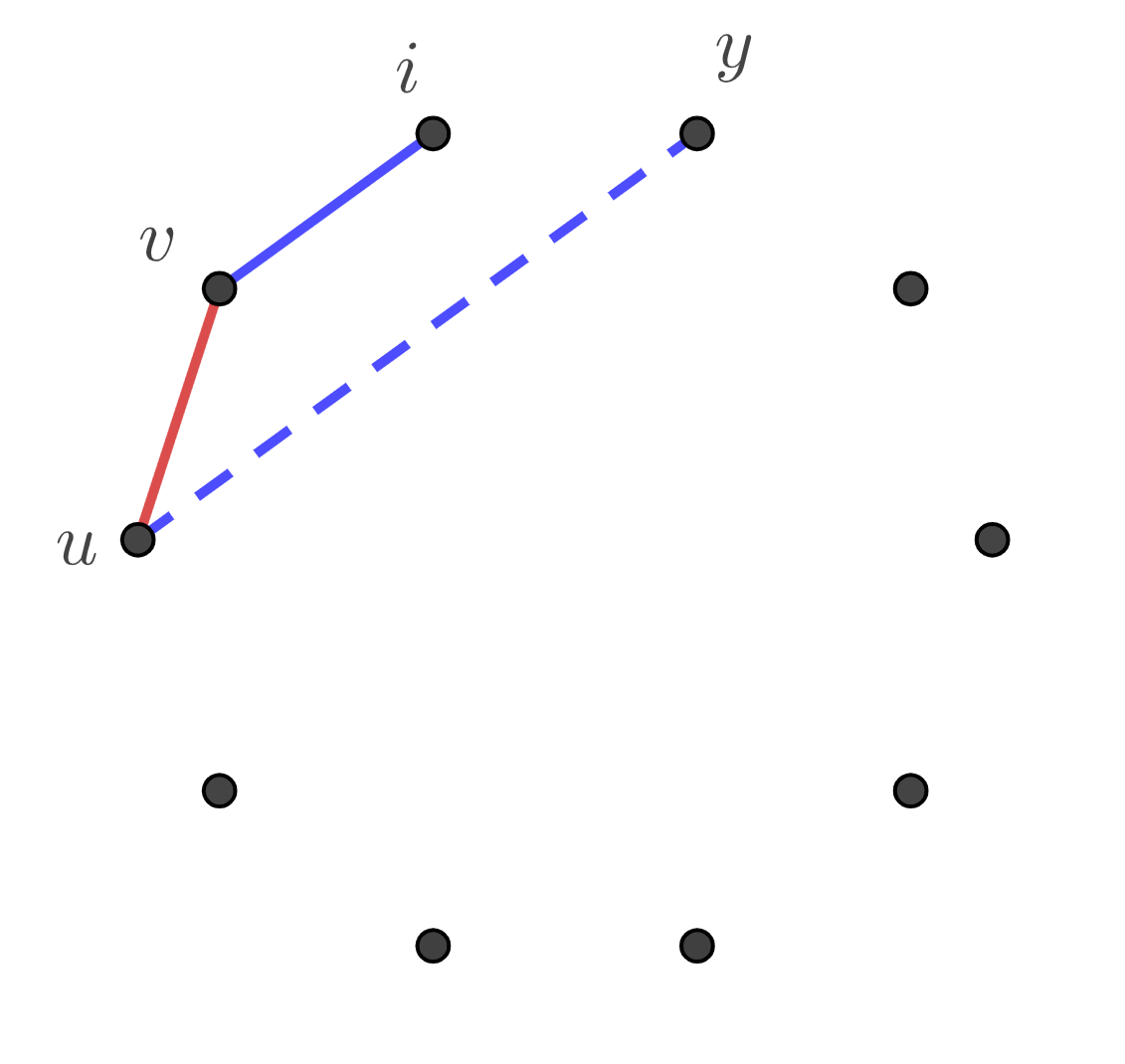}
    \caption{}
    \label{f7sfig2}
    \end{subfigure}
    \captionsetup{justification=centering}
    \caption{Case 3: (a) the graph before the second move of Blue. (b) The imagined graph before the second move of Red.}
    \label{f7}
    \end{figure}

 The imagined graph is isomorphic to the graph where the roles of the players are swapped. Blue imagines that he is the first player, and that Red claims the edge $iy$ as his second move. Hereafter, Blue responds as advised by the winning strategy $S$ and wins the game, a contradiction.

\medskip
\newpage
 {\bf Case 4.} Both $x$ and $y$ are black.

 Suppose that Red has a strategy $S$ to win the game. After Red plays $xy$, the graph of the game has one $P_3$, where one edge is red and the other one blue, and one isolated red edge. We denote the vertices as depicted in Figure \ref{f8sfig1}. Before his next move, Blue imagines that he has already claimed the edge $ux$, and that Red has not claimed the edge $uv$, see Figure \ref{f8sfig2}. Note that the edge $ux$ will remain free throughout the game, as otherwise Red would create a ${\cal CC}_{>3}$ in his graph.

 \begin{figure}[ht]
    \begin{subfigure}{.5\textwidth}
    \centering
     \includegraphics[width=0.6\linewidth]{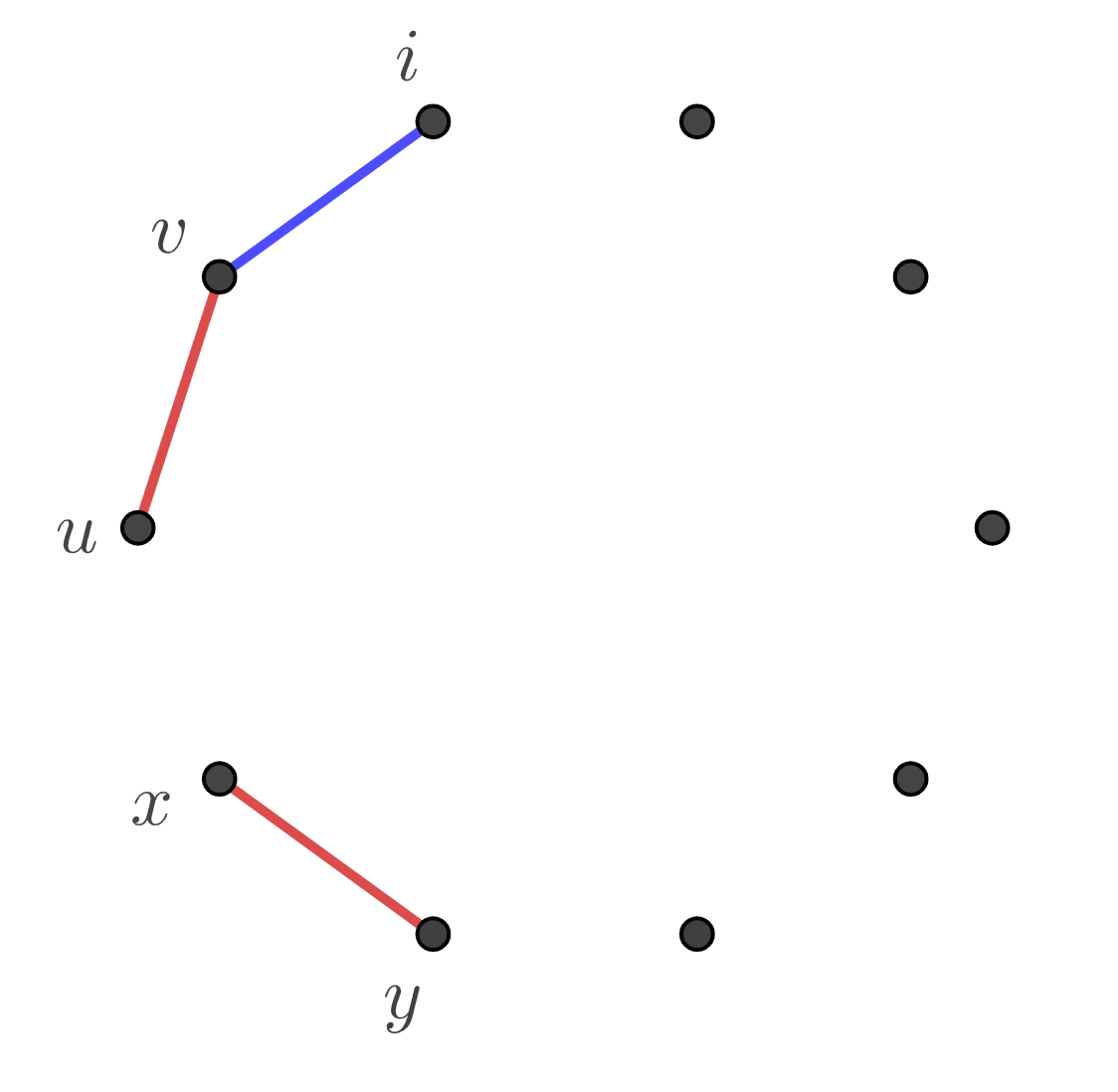}
     \caption{}
     \label{f8sfig1}
    \end{subfigure}%
    \begin{subfigure}{.5\textwidth}
     \centering
     \includegraphics[width=0.6\linewidth]{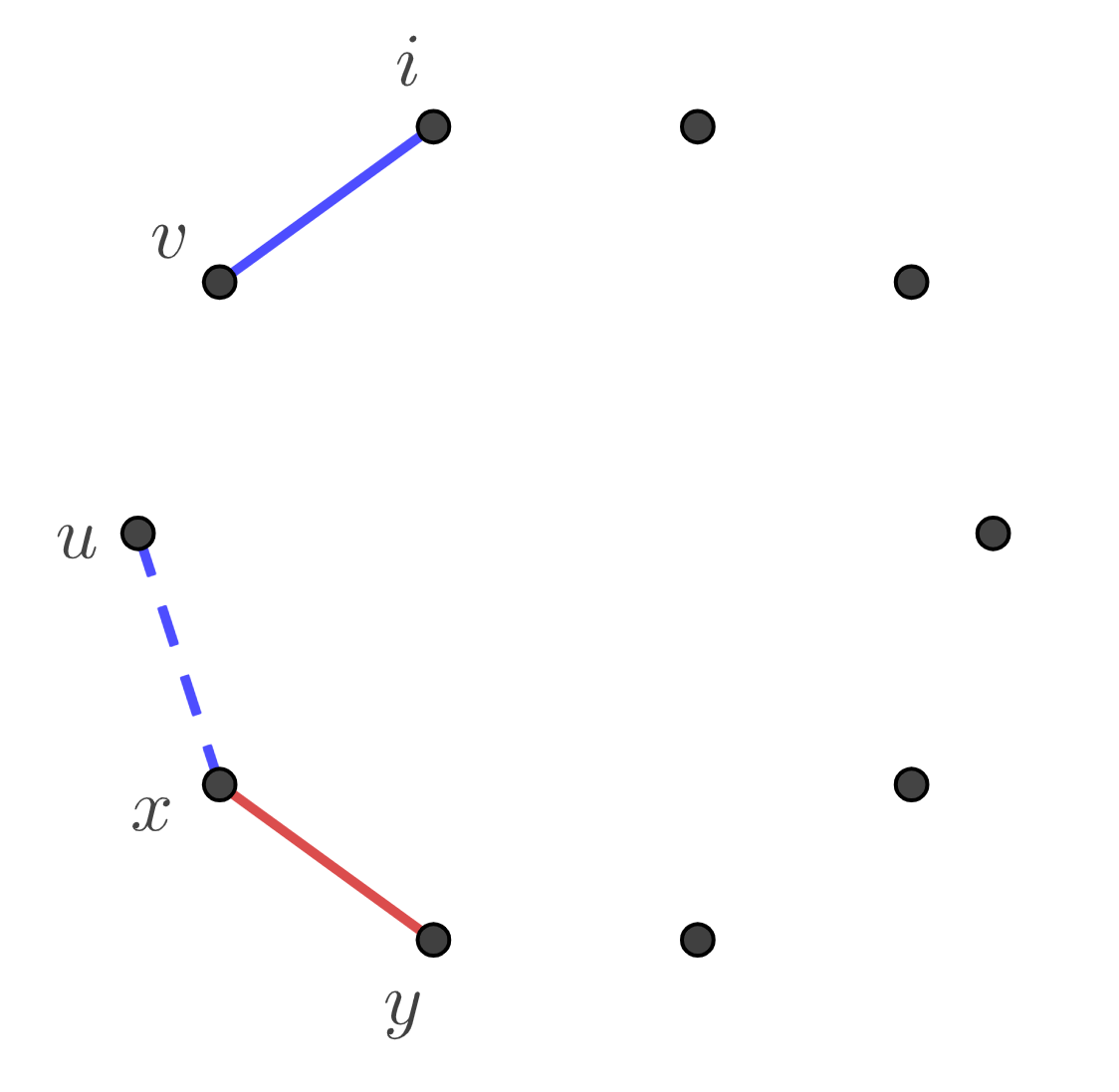}
    \caption{}
    \label{f8sfig2}
    \end{subfigure}
    \captionsetup{justification=centering}
    \caption{Case 4: (a) the graph before the second move of Blue. (b) The imagined graph before the second move of Red.}
    \label{f8}
    \end{figure}

 The imagined graph is isomorphic to the graph where the roles of the players are swapped. Blue imagines that he is the first player, and that Red claims the edge $uv$ as his second move. Hereafter, Blue responds as advised by the winning strategy $S$ and wins the game, a contradiction.

\medskip
 {\bf Case 5.}  Vertex $x$ is pure red and vertex $y$ is pure blue, i.e. $x=u$ and $y=i$.

 Blue is the second player, so his graph can never have more edges than the Red's graph. Having that in mind, as well as Observation \ref{o1}, we will describe an explicit strategy for Blue to claim disjoint triangles until the very end, which will enable him to win.
     Note that in this case the first $RC$ is a $P_3$, so using Observation \ref{o1}, it is not possible for Red to have more than $n-1$ edges if $n=3m$, or $n-2$ otherwise.

 Let us introduce some terminology. During the game, whenever a blue edge is added to a pure red $P_3$ so that it completes a triangle, we call it a \emph{nice edge}. Furthermore we call the pure red vertex that is incident with that triangle a \emph{nice vertex}.

 First we will give a strategy for Blue that he follows from his second move on, and afterwards we will show that Blue can follow it.

 \medskip
 {\bf Stage 1.} While there is at least one black vertex in the game, Blue repeatedly plays by the first rule in this list that is applicable. Let $k$ denote the number of nice edges in Blue's graph.
 \begin{enumerate}
     \item\label{1} If there is a pure red $P_3$,\\ Blue claims the isolated edge that completes a triangle when added to the pure red $P_3$. In other words, Blue claims a nice edge.

        \item\label{2a} If $k=0$,
            \begin{enumerate}
                \item\label{2ai} Pure red vertices are in at least three different $RC$.\\
                 Blue chooses two pure red vertices from two different $RC$, say $x$ and $y$ and claims an edge between them. We denote by $X_3$ the third $RC$.
                Then, Blue repeatedly plays by the first rule in the following list that is applicable until he claims all three edges of a new triangle $xyz$.
                \begin{enumerate}
                    \item If there is a pure red $P_3$ anywhere in the graph,\\ Blue claims the isolated edge that added to the pure red $P_3$ completes a triangle, i.e. a nice edge.
                    \item If degree of the vertex $x$ in Blue's graph is one,\\
                    Blue claims the edge $xz$, where $z$ is a pure red vertex in $X_3$.
                    \item Otherwise (the degree of the vertex $x$ in Blue's graph is two),\\
                    Blue claims the edge $yz$, where $y$ and $z$ are endpoints of the blue edges incident to $x$.
                \end{enumerate}

                \item\label{2aii} The pure red vertices are in exactly two $RC$.\\
                Denote by $X_1$ the $RC$ that has two pure red vertices, and the other one with $X_2$, later we will see that $X_1$ exists. Blue claims an edge between pure red vertex from $X_2$ and one black vertex, denote it by $w$.
                Then, Blue repeatedly plays by the first rule in the following list that is applicable until he claims all three edges of a new triangle.
                \begin{enumerate}
                    \item If there is a pure red $P_3$ anywhere in the graph,\\ Blue claims the isolated edge that added to the pure red $P_3$ completes a triangle, i.e. a nice edge.
                    \item If the degree of the vertex $w$ in Blue's graph is one,\\
                    Blue claims an edge between a pure red vertex from $X_1$ and $w$.
                    \item Otherwise (the degree of the vertex $w$ in Blue's graph is two),\\
                    Blue claims the edge that completes the triangle.
                \end{enumerate}

            \end{enumerate}

        \item\label{2b} If $k=1$,\\
        Blue claims an edge adjacent to the nice edge and incident with a black vertex.
        Then, Blue repeatedly plays by the first rule in the following list that is applicable until he claims a new blue triangle.
                \begin{enumerate}
                    \item If there is a pure red $P_3$ anywhere in the graph,\\ Blue claims the isolated edge that added to the pure red $P_3$ completes a triangle, i.e. a nice edge.
                    \item Otherwise,\\
                    Blue claims the edge that completes the blue triangle.
                \end{enumerate}

        \item\label{2c} If $k>1$,\\
        Blue chooses a nice vertex and then a nice edge that are not in the same $RC$, and then claims an edge between them.
        Then, Blue repeatedly plays by the first rule in the following list that is applicable until he claims a new blue triangle.
                \begin{enumerate}
                    \item If there is a pure red $P_3$ anywhere in the graph,\\ Blue claims the isolated edge that added to the pure red $P_3$ completes a triangle, i.e. a nice edge.
                    \item Otherwise,\\
                    Blue claims the edge that completes the blue triangle.
                \end{enumerate}
 \end{enumerate}

 Now we will prove that when it is Blue's turn to play Stage 1, he can follow it.

 When it is Blue's turn to play by rule \ref{1}, it is clear that he can claim that edge.
 Since the strategy of Blue is to make disjoint triangles and nice edges, before he plays any of the rules, his graph consists of $t$ disjoint triangles, $k$ nice edges and isolated vertices, where $t$ or $k$ are possibly zero. Obviously, these isolated vertices have to be black or pure red.

 \begin{cl}\label{c1}
     When it is Blue's turn to play and his graph consists of disjoint triangles and isolated vertices, there are at least three pure red vertices.
    \end{cl}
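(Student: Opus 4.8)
The plan is to prove the claim by a counting argument comparing the edges in the two players' graphs against the number of vertices Red is forced to touch. First I would pin down the move parity: since it is Blue's turn and his graph consists of $t$ disjoint triangles together with isolated vertices, Blue has exactly $3t$ claimed edges, and because Red moved first and has just moved, Red has exactly $3t+1$ edges. The $3t$ vertices lying on Blue's triangles are precisely the blue vertices (the isolated vertices of Blue's graph carry no blue edge), so at most $3t$ vertices can be simultaneously red and blue.

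Next I would analyse Red's graph. By Observation~\ref{o_1} every component of Red's ${\cal CC}_{>3}$-free graph is a triangle, a $P_3$, an isolated edge, or an isolated vertex; restricting to the vertices Red actually touches removes the isolated vertices, so each surviving component is a triangle, a $P_3$, or an isolated edge. Writing $V_R$ for the number of red vertices, $E_R=3t+1$ for the number of red edges, and $s$ for the number of components that are \emph{not} triangles, one checks component by component that a triangle contributes equally many vertices and edges, while a $P_3$ or an isolated edge contributes exactly one more vertex than edge. Summing gives $V_R=E_R+s=3t+1+s$. Subtracting the at most $3t$ red-and-blue vertices, the number of pure red vertices is at least $V_R-3t=1+s$, so it suffices to prove $s\geq 2$.

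The crux is to rule out $s=1$, since the naive count only delivers $s\geq 1$ and hence merely two pure red vertices. Here I would invoke the distinguished component created by Red's first two moves: the edges $uv$ and $ui$ form a red $P_3$ on $\{u,v,i\}$ with center $u$, whose missing edge $vi$ is already Blue's. Thus Red can never close this component into a triangle, nor extend it to a fourth vertex without creating a ${\cal CC}_{>3}$; it is a \emph{permanently frozen} non-triangle component, so $s\geq 1$. If instead $s=1$ held, this frozen $P_3$ would be the only non-triangle component, so every other red component would be a triangle and $E_R=2+3a$ for some integer $a\geq 0$, whence $E_R\equiv 2\pmod 3$. But $E_R=3t+1\equiv 1\pmod 3$, a contradiction. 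Therefore $s\neq 1$, and together with $s\geq 1$ this forces $s\geq 2$; hence there are at least $1+s\geq 3$ pure red vertices, as claimed.

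The main obstacle is exactly this final parity step: elementary edge–vertex counting stalls at two pure red vertices, and the third is extracted only by combining the fact that the opening $P_3$ is frozen (its closing edge belongs to Blue) with the residue $E_R\equiv 1\pmod 3$ dictated by the move order. As a sanity check I would verify the boundary situations, in particular the first moment at which Blue's graph becomes a union of triangles and isolated vertices (when the edge $vi$ has just been absorbed into a blue triangle), to confirm that the hypotheses of the count genuinely hold there.
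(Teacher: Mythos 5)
Your proof is correct and follows essentially the same route as the paper: both arguments combine the fact that Red's opening component is a permanently frozen $P_3$ with a vertex-versus-edge count showing Red must touch at least $3t+3$ vertices while only $3t$ are blue. Your explicit component-deficiency count plus the mod~$3$ step is just an unpacked version of the paper's appeal to Observation~\ref{o1} (that a ${\cal CC}_{>3}$-free graph with $3t-1$ edges needs at least $3t$ vertices).
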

     \begin{proof}
     Blue's graph is a disjoint union of $t$ triangles and isolated vertices, so it has $3t$ blue vertices and $3t$ edges. At this moment Red's graph has $3t+1$ edges. The first $RC$ is a $P_3$ and the Red's graph without that component has $3t-1$ edges. The extremal graph that is described in Observation \ref{o1} gives the smallest number of vertices for a fixed number of edges. Therefore, a ${\cal CC}_{>3}$-free graph with $3t-1$ edges has at least $3t$ vertices. Then the Red's graph has at least $3t+3$ red vertices three more than the number of blue vertices, so at least three of them must be pure red.
     \end{proof}

    \begin{cl}\label{c2}
   When it is Blue's turn to start playing by rule 2 (his graph consists of disjoint triangles and isolated vertices), there are at most two pure red vertices in any $RC$.
    \end{cl}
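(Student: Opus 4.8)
The plan is to argue by the shape of the red component and then rule out, one shape at a time, the possibility of three pure red vertices. By Observation~\ref{o_1} every $RC$ is a triangle, a $P_3$, or an isolated edge. An isolated edge spans only two vertices, so it trivially contains at most two pure red ones. If some $P_3$-component had all three of its vertices pure red it would by definition be a pure red $P_3$; but Blue is, by hypothesis, about to play by rule~2, which means rule~\ref{1} does not apply at this instant, so there is no pure red $P_3$ in the game, and every $P_3$-component carries at most two pure red vertices. Hence the only configuration that could still violate the claim is a \emph{pure red triangle}, and the whole work goes into excluding it.

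For that I would establish the following invariant, maintained throughout Stage~1: immediately after each move of Blue, Red's graph contains no pure red $P_3$. I would prove it by induction on Blue's moves. For the base case, right after Blue's first strategic move the only non-trivial red component is the path $v\!-\!u\!-\!i$, whose endpoints $v$ and $i$ are blue, so it is not a pure red $P_3$. For the inductive step, assume the invariant holds after some Blue move. On his turn Red adds a single edge, and a new $P_3$-component can arise only by attaching an isolated vertex to an isolated edge; so Red produces at most one new $P_3$, and since none was present before his move, at the start of Blue's next move at most one pure red $P_3$ exists. Because that $P_3$ is pure red, its two endpoints are untouched by Blue and the edge completing it to a triangle is unclaimed; Blue's strategy — rule~\ref{1} at top level, and the ``pure red $P_3$ first'' clause that heads each subroutine of rules~\ref{2a}, \ref{2b} and \ref{2c} — then claims that nice edge, turning the two endpoints blue. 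Since a blue move never alters Red's graph and only adds blue vertices, it cannot create a pure red $P_3$; hence none survives, re-establishing the invariant.

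The invariant gives what I need. To complete a pure red triangle Red would have to add the third edge to a $P_3$ all of whose vertices are pure red, so a pure red $P_3$ would have to be present at the start of that very Red move; the invariant forbids this, so Red can never complete a pure red triangle. As a pure red vertex can only lose its purity and never regain it, a triangle that is not pure red at completion never becomes one, and therefore no pure red triangle ever exists during Stage~1. In particular there is none at the moment of the claim, so no $RC$ is a pure red triangle, and together with the isolated-edge and $P_3$ cases this shows every $RC$ contains at most two pure red vertices.

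The delicate point is the bookkeeping in the inductive step: one must verify that at most one pure red $P_3$ can be present when Blue is to move — so that a single nice edge restores the invariant — and that Blue indeed prioritises destroying it even while in the middle of a triangle-building subroutine. Both facts hinge on the ``pure red $P_3$ first'' clause heading every rule list, and this is exactly where the argument has to be checked with care.
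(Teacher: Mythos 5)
Your proof is correct and follows essentially the same route as the paper's: rule out the pure red $P_3$ case via rule~\ref{1}, and rule out the pure red triangle by observing that Red can create at most one pure red $P_3$ per move and Blue's strategy immediately destroys it with a nice edge. Your explicit invariant (``no pure red $P_3$ survives a Blue move'') is just a more carefully formalized version of the paper's one-line justification of that same fact.
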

     \begin{proof}
     Suppose there is a $RC$ with three pure red vertices.\\
     It cannot be a pure red triangle, otherwise one move before Red made a pure red triangle, he had to have a pure red $P_3$. Realising that just one pure red $P_3$ can be made per move, Blue responds by claiming the nice edge following his strategy.\\
     Therefore, the $RC$ has to be a pure red $P_3$, but than rule \ref{1} would be activated and Blue would claim a nice edge and that leads to a contradiction.
     \end{proof}

 Note that Red cannot claim any edge adjacent to a nice edge or incident to a nice vertex, also he cannot claim any edge between two $RC$, otherwise he would create a ${\cal CC}_{>3}$.
 Every $RC$ has to be an isolated edge, a $P_3$ or a triangle, by Observation \ref{o_1}.

 Now we will show that at the moment when Blue has to play by one of the rules \ref{2a}, \ref{2b} or \ref{2c} he can do that, and in particular he claims a new triangle.

 \begin{enumerate}
     \item  $k=0$. \\
     Note that in this moment there are at least three pure red vertices by Claim \ref{c1}. If Blue claimed a nice edge in the middle of rule \ref{2a}, in $RC$ where Blue has made a nice edge, there is still one pure red vertex, precisely a nice vertex.
     \begin{enumerate}
         \item  There are pure red vertices in at least three $RC$.\\
         It is clear that Blue can claim the edge $xy$.
         When it is Blue's turn to claim the edge $xz$, it is available for him because, as mentioned above, $z$ must exists even if Blue has made a nice edge in the meantime. Clearly, Blue can follow rule \ref{2a} to the end and claim the triangle $xyz$.
         \item  The pure red vertices are in two $RC$.\\
         We know that in one $RC$ there can be at most two pure red vertices by Claim \ref{c2}. Also, there are at least three of them as proven in Claim \ref{c1}, therefore one $RC$ has to have precisely two pure red vertices, so $X_1$ exists. Then $X_2$ has at least one pure red vertex and it is clear that Blue can follow rule \ref{2a} to the end and claim a new triangle, by same reasoning as above.

     \end{enumerate}

     Note that it is not possible that all pure red vertices are in one $RC$ because of Claims \ref{c1} and \ref{c2}.

     \item  $k=1$. Clearly, Blue can follow rule \ref{2b}.

     \item  $k>1$.
     First we prove the following claim needed to complete the proof for Stage 1.

     \begin{cl}\label{c3}
    If there are $k>1$ nice edges, then there are at least $k-1$ nice vertices.
    \end{cl}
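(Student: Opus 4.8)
The plan is to put nice edges and nice vertices in correspondence via a single red vertex and then control the correspondence by a potential function. To each nice edge I associate its \emph{apex}, namely the centre of the (unique) red $P_3$ that this edge completes into a triangle. The first thing I would record is that the apex is well defined and that distinct nice edges have distinct apexes: since Red never builds a component on more than three vertices, every red vertex has red-degree at most two, so a red $P_3$ has a unique centre and no vertex can be the centre of two different red $P_3$'s; hence the map ``nice edge $\mapsto$ apex'' is injective. By the definition of a nice vertex, a nice vertex is exactly an apex that is still pure red. Writing $N_e$ and $N_v$ for the current numbers of nice edges and nice vertices, the difference $\Phi := N_e - N_v$ is therefore the number of nice edges whose apex has lost pure-red status, and the claim is precisely the assertion that $\Phi \le 1$ at every moment of the game.

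I would establish $\Phi \le 1$ as an invariant, checking the opening position of Case~5 and then verifying that no single move of either player raises $\Phi$. For the base case, right after Red's second move the only relevant configuration is the triangle $uvi$, whose blue edge $vi$ and pure-red apex $u$ contribute at most one to $\Phi$; this initial triangle is exactly the reason the bound is $k-1$ rather than $k$.

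The per-move checks split into the two players. Red can never claim an edge incident to an apex, since an apex already has red-degree two and a third red edge there would create a red component on four vertices, which Red avoids; thus Red cannot convert a nice vertex into a non-nice one, and as Red claims no blue edges he cannot disturb the blue side at all. The only interaction of a Red move with the bookkeeping is completing a red $P_3$ underneath an already-present isolated blue edge, which creates one nice edge together with one necessarily pure-red apex and leaves $\Phi$ unchanged. For Blue, creating a nice edge (rule~\ref{1} and its in-loop variants) is placed on a \emph{pure red} $P_3$, so a nice edge and a pure-red apex appear together and $\Phi$ is unchanged; a triangle built by rule~\ref{2b} destroys a nice edge's isolation but keeps its apex pure red, so $\Phi$ decreases. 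The delicate case is rule~\ref{2c}: Blue joins an apex $q$ of one nice edge to the two ends of a different nice edge $f$, which removes the nice vertex $q$ (raising $\Phi$ by one) but simultaneously makes $f$ non-isolated so that $f$ stops being a nice edge (lowering $\Phi$ by one), while the apex of $f$ stays pure red and remains a nice vertex; the two effects cancel. Since each rule~\ref{2c} application strictly lowers the nice-edge count, Blue cannot repeat it before manufacturing fresh, balanced nice edges, so widowed nice edges cannot accumulate. Together with the base case this yields $\Phi \le 1$, i.e. $N_v \ge N_e - 1 = k-1$.

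I expect the main obstacle to be the rule~\ref{2c} step: one must keep the notion of ``nice edge'' honest — an isolated blue edge completing a red $P_3$, which ceases to count once it becomes a side of a larger blue triangle — so that the consumption of the apex $q$ is \emph{exactly} paid for by the loss of $f$. A secondary point requiring care is the interaction with the in-loop sub-rules, where a pure red $P_3$ may appear midway through building a triangle; there I would verify that such interruptions only ever produce balanced nice-edge/apex pairs and hence leave $\Phi$ undisturbed.
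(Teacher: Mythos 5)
Your setup (the injective ``apex'' map and the potential $\Phi = N_e - N_v$) is sound, and your bookkeeping for rules \ref{1}, \ref{2b} and \ref{2c} agrees with the paper's. But the central assertion of your argument --- that no single move of either player raises $\Phi$ --- is false, and it fails exactly at the spot you defer to as a ``secondary point requiring care.'' In rule \ref{2a}, Blue may claim a nice edge in the middle of building his triangle $xyz$; this does create a balanced edge/apex pair, as you say, but the construction then resumes, and the third vertex $z$ is prescribed to be \emph{a pure red vertex of the relevant $RC$} --- which can be precisely the apex just created, since after the nice edge is claimed the apex is the only pure red vertex left in that component. Claiming $xz$ then turns that apex blue, so a single pass through rule \ref{2a} can have net effect $+1$ nice edge and $+0$ nice vertices, i.e.\ it raises $\Phi$ by one. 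This is exactly the case the paper's proof singles out (``in rule \ref{2a} it can happen that just the number of nice edges increases by one'').

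Your base case is also misattributed: the opening blue edge $vi$ is not a nice edge, because the red $P_3$ on $v$, $u$, $i$ is not \emph{pure} red (both of its endpoints are blue), so the game starts with $\Phi = 0$, not with the slack of one you ascribe to the initial triangle. The missing idea that actually rescues the bound is the one the paper uses: rule \ref{2a} is invoked only when $k = 0$, at which moment $\Phi = -N_v \le 0$; since one pass through rule \ref{2a} raises $\Phi$ by at most one and every other rule leaves $\Phi$ unchanged or lowers it, the bound $\Phi \le 1$ holds whenever $k > 1$. Without this observation your invariant does not close; with it, your argument becomes essentially the paper's rule-by-rule case analysis.
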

     \begin{proof}
     We will go through the whole strategy of Blue to determine when the numbers of nice vertices and nice edges are changing.\\
     In rule \ref{1} both numbers increase by one.\\
     In rule \ref{2a} it can happen that just the number of nice edges increases by one, that both numbers increase by one or that there is no change.\\
     In rule \ref{2b} the number of nice edges decreases by one.\\
     In rule \ref{2c} both numbers decrease by one.\\
     Therefore the only way to make the number of nice vertices less than the number of nice edges for one is using rule \ref{2a}. That can happen only once because Blue will not use rule \ref{2a} again as long as $k \neq 0$.
     \end{proof}
     Claim \ref{c3} ensures that Blue can follow rule \ref{2c} to the end and claim a new triangle.\end{enumerate}

Note that during Stage 1, Blue's graph consists of $t$ disjoint triangles, $k$ nice edges and some isolated vertices.

 When there are no more moves in Stage 1, clearly there are no more black vertices in the game.

\medskip
{\bf Stage 2.} In this stage we keep the structure and all the rules from Stage 1, we just change the following two rules:
\begin{itemize}
    \item[(\ref{2aii})]  If the pure red vertices are in exactly two $RC$ and $k=0$,\\
    Blue claims a $P_3$ taking just pure red vertices.

    \item[\ref{2b}.]  If $k=1$,\\
    Blue claims an edge adjacent to the nice edge and incident with a pure red vertex, that is not the nice vertex from the same $RC$. Then, in the following move he claims the edge that together with the nice edge and the edge that he just claimed closes a triangle.
\end{itemize}

We will prove that when Blue can no longer follow his strategy, he has already won.
Note that at that point every isolated vertex in Blue's graph is pure red.

When the Blue's strategy tells him to play by rule \ref{1}, that means that in his last move Red joined an isolated edge and a black vertex and made a pure red $P_3$. Therefore, this move can only be the first move of Blue in Stage 2, because at that point there are no more black vertices. Therefore, we conclude that it is not possible for Red to claim a pure red triangle in this stage, and Blue will never claim a nice edge while performing the rules \ref{2a}, \ref{2b} or \ref{2c}.

 \begin{cl}\label{c4}
    As long as Red has not already lost, Blue can play by one of the rules \ref{2a}, \ref{2b} or \ref{2c}.
\end{cl}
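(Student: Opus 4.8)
The plan is to establish the claim as a pure executability statement: assuming Red has not already lost --- equivalently, that Red's graph is still ${\cal CC}_{>3}$-free --- I would show that at the start of each block of Stage~2 the vertices named by the applicable rule exist and the edges it prescribes are free and stay free until Blue claims them. Since rules \ref{2a}, \ref{2b}, \ref{2c} are separated only by the value $k\in\{0,1,{>}1\}$, one of them always has its precondition met, so nothing needs to be checked there. The key simplification, already recorded above, is that no pure red $P_3$ can appear in Stage~2; hence in each of \ref{2a}, \ref{2b}, \ref{2c} the first inner sub-rule (the one prescribing a nice edge) is permanently dormant, and every block reduces to the clean construction of one triangle --- or, in the degenerate branch of rule \ref{2aii}, of one $P_3$ --- by two or three consecutive Blue moves.

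First I would isolate an availability principle. As long as Red's graph is ${\cal CC}_{>3}$-free, Red can claim neither an edge joining two distinct $RC$ nor an edge from a pure red vertex to a vertex in a different $RC$, since either move merges components into one on more than three vertices and loses the game for Red. Conversely Blue carries no edge at a pure red vertex (in particular at a nice vertex) and exactly one edge, the nice edge itself, at a nice-edge endpoint. Every edge the rules instruct Blue to claim is of this cross-component type, so it is unclaimed by both players. The same argument shows that the unique edge closing a triangle cannot be stolen by Red between Blue's two relevant moves without creating a ${\cal CC}_{>3}$; therefore once Blue starts a block he is guaranteed to complete it.

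Then I would supply the named vertices case by case. For $k=0$ Blue's graph is a union of triangles and isolated vertices, so Claim~\ref{c1} gives at least three pure red vertices and Claim~\ref{c2} allows at most two per $RC$; they therefore lie in at least three $RC$, in which case rule \ref{2ai} builds the triangle $xyz$ across three distinct components, or in exactly two $RC$ with two in one and at least one in the other, in which case rule \ref{2aii} builds the prescribed $P_3$ whose two edges both cross between these two components and are hence free. For $k>1$, Claim~\ref{c3} yields at least $k-1\ge 1$ nice vertices while there are $k\ge 2$ nice-edge components, so a nice vertex and a nice edge in different $RC$ can be chosen and rule \ref{2c} carried out. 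For $k=1$ (rule \ref{2b}) what is needed is a pure red vertex other than the single nice vertex; its existence follows from the counting behind Claims~\ref{c1}--\ref{c3}, namely that at Blue's turn $e(\text{Blue})=e(\text{Red})-1$ while, by Observation~\ref{o1}, Red's ${\cal CC}_{>3}$-free graph has at least as many vertices as edges, which forces surplus red vertices that are not touched by Blue.

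I expect the main obstacle to be precisely this counting for $k=1$ together with the two-component branch of rule \ref{2aii}: one must show that the failure to find the prescribed pure red vertex can only occur when Red has already been forced into a ${\cal CC}_{>3}$, so that the dichotomy ``either Blue can follow the rule or Red has lost'' is exact. Making this quantitative --- carrying the balance $e(\text{Blue})=e(\text{Red})-1$ through the half-built triangle and reading the count of pure red vertices off Observation~\ref{o1}, as in the proof of Claim~\ref{c1} --- is the crux; with it in hand the availability checks above finish the proof.
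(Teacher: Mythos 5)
Your overall architecture matches the paper's: the cross-component edges prescribed by the rules are free and cannot be stolen by Red without creating a ${\cal CC}_{>3}$; rules \ref{2ai} and \ref{2c} go through exactly as in Stage~1 via Claims \ref{c1}, \ref{c2} and \ref{c3}; and no new pure red $P_3$ arises in Stage~2, so the inner ``nice edge'' sub-rules are dormant. However, for the two cases you yourself flag as the crux --- the two-component branch of rule \ref{2aii} and rule \ref{2b} --- you stop at announcing that a counting argument should work, and the count you sketch is not strong enough. For $k=1$ the balance $e(\mathrm{Blue})=e(\mathrm{Red})-1$ together with Observation~\ref{o1} yields only \emph{one} surplus pure red vertex (Blue has $3t+1$ edges and $3t+2$ blue vertices while Red touches at least $3t+3$ vertices), and that single vertex may well be the nice vertex of the unique nice edge's own $RC$ --- exactly the vertex the modified rule \ref{2b} forbids Blue to use. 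So your availability principle does not close this case.

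The missing ingredient, which is the actual content of the paper's proof, is that whenever the prescribed pure red vertex fails to exist the game is already decided: one must lower Red's ceiling below the naive bound of Observation~\ref{o1} by exhibiting components of Red's graph that are permanently frozen as $P_3$'s (the first component, and, in rule \ref{2b}, the red $P_3$ whose closing edge is the blue nice edge), and then compare this ceiling with the number of edges Blue already holds. Concretely, in the degenerate case of rule \ref{2b} one gets $n=3t+3$, Blue at $n-2$ edges and Red at $n-1$ edges, while the two permanent $P_3$ components cap Red at $n-2$, so Red has already lost; in rule \ref{2aii} one similarly rules out fewer than three pure red vertices and shows that with three or four of them the blue $P_3$ brings Blue to $n-1$ or $n-2$ edges, finishing the game. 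Without this explicit case analysis the dichotomy ``either Blue can follow the rule or Red has already lost'' is asserted rather than proved, so the proposal has a genuine gap precisely where the claim is nontrivial.
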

\begin{proof}

     When the Blue's strategy tells him to play by rule \ref{2ai}, \\
     Blue can claim a new triangle in the following three moves, the argument is the same as in Stage 1. After that, his graph will have the same structure as in the beginning of Stage 2 and he continues to play.

     When the Blue's strategy tells him to play by rule \ref{2aii}, \\
     note that it is not possible to have less than three pure red vertices here, because in that case Blue would have already won with at most $n-2$ edges in his graph, $3m+1 \leq n \leq 3m+2$, for some integer $m$ (Blue's graph consists of triangles and isolated vertices). Red's graph cannot have more than $n-2$ edges, because of his first component and Observation \ref{o1}.

    These pure red vertices are the only vertices that are not blue, and by Claim \ref{c2} we know that in two $RC$ we have at most four pure red vertices. Clearly, Blue can claim a $P_3$ in the following two moves. Now, we show that after these moves Blue wins.
    \begin{itemize}
         \item If there are four pure red vertices,
         we know that $n=3m+1$, for some integer $m$, and after making a $P_3$ Blue's graph has $n-2$ edges. At the same time Red's graph cannot have more than $n-2$ edges, because of his first component and Observation \ref{o1}, so Blue wins.

        \item If there are three pure red vertices.
         We know that $n=3k$, for some integer $k$, and after making a $P_3$ Blue's graph has $n-1$ edges. At the same time Red's graph cannot have more than $n-1$ edges, because of his first component and Observation \ref{o1}, so Blue wins.
    \end{itemize}
     \vspace{0.2cm}

     When Blue's strategy tells him to play by rule \ref{2b}, \\
    there is exactly one nice edge and if there is a pure red vertex, that is not the nice vertex from the same $RC$, in his following two moves Blue can claim a triangle as advised by the strategy. Then, his graph has the same structure as in the beginning of Stage 2 and he continues to play.

    Otherwise, if the only vertex that is not blue is the nice vertex from the same $RC$, then Blue has already won. This is true because $n=3m$, for some integer $m$, and the Blue's graph has $n-2$ edges ($3t+1$, where $t$ is the number of blue triangles), while Red's graph cannot have more edges because it contains at least two $P_3$.

    Note that it is not possible that there is no pure red vertices, because in that case Blue's graph would have $n-1$ edges, and Red's graph cannot have more then $n-3$ edges, for the same reason as above ($n=3m+2$, for some integer $m$).

    When Blue's strategy tells him to play by rule \ref{2c}, \\
    the argument is the same as in Stage 1, and in his following two moves Blue can claim a triangle. After that, his graph has the same structure as in the beginning of Stage 2 and he continues to play.
\end{proof}

This concludes the proof of Case 5, and thus also the proof of the theorem.
\qedsymbol

\section{Strong CAvoider-CAvoider games} \label{s:caca}

\vspace*{3mm}

{\bf Proof of Theorem \ref{TH004}:}
In order not to lose each player must keep the maximal degree in his graph at most two. Furthermore, the rules of the game dictate that both players must maintain their respective graphs connected throughout the game. Hence, as long as no one loses the game, the graph of each player must be a path or a cycle.
Note that if Blue can claim a Hamiltonian cycle he will win, because Red will lose in the following move.

We will show that Blue can win.
 In the beginning, we have a graph $G$ with $n$ isolated vertices, and Red claims an edge, let us denote it by $uv$. Then Blue claims an edge that is not adjacent to the red one, let us denote it by $rt$. In the following move Red has two options, up to isomorphism, for choosing an edge $e=xy$, which will be our two cases.

\medskip
 \noindent{\bf Case 1.} Vertex $x$ is red and $y$ is blue, w.l.o.g. $x=u$ and $y=t$.

  We again apply the strategy stealing argument, assuming that after his second move Red has a strategy $S$ to win the game.

After Red plays $ut$ it is Blue's turn. The graph of the game consists of one $P_4$, with two adjacent red edges and one blue edge, and isolated vertices, see Figure \ref{f9sfig1}. Before his next move, Blue imagines that he has already claimed the edge $ur$, and that Red has not claimed the edge $ut$, see Figure \ref{f9sfig2}.
Note that the edge $ur$ will remain free throughout the game, as otherwise Red would create a $S_3$ in his graph.

 \begin{figure}[ht]
    \begin{subfigure}{.5\textwidth}
    \centering
     \includegraphics[width=0.6\linewidth]{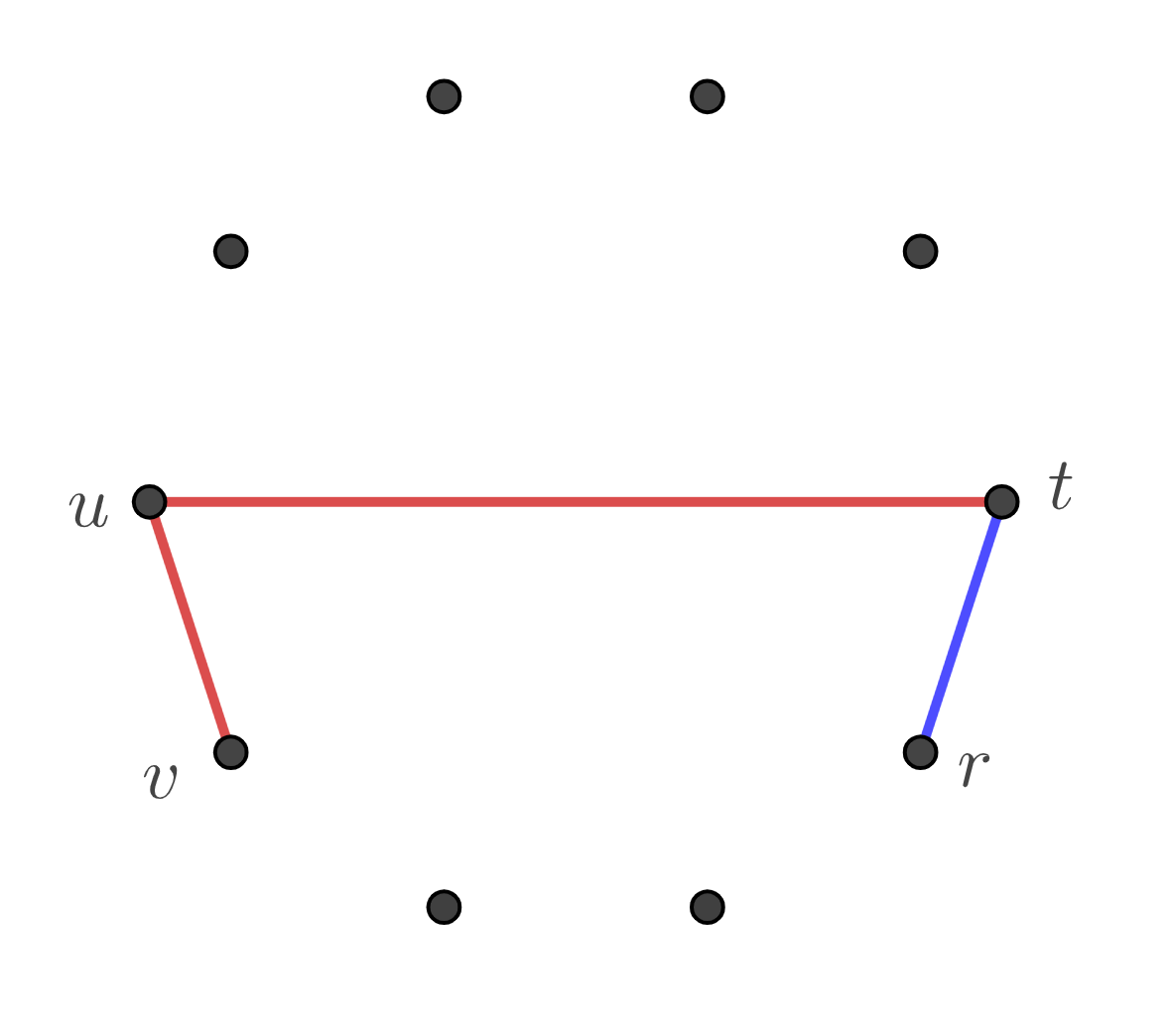}
     \caption{}
     \label{f9sfig1}
    \end{subfigure}%
    \begin{subfigure}{.5\textwidth}
     \centering
     \includegraphics[width=0.6\linewidth]{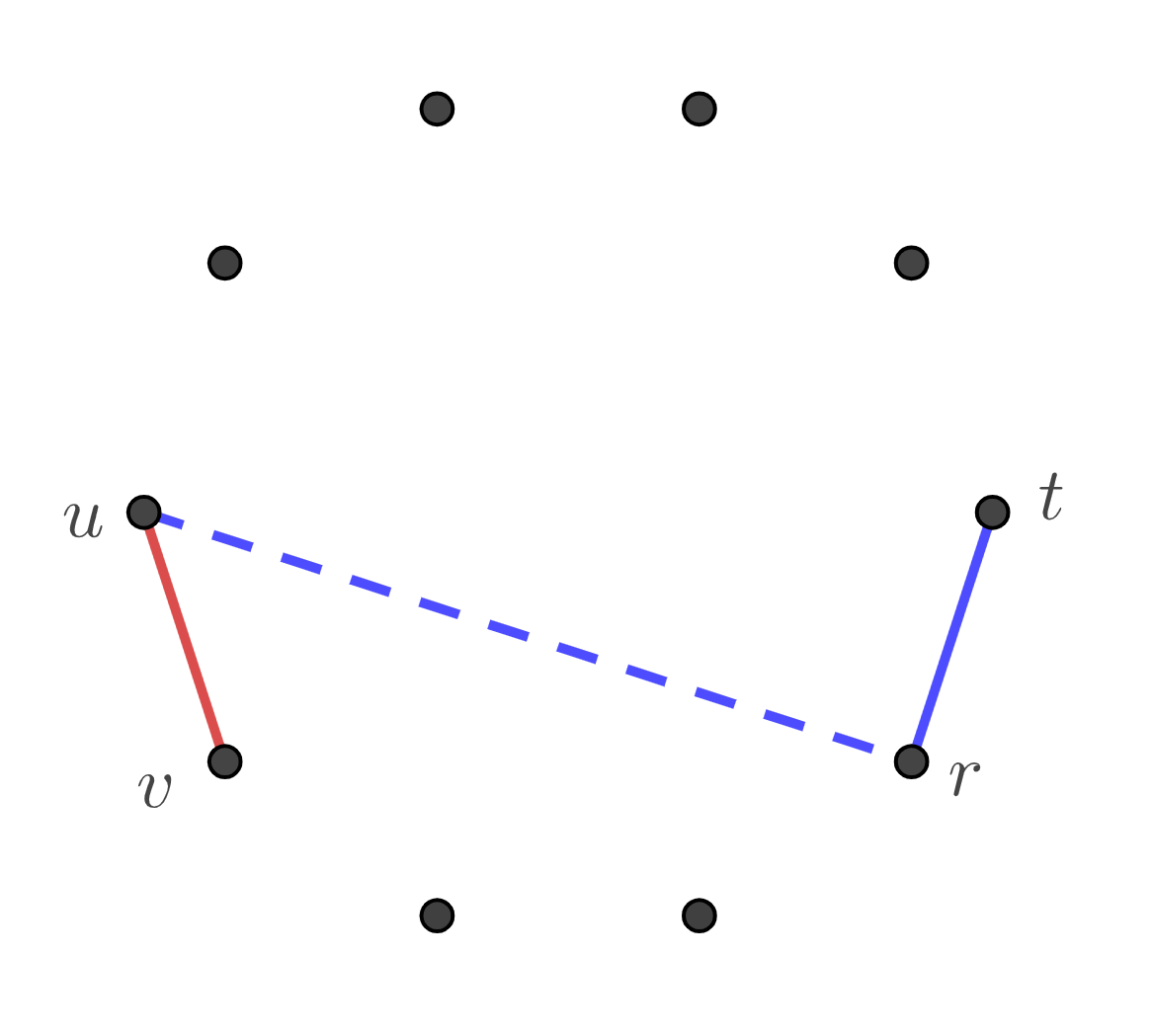}
    \caption{}
    \label{f9sfig2}
    \end{subfigure}
    \captionsetup{justification=centering}
    \caption{Case 1: (a) the graph before the second move of Blue. (b) The imagined graph before the second move of Red.}
    \label{f9}
    \end{figure}

 The imagined graph is isomorphic to the graph where the roles of the players are swapped. Blue imagines that he is the first player, and that Red claims the edge $ut$ as his second move. From now on, Blue responds as advised by the strategy $S$. Because this is a winning strategy, Blue wins the game, a contradiction.

\medskip
 \noindent{\bf Case 2.} Vertex $x$ is red and $y$ is black, w.l.o.g. $x=u$.

 We will first describe a strategy for Blue, and then we will show that he can follow it and win.
  The following move of Blue is the edge $tv$. In his third move, if it is unclaimed, Blue claims the edge $vy$, see Figure \ref{f10sfig1}.


  \begin{figure}[ht]
    \begin{subfigure}{.5\textwidth}
    \centering
     \includegraphics[width=0.6\linewidth]{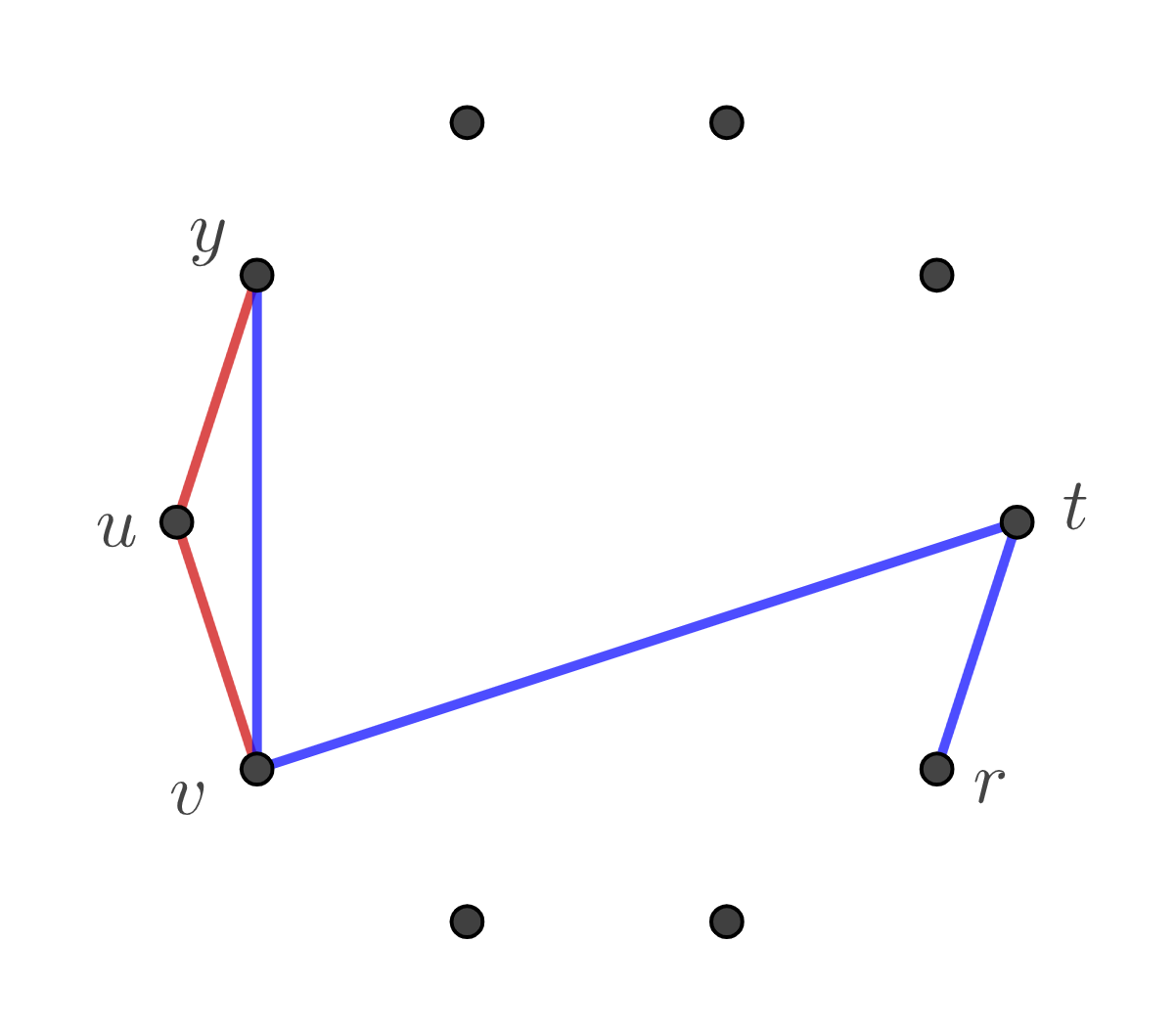}
     \caption{}
     \label{f10sfig1}
    \end{subfigure}%
    \begin{subfigure}{.5\textwidth}
     \centering
     \includegraphics[width=0.6\linewidth]{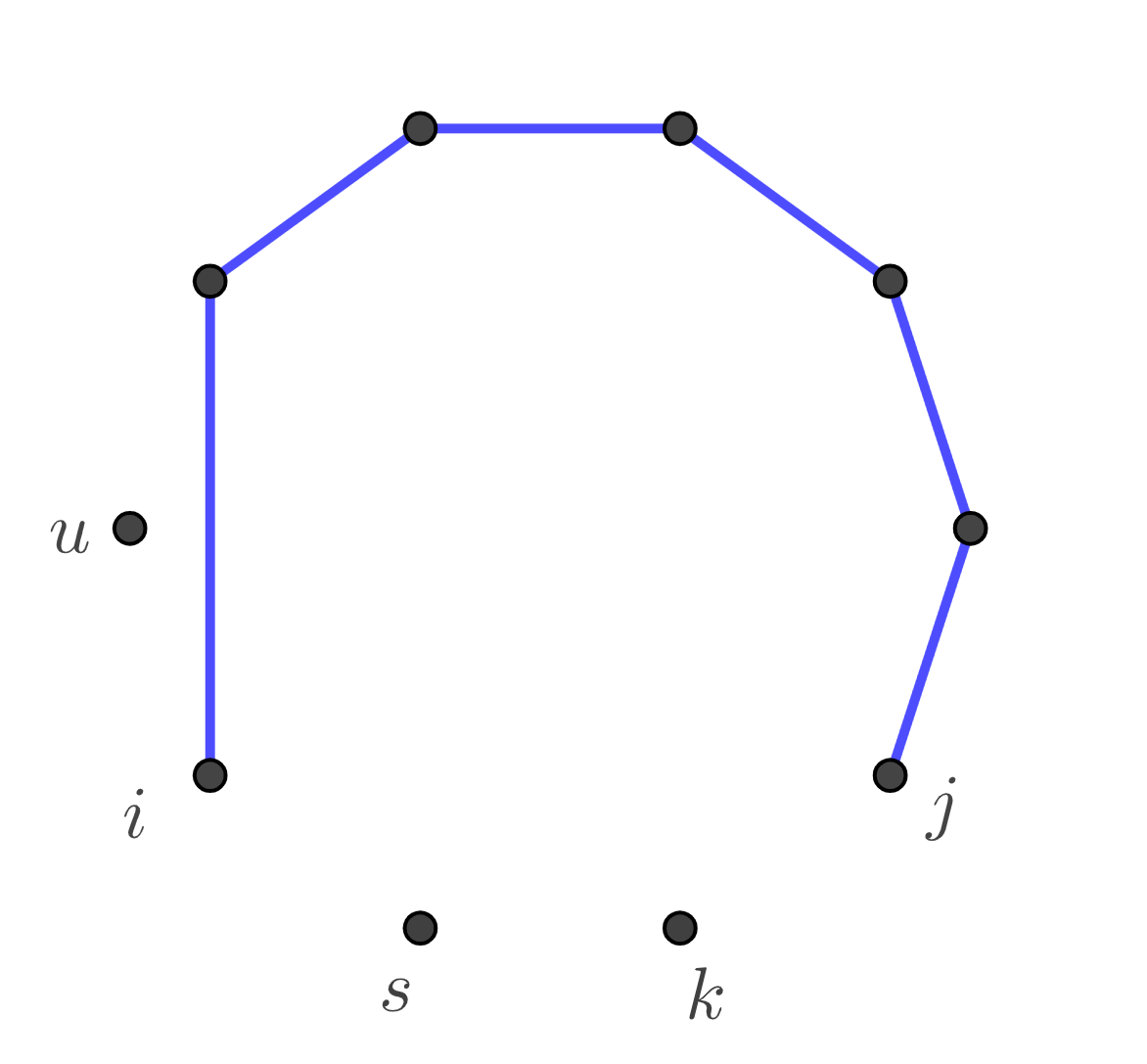}
    \caption{}
    \label{f10sfig2}
    \end{subfigure}
    \caption{Case 2: (a) Blue's graph after his third move. (b) Blue's graph after his $(n-4)th$ move.}
    \label{f10}
    \end{figure}




     If Blue's graph is not a path on $n-3$ vertices, then in his following move, Blue claims an edge incident with $y$ and one black vertex. From that point on, the strategy of Blue will be to make a Hamiltonian path on the vertex set $V \backslash \{u\}$ and then to complete it to a Hamiltonian cycle by connecting the vertex $u$ to both ends of the blue path.

     While there are more than two vertices in $V \backslash \{u\}$ that are not in the blue path, Blue extends his path by adding one of the vertices from $V \backslash \{u\}$.\\
     Then, we denote the last two isolated vertices besides $u$ in the Blue's graph with $s$ and $k$, and the ends of the Blue's path with $i$ and $j$, see Figure \ref{f10sfig2}.

     We distinguish these cases:
     \begin{enumerate}
         \item[2.1]\label{4c1} If just one of the edges $\{is, ik, js, jk\}$ is free, or two of them are, but both red edges are incident with one of the vertices $\{s, k\}$, see Figure \ref{f11sfig1}, \ref{f11sfig2},\\
         Blue will claim a free edge from the set $\{is, ik, js, jk\}$, and then the edge $sk$.
         \item[2.2]\label{4c2} If two of the edges $\{is, ik, js, jk\}$ are red and exactly one of those edges is incident with vertex $s$, see Figure \ref{f11sfig3} and \ref{f11sfig4},\\
         Blue will claim a free edge from the set $\{is, ik, js, jk\}$, and then the edge $sk$ if it is free, otherwise the remaining one from the set of edges $\{is, ik, js, jk\}$.
         \item[2.3]\label{4c3} If exactly one of the edges $\{is, ik, js, jk\}$ is red,\\
          w.l.o.g. let us assume that the edge $is$ is red, see Figure \ref{f11sfig5}. Blue claims the edge $ik$, and then in the following move he claims one of the edges $\{ks, js\}$.
         \item[2.4]\label{4c4} If there are no red edges in $\{is, ik, js, jk\}$, we have two subcases.
         \begin{enumerate}
             \item If the edge $sk$ is free,\\
             Blue claims the edge $ik$. In his following move he claims one of the edges $\{ks, js\}$.
             \item  If the edge $sk$ is red,\\
             at least one of the vertices $\{s, k\}$ has to have degree two in the Red's graph, w.l.o.g. let us assume that vertex is $k$.\\
             Then, Blue claims the edge $is$ and in the following move he claims the edge $jk$, see Figure \ref{f11sfig6}.
         \end{enumerate}

         \item[2.5]\label{4c5} Else, Blue claims any free edge that does not make an $S_3$ in his graph.
      \end{enumerate}

    \begin{figure}[ht]
    \begin{subfigure}{.333\textwidth}
    \centering
     \includegraphics[width=0.8\linewidth]{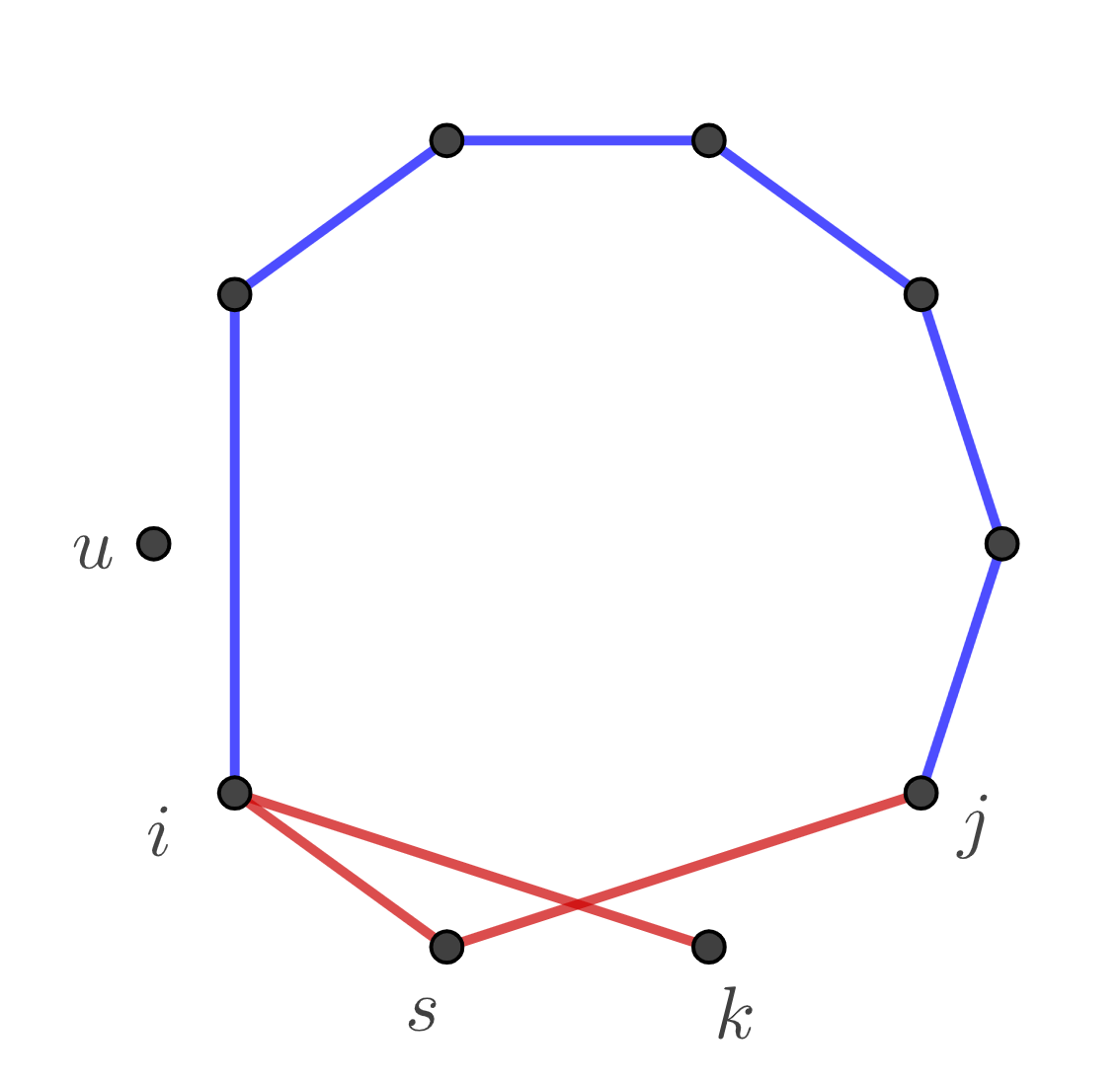}
     \caption{}
     \label{f11sfig1}
    \end{subfigure}%
    \begin{subfigure}{.333\textwidth}
     \centering
     \includegraphics[width=0.8\linewidth]{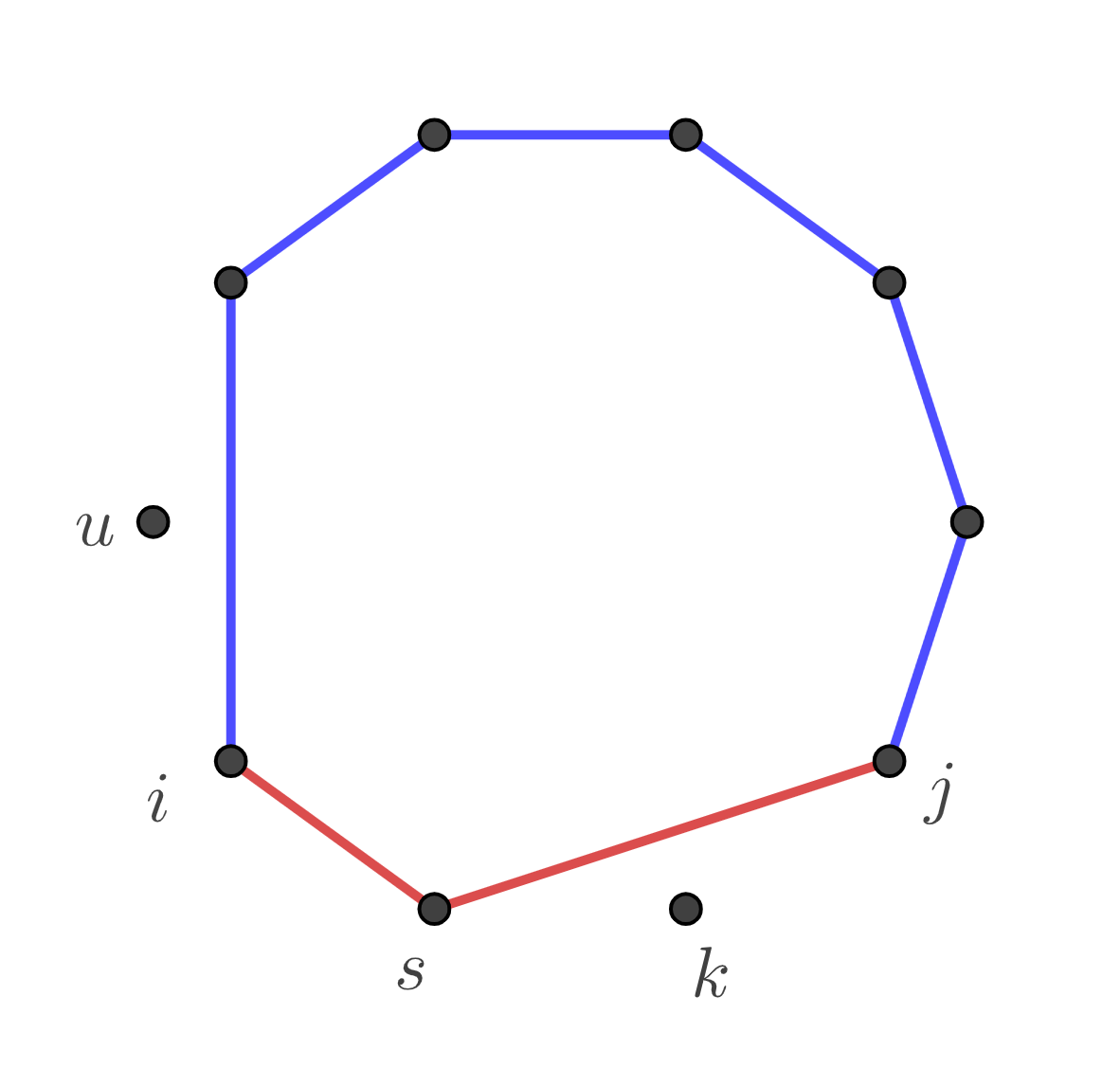}
    \caption{}
    \label{f11sfig2}
    \end{subfigure}
    \begin{subfigure}{.333\textwidth}
    \centering
     \includegraphics[width=0.8\linewidth]{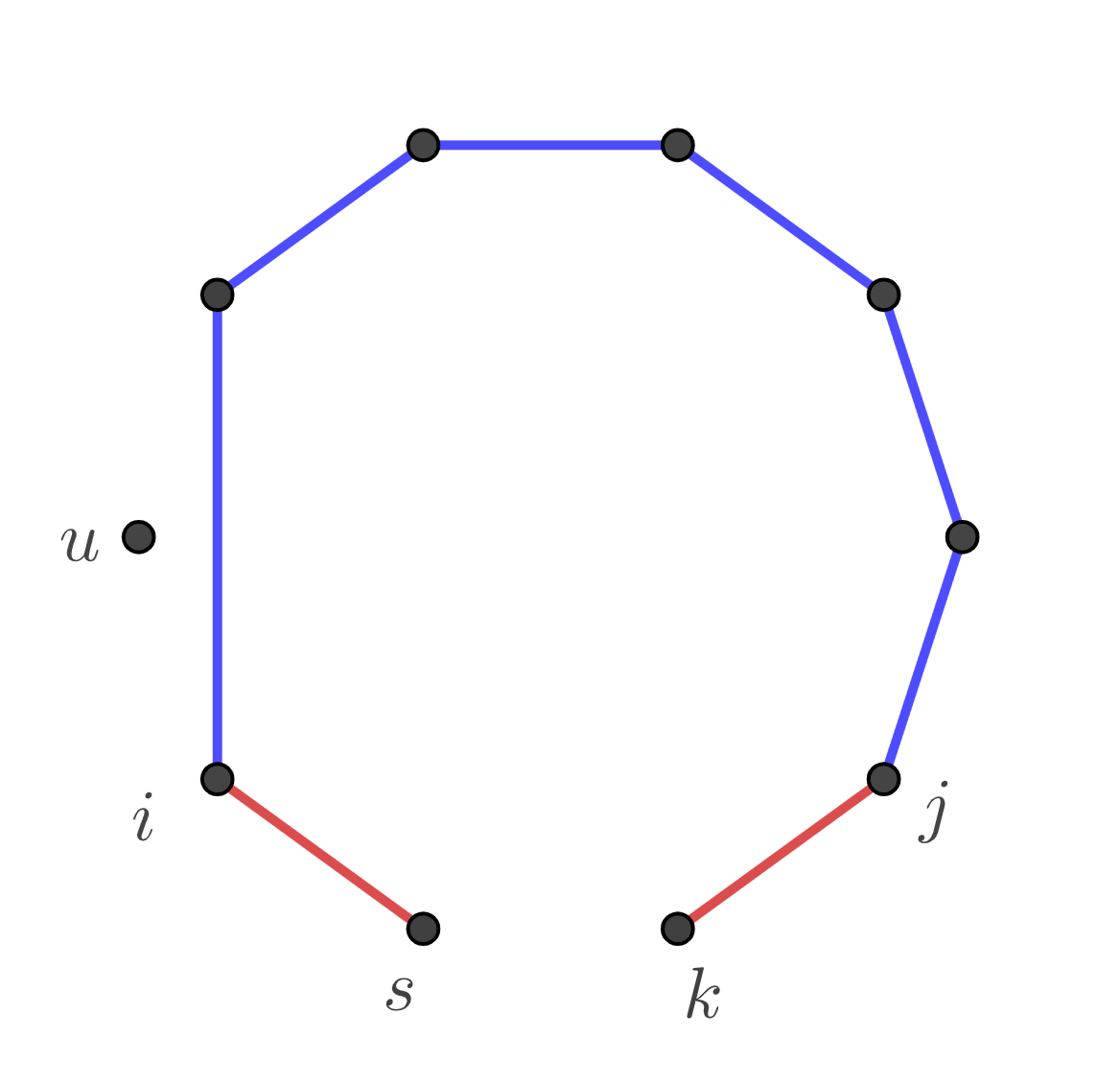}
     \caption{}
     \label{f11sfig3}
    \end{subfigure}\\
    \begin{subfigure}{.333\textwidth}
    \centering
     \includegraphics[width=0.8\linewidth]{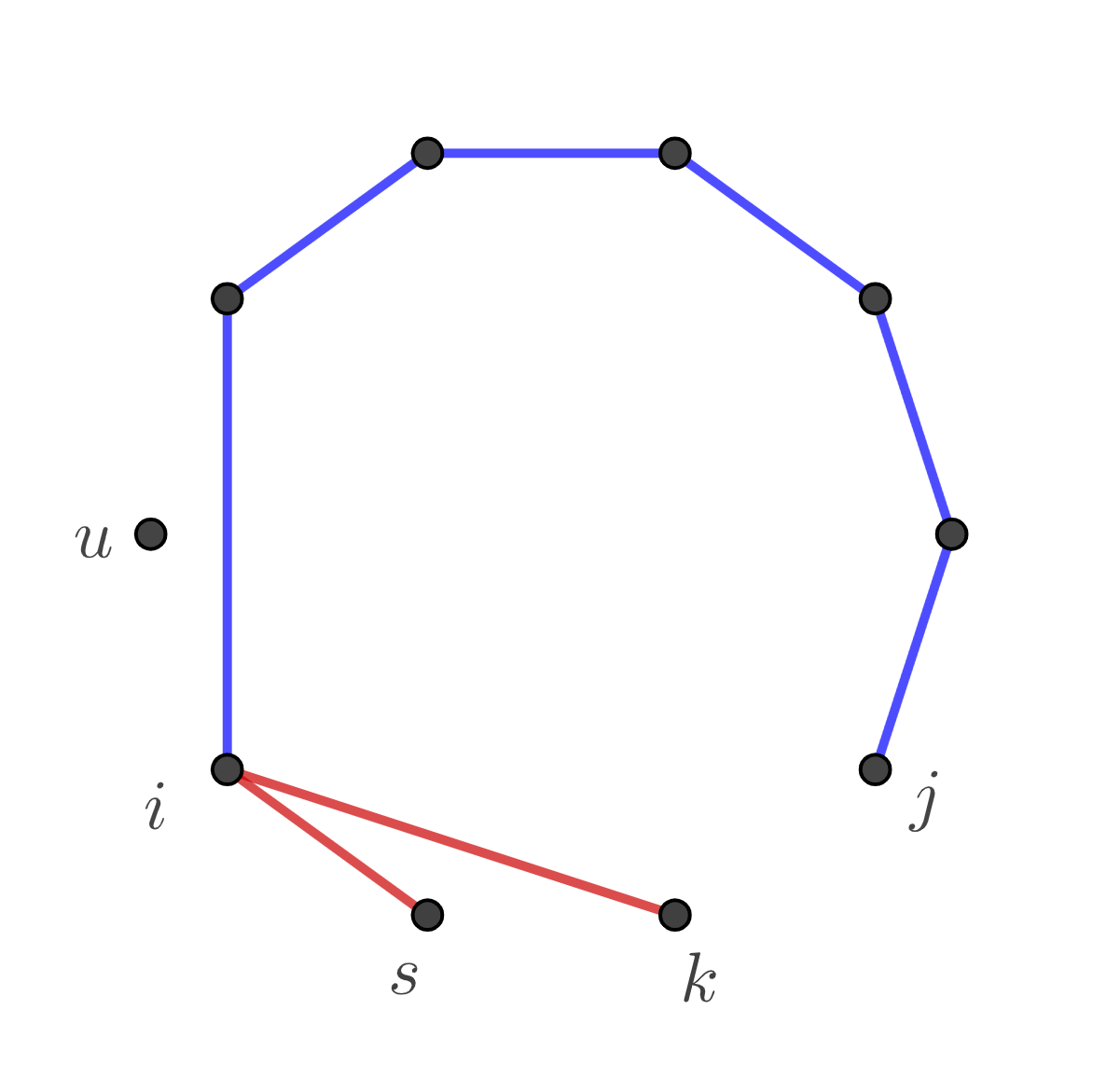}
     \caption{}
     \label{f11sfig4}
    \end{subfigure}
    \begin{subfigure}{.333\textwidth}
    \centering
     \includegraphics[width=0.8\linewidth]{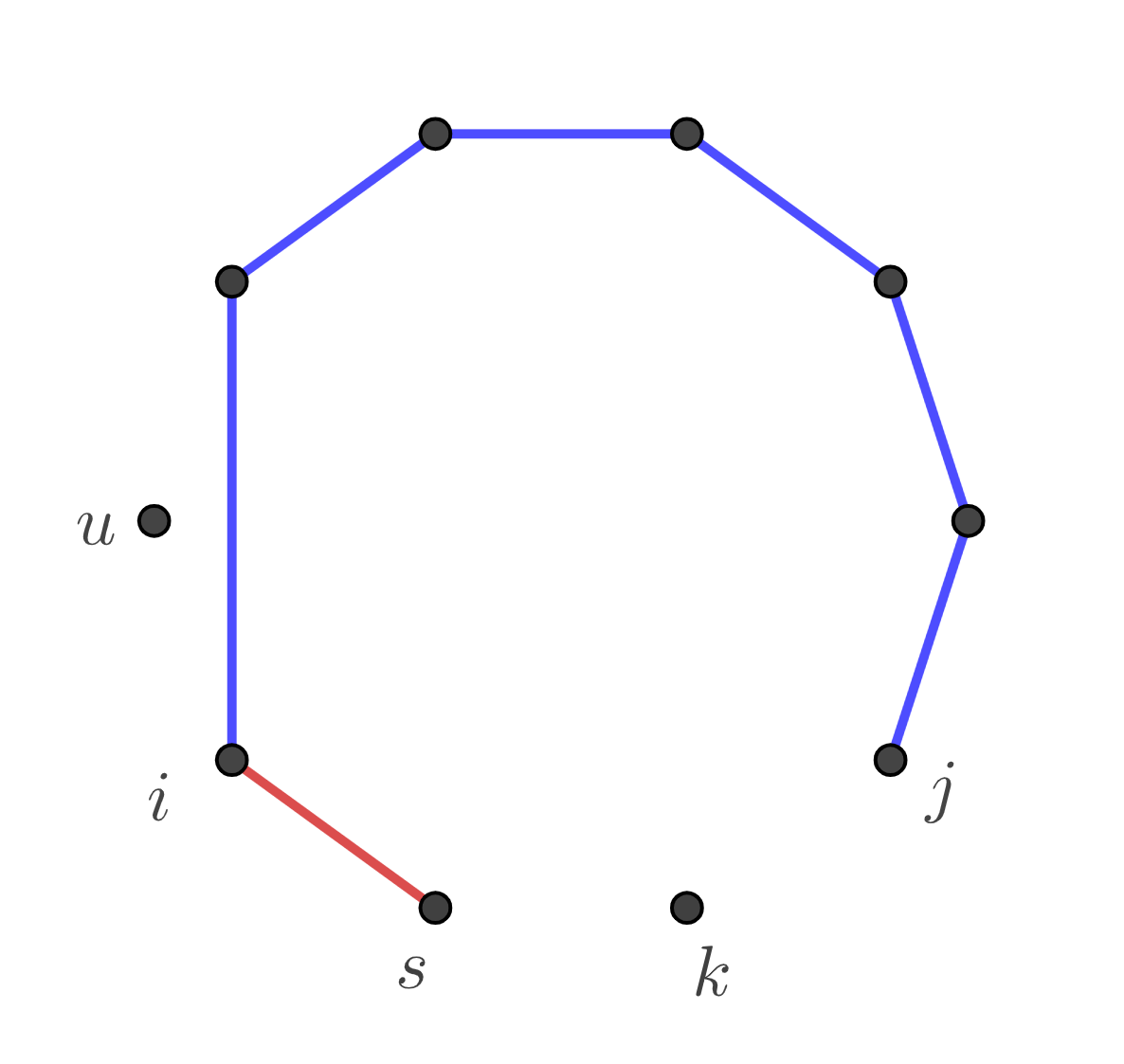}
     \caption{}
     \label{f11sfig5}
    \end{subfigure}%
    \begin{subfigure}{.333\textwidth}
    \centering
     \includegraphics[width=0.8\linewidth]{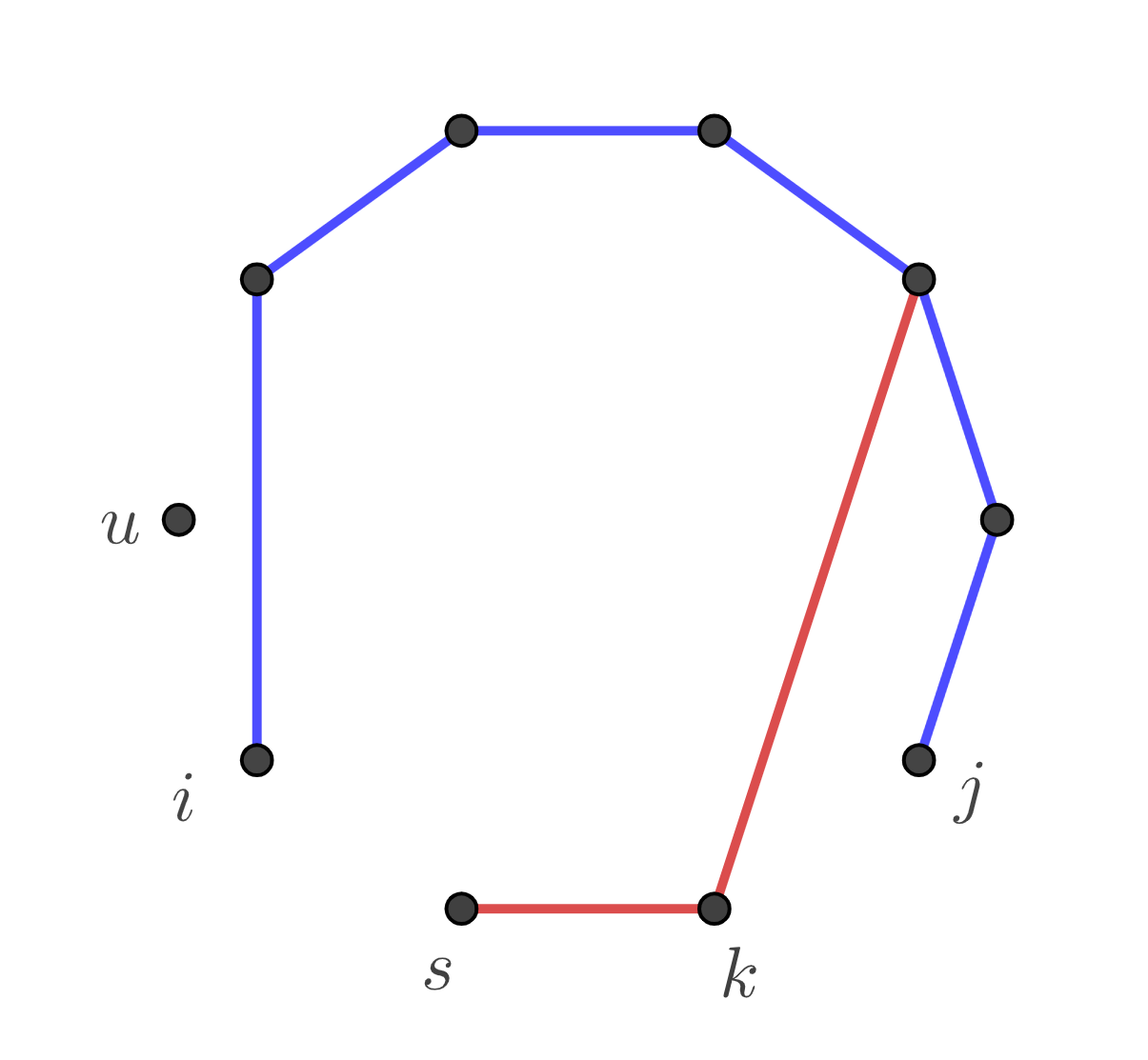}
     \caption{}
     \label{f11sfig6}
    \end{subfigure}%
    \captionsetup{justification=centering}
    \caption{Case 2: different possibilities for the graph when it is Blue's turn to play and his graph consists of a blue path on $n-3$ edges and three isolated vertices.}
    \label{f11}
    \end{figure}

    In his last two moves Blue claims the edges that complete the Hamiltonian cycle on the vertex set $V$.

\medskip
    Now we will show that when it is Blue's turn to play Case 2, Blue can follow his strategy and win.

    It is clear that Blue can claim an edge in his second move.
    If in his third move the edge $vy$ has been already claimed, that means that Red's graph has a red triangle and he will lose in his following move, so Blue can skip this move and claim the next edge and win.
    Otherwise, Blue claims the edge $vy$ and his graph at this moment consists of a $P_4$ and isolated vertices.

     \begin{cl}\label{cl41}
        If Blue's graph consists of a path disjoint from $u$ and more than three isolated vertices, Blue can extend his path by adding one of the isolated vertices from $V \backslash \{u\}$.
     \end{cl}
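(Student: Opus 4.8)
The plan is to observe that the claim reduces to a short degree count. Since Blue's graph is currently a path disjoint from $u$, let $i$ and $j$ denote its two endpoints, which are distinct because the path has at least two vertices. Extending the path at an endpoint, say by claiming an edge $iw$ or $jw$ to a vertex $w$ that is isolated in Blue's graph, keeps Blue's graph a single path: it stays connected and every vertex still has degree at most two, so no $S_3$ is created and the move is legal. Thus the only thing to verify is that at least one such extending edge to a vertex of $V \backslash \{u\}$ is still free.

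First I would count the candidate edges. Let $w_1, \dots, w_m$ be the vertices of $V \backslash \{u\}$ that are isolated in Blue's graph; since the isolated vertices of Blue's graph are exactly $u, w_1, \dots, w_m$, the hypothesis that there are more than three of them gives $m \ge 3$. The candidate edges are $iw_\ell$ and $jw_\ell$ for $1 \le \ell \le m$, which form $2m \ge 6$ distinct edges as $i \ne j$. None of them is a Blue edge, because each $w_\ell$ is isolated in Blue's graph. Recall that, as long as no one has lost, Red's graph is a path or a cycle, hence has maximum degree at most two; so Red has claimed at most two edges incident with $i$ and at most two incident with $j$. Every candidate edge is incident with $i$ or $j$, so at most $2+2=4$ of them belong to Red. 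Consequently at least $2m - 4 \ge 2$ candidate edges are free, and Blue may claim one of them to extend his path, which proves the claim.

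I do not expect a genuine obstacle here, as the argument is purely a counting one. The only point requiring care is to bound the edges blocked by Red through the maximum-degree constraint at the two path endpoints $i$ and $j$, rather than through the vertices $w_\ell$: a single $w_\ell$ could in principle be heavily used by Red, but this is irrelevant precisely because Blue has many endpoint-to-$w_\ell$ edges to choose from, and the degree bound caps Red's interference at the endpoints.
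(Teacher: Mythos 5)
Your proof is correct and is essentially the paper's argument: both proofs count candidate edges from the two endpoints of Blue's path to the $\ge 3$ isolated vertices of $V\setminus\{u\}$ and use the fact that Red, not having lost, has degree at most two at each endpoint, so not all candidates can be blocked. The paper phrases this as a contradiction (all candidates red forces two red vertices of degree three) while you count directly ($2m-4\ge 2$ free edges), but the content is identical.
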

     \begin{proof}
     Let as denote by $P$ that blue path. There are at least three vertices from $V \backslash \{u\}$ that are not in $P$. Suppose that there is no edge such that Blue can extend his path with vertex from $V \backslash \{u\}$. That means that both ends of $P$ are incident with at least three red edges and that leads to a contradiction because there are two vertices of red degree three.
     \end{proof}

     According to Claim \ref{cl41} Blue can follow his strategy while there are more than two vertices in $V \backslash \{u\}$ that are not in the blue path. After that, one of the following cases happens:

     Case 2.1 \, It is obvious that the edge $sk$ cannot be red as otherwise Red would have a vertex of degree three. Therefore, Blue can claim his following two edges and make a path on $n-1$ vertices.

     Case 2.2 \, We can have two different options as depicted in Figure \ref{f11sfig3} and \ref{f11sfig4}.
     For the first one, obviously, Blue can claim two of the edges $ik,js,sk$ and make a path on $n-1$ vertices.
     For the second one, if $sk$ has become red Blue can take the last edge from $\{js, jk\}$ and win because Red has made a triangle. Otherwise, Blue claims the edge $sk$ and makes a path on $n-1$ vertices.

     Case 2.3 \, It is evident that Blue can claim his following two moves and make a path on $n-1$ vertices.

     Case 2.4 \, In case that $sk$ is free it is obvious.\\ Otherwise if the edge $sk$ is red, it cannot be an isolated edge in the Red's graph because his graph is connected. Therefore at least one of the vertices $\{s,k\}$ has to have degree two in Red's graph. Now, it is clear that Blue can follow his strategy.

     Case 2.5 \, In this case each of the edges $\{is, ik, js, jk\}$ is red and Red has a $C_4$ in his graph. Therefore, Blue can take the edge $iu$ and win.

     If Blue has not already won, at this moment his graph consists of a path on $n-1$ vertices and one isolated vertex $u$. Both edges that connect the vertex $u$ with ends of the blue path are free and Red cannot claim them. Therefore, Blue can claim these two edges in the following two moves and create the Hamiltonian cycle on the vertex set $V$ and win the game. \hfill $\Box$

\bigskip
{\bf Proof of Theorem \ref{TH003}:}
 First we will describe a strategy for Blue and then we will show that he can follow it and thus win.
 In the beginning, we have a graph $G$ with $n$ isolated vertices. After Red claims an edge, let us denote it by $uv$, Blue claims an edge that is not adjacent to the red one, let us denote it by $rt$. Because they have to play on connected graphs, in the following move Red, up to isomorphism, has two options to choose the following edge $e=xy$, and these will be our two cases.

\medskip
{\bf Case 1.} Vertex $x$ is red and $y$ is black, w.l.o.g. $x=u$.

 The following move of Blue is the edge $ut$, see Figure~\ref{f12sfig1}.

  \begin{figure}[ht]
    \begin{subfigure}{.5\textwidth}
    \centering
     \includegraphics[width=0.6\linewidth]{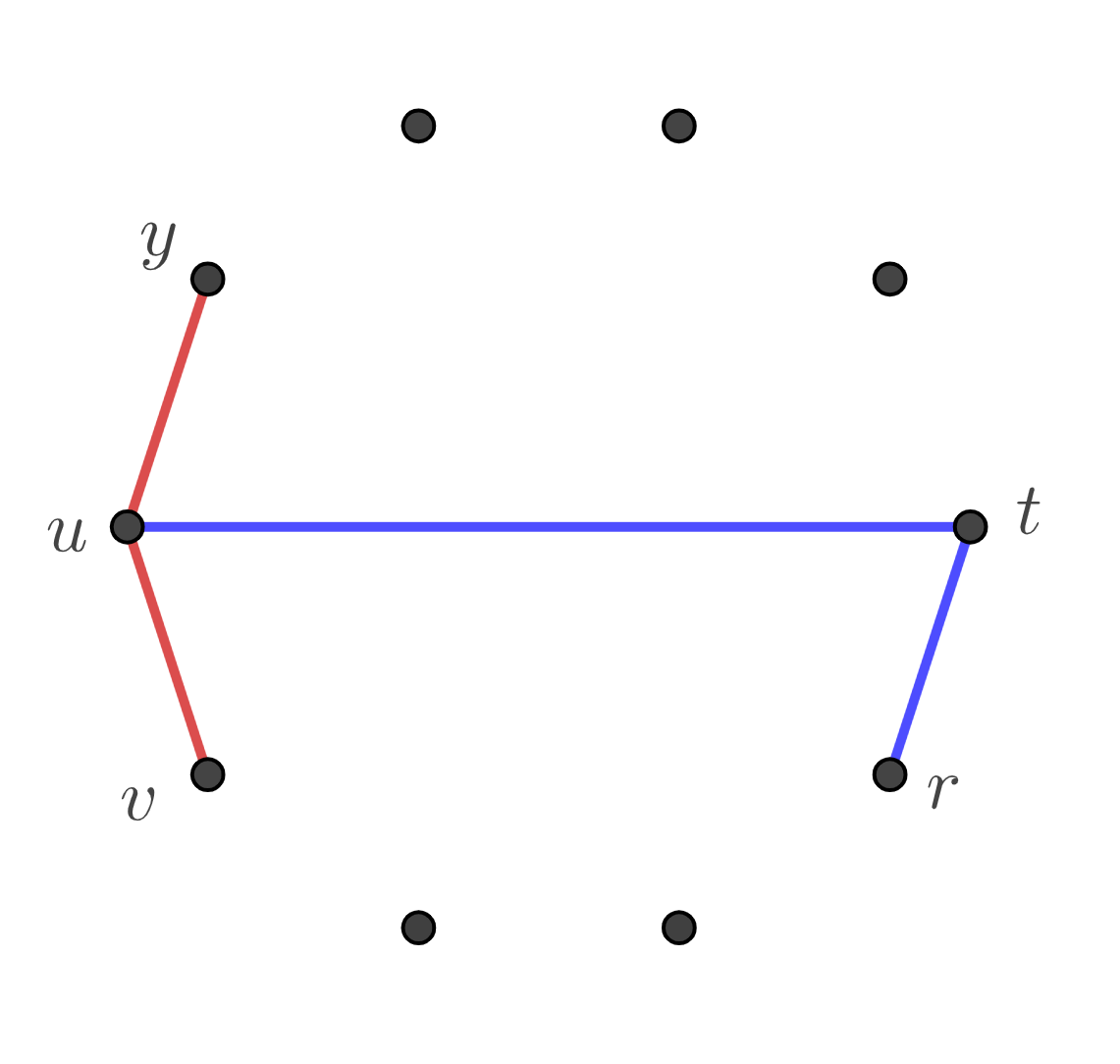}
     \caption{}
     \label{f12sfig1}
    \end{subfigure}%
    \begin{subfigure}{.5\textwidth}
     \centering
     \includegraphics[width=0.6\linewidth]{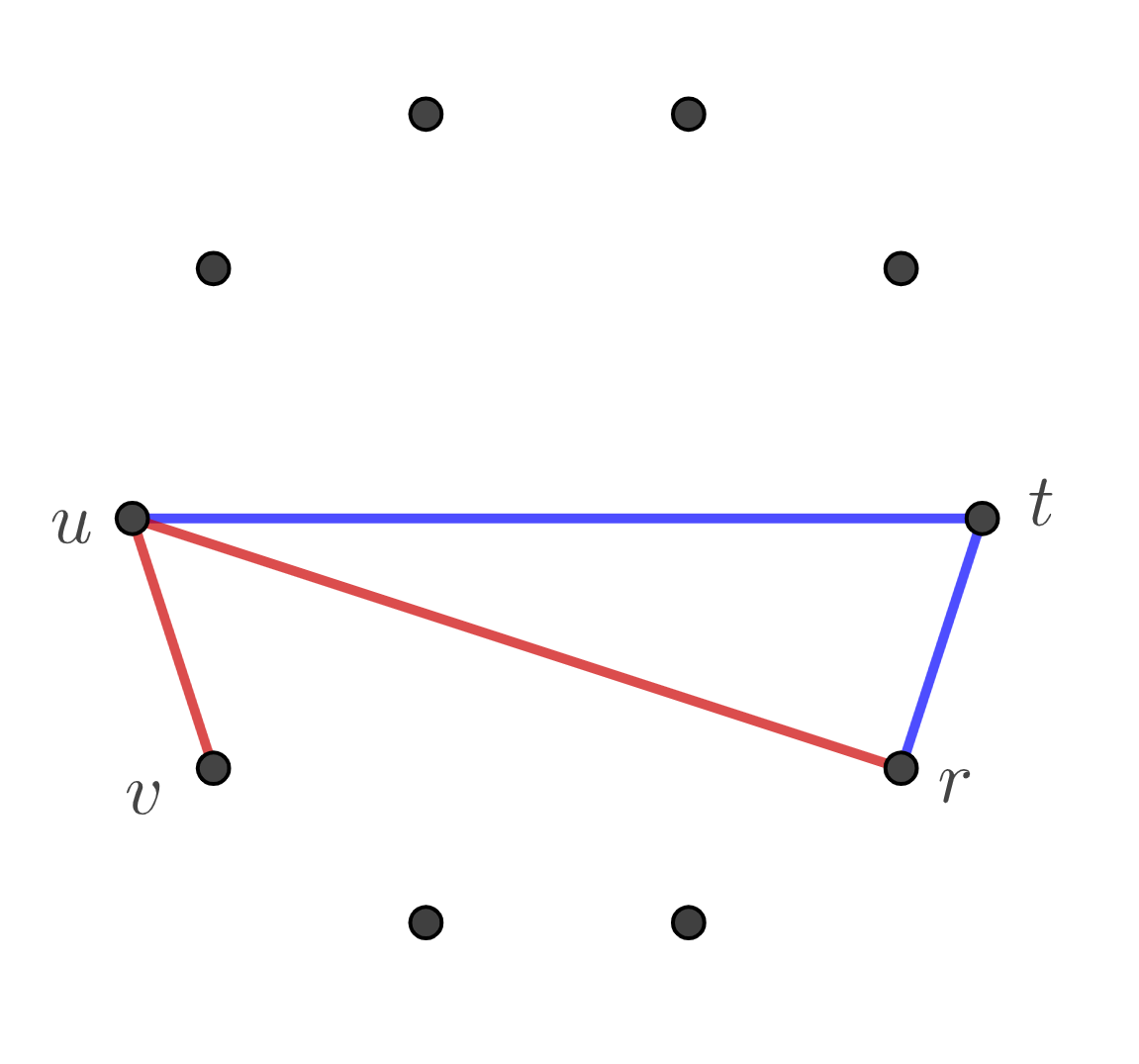}
    \caption{}
    \label{f12sfig2}
    \end{subfigure}
    \caption{The graph after the second move of Blue: (a) Case 1.  (b) Case 2.}
    \label{}
    \end{figure}
 Until the end of the game, Blue will star-add vertices that are not blue to the $t$-star.

\medskip
 \noindent{\bf Case 2.} Vertex $x$ is red and $y$ is blue,  w.l.o.g. $x=u$, $y=r$.

The following move of Blue is the edge $ut$, see Figure \ref{f12sfig2}.
 Until the end of the game, Blue will star-add vertices that are not blue to the $t$-star.

Now we will prove that Blue can follow his strategy and thus wins the game. In both cases it is evident that Blue can play first two moves and after that the graph of the game consists of two stars on three vertices (one red $u$-star and one blue $t$-star), and isolated vertices.

Note that Red cannot claim any edge incident with leaves of the $u$-star, not even close a red triangle, because his graph has to stay connected during the game, and if he claims a cycle he will inevitably lose the game. Therefore, only allowed moves for him are to star-add more vertices to the $u$-star. It is clear that Red cannot colour the vertex $t$ in red, hence Blue can claim each of $n-1$ edges of $t$-star. At the end of the game Red will claim at most $n-2$ edges, and then he has to make a $P_4$ in his following move. Therefore, Blue wins the game in $(n-1)$ rounds.
\hfill \qedsymbol

\vspace{0.2cm}

We proved Theorem \ref{TH003} giving an explicit strategy for Blue. It is straightforward to check that we can use a similar argument of strategy stealing as in Case 1 and Case 2 in the proof of Theorem \ref{TH001}, assuming that the graphs of both players stay connected throughout the game.

\bigskip

{\bf Proof of Theorem \ref{TH005}:}
As each player maintains his graph connected, it has to be a tree. Therefore, the game can last for at most $n-1$ rounds. If after the $(n-1)$-st round Blue's graph is a tree, than Red will lose in his following move.

We will show that Blue has a winning strategy. In the beginning, we have a graph $G$ with $n$ isolated vertices. After Red claims an edge, let us denote it by $uv$, Blue claims an edge that is not adjacent to the red one, let us denote it by $rt$. In the following move Red has two options, up to isomorphism, for choosing an edge $e=xy$, and those two moves will make our two cases.

\medskip
 {\bf Case 1.} Vertex $x$ is red and $y$ is blue, w.l.o.g. $x=u$ and $y=t$.\\
 We apply the strategy stealing argument, assuming that after his second move Red has a winning strategy $S$.

The graph of the game consists of one $P_4$, with 2 adjacent red edges and one blue edge, and isolated vertices, see Figure \ref{f13sfig1}. Before his next move, Blue imagines that he has already claimed the edge $vt$ and that Red has not claimed the edge $ut$, see Figure \ref{f13sfig2}.
Note that the edge $vt$ will remain free throughout the game, as otherwise Red would claim a triangle.

  \begin{figure}[ht]
    \begin{subfigure}{.5\textwidth}
    \centering
     \includegraphics[width=0.6\linewidth]{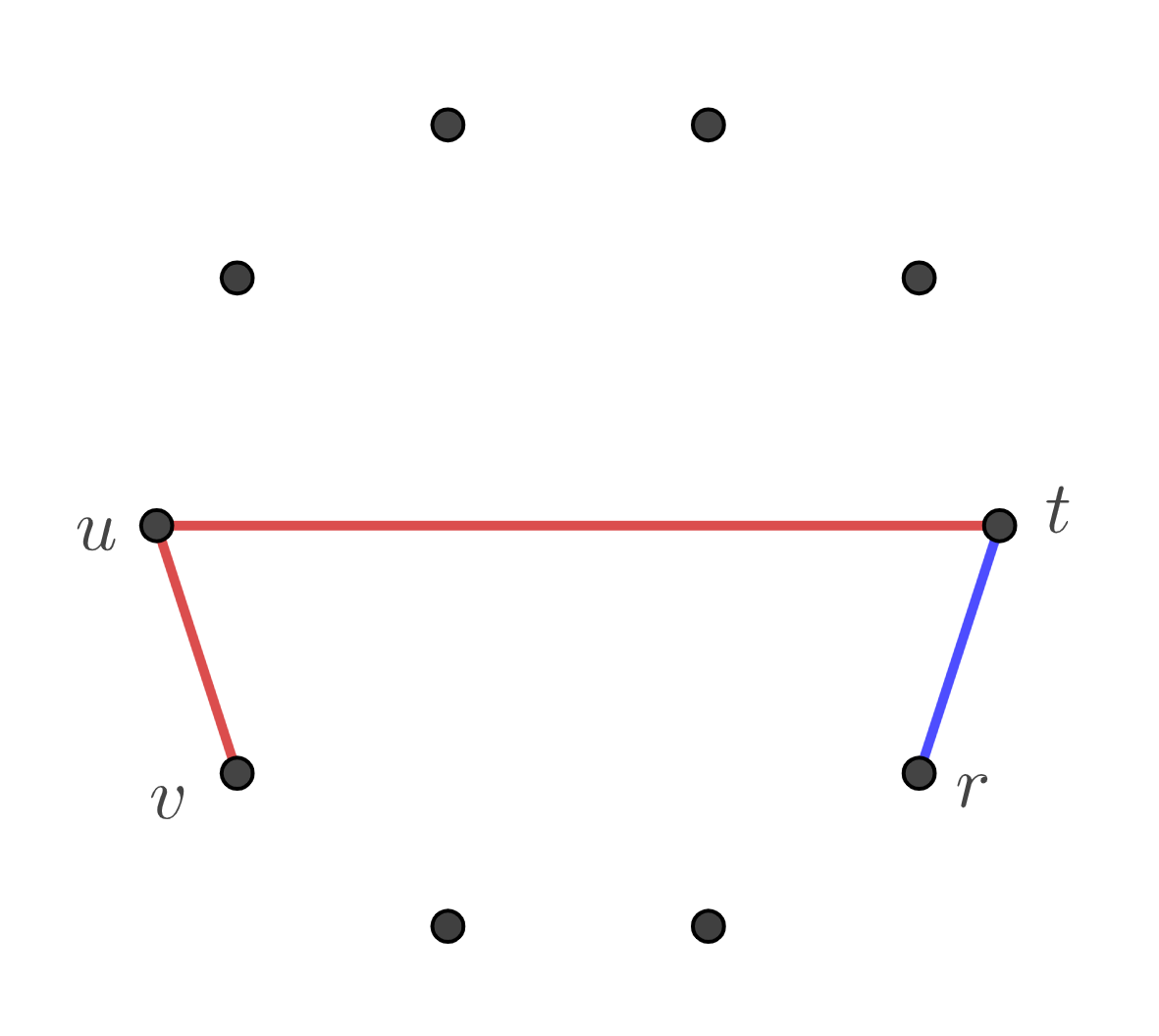}
     \caption{}
     \label{f13sfig1}
    \end{subfigure}%
    \begin{subfigure}{.5\textwidth}
     \centering
     \includegraphics[width=0.6\linewidth]{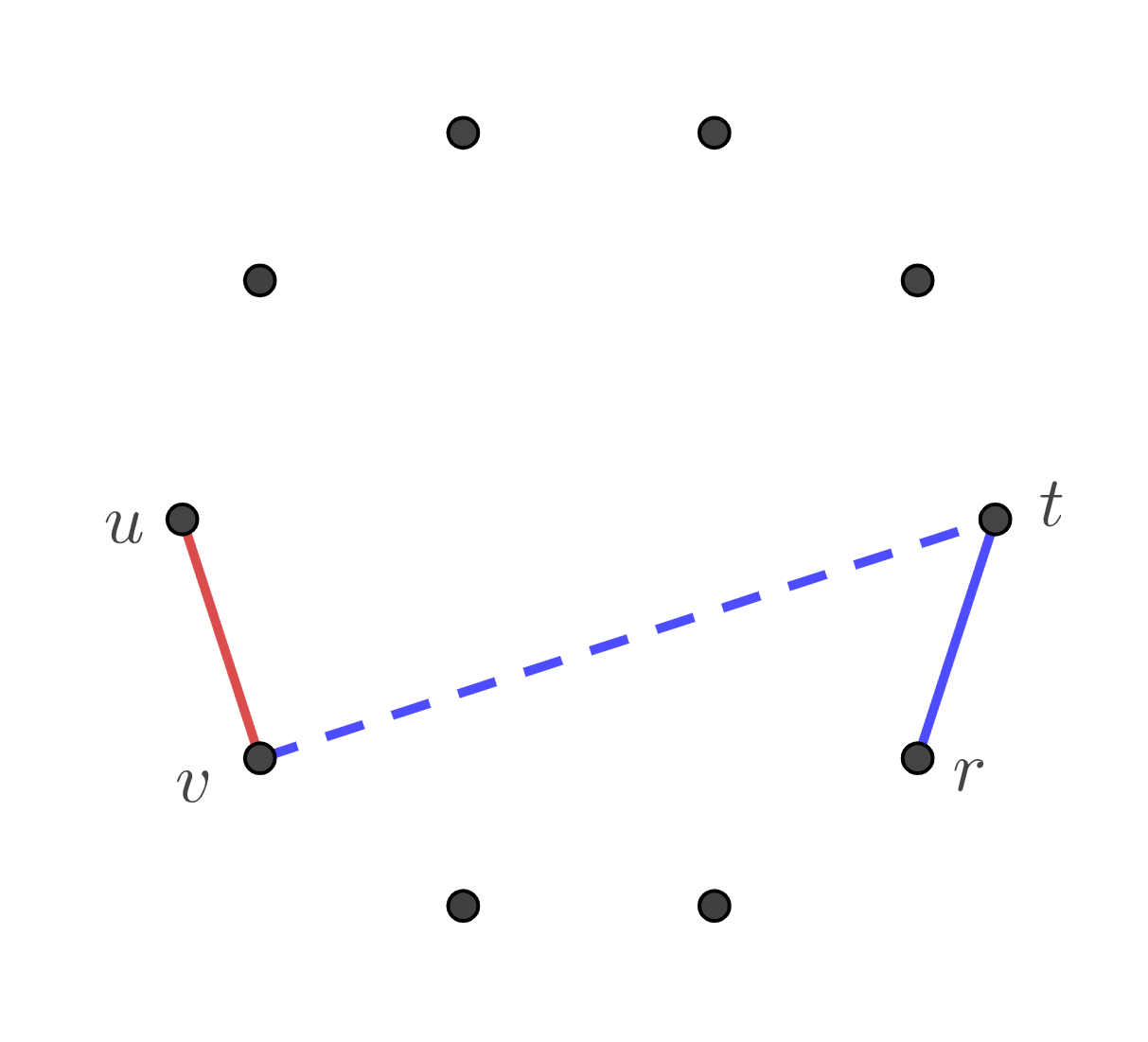}
    \caption{}
    \label{f13sfig2}
    \end{subfigure}
    \captionsetup{justification=centering}
    \caption{Case 1: (a) the graph before the second move of Blue. (b) The imagined graph before the second move of Red.}
    \label{f13}
    \end{figure}

 The imagined graph is isomorphic to the graph where the roles of players are swapped. Blue imagines that he is the first player and that Red claims the edge $ut$ as his second move. From now on, Blue responds as advised by the winning strategy $S$, and wins the game, a contradiction.

\medskip
 \noindent{\bf Case 2.} Vertex $x$ is red and $y$ is black, w.l.o.g. $x=u$.

 After Red claims the edge $uy$ it is Blue's turn, so he claims the edge $tu$, see Figure \ref{f9sfig1}. Then, no matter what Red plays, Blue claims an edge incident with $t$ and one black vertex. Then repeats that move for as long as possible. When there are no more black vertices, Blue will claim the edge incident with the pure red vertex of maximum degree and a blue vertex. He will continue doing that until all vertices in the graph are blue.

 Obviously, if Blue can follow his strategy, his graph will be a tree on $n-1$ edges and he will win. It remains to show that he can follow his strategy.

 While there is at least one black vertex, it is evident that Blue can make it adjacent to $t$ by claiming the edge incident to a black vertex and $t$.
 When there are no more black vertices in the graph, we will denote the pure red vertex of maximum degree by $m$, and by $i$ the number of edges in the Blue's graph. Note that at this moment Blue's graph consists of a $t$-star with $i \geq \Big\lfloor\frac{n-5}{2}\Big\rfloor+2$ edges, and isolated vertices.

 Assume for a contradiction that there is no free edge between $m$ and any blue vertex. Note that $m \neq u$ because $u$ is blue. That means that $m$ is adjacent to each of the blue vertices in the Red's graph. Therefore, Red must have a star with $i+1$ edges in his graph and at least one more edge from the beginning of the game. Hence, Red would have two edges more than Blue, a contradiction.
\hfill \qedsymbol

\end{document}